\newcommand{\tuple}[1]{\langle #1 \rangle}
\newcommand{\nat}{\mathbb{N}}
\newcommand{\bool}{\{0,1\}}
\newcommand{\zip}{\textrm{zip}}
\newcommand{\odd}{\textit{odd}}
\newcommand{\even}{\textit{even}}
\newcommand{\floor}[1]{\left \lfloor #1 \right \rfloor }
\newcommand{\ceil}[1]{\left \lceil #1 \right \rceil }
\newcommand{\snd}{\textrm{2nd}}
\newcommand{\trd}{\textrm{3rd}}
\newcommand{\sort}{\textit{sort}}
\newcommand{\pref}{\textrm{pref}}
\newcommand{\suff}{\textrm{suff}}
\newcommand{\remove}{\textrm{drop}}
\newtheorem{observation}{Observation}
\title{Encoding Cardinality Constraints using Generalized Selection Networks}
\titlerunning{Encoding Cardinality Constraints using Generalized Selection Networks}
\author{
Micha\l~Karpi\'nski, Marek Piotr\'ow}
\authorrunning{Micha\l~Karpi\'nski \and  Marek Piotr\'ow}
\date{}
\institute{Institute of Computer Science, University of Wroc\l aw\\
  Joliot-Curie 15, 50-383 Wroc\l aw, Poland\\
\email{\{karp,mpi\}@cs.uni.wroc.pl}}
\begin{document}

\pagestyle{headings}

\maketitle

\begin{abstract}
  \noindent Boolean cardinality constraints state that at most (at least, or exactly) $k$ out
  of $n$ propositional literals can be true. We propose a new class of selection
  networks that can be used for an efficient encoding of them. Several comparator
  networks have been proposed recently for encoding cardinality constraints and experiments have
  proved their efficiency. Those were based mainly on the odd-even or pairwise
  comparator networks. We use similar ideas, but we extend the model of comparator networks so that
  the basic components are not only comparators (2-sorters) but more general $m$-sorters, for $m \geq 2$.  
  The inputs are organized into $m$ columns, in which elements are recursively selected
  and, after that, columns are merged using an idea of multi-way merging.
  We present two algorithms parametrized by $m \geq 2$. We call those networks
  $m$-Wise Selection Network and $m$-Odd-Even Selection Network.
  We give detailed construction of the mergers when $m=4$.
  The construction can be directly applied to any values of $k$ and $n$. The proposed
  encoding of sorters is standard, therefore the arc-consistency is preserved. We
  prove correctness of the constructions and present the theoretical and
  experimental evaluation, which show that the new encodings are competitive
  to the other state-of-art encodings.
\end{abstract}

\section{Introduction}

Several hard decision problems can be efficiently reduced to Boolean satisfiability (SAT)
problem and tried to be solved by recently-developed SAT-solvers. Some of them are
formulated with the help of different high-level constraints, which should be either
encoded into CNF formulas or solved inside a SAT-solver by a specialized extension. There was
much research on both of these approaches.

In this paper we consider encodings of Boolean cardinality constraints that take the form
$x_1 + x_2 + \dots + x_n \sim k$, where  $x_1, x_2, \dots, x_n$ are Boolean literals (that
is, variables or their negations) and $\sim$ is a relation from the set $\{<, \le, = , \ge
, >\}$. Such cardinality constraints appear naturally in formulations of different
real-world problems including cumulative scheduling \cite{sfsw09}, timetabling
\cite{aan12} or formal hardware verification \cite{bccz99}.

In a direct encoding of a cardinality constraint $x_1 + x_2 + \dots + x_n < k$ one can take
all subsets of $X = \{x_1, \dots, x_n\}$ of size $k$ and for each of them construct a CNF
formula that states that at least one of the literals in the subset must be false. The
direct encoding is quite efficient for very small values of $k$ and $n$, but for larger
parameters another approach should be used.

\vspace{0.1cm}

\noindent {\bf Related work:} In the last years several selection networks
were proposed for the encoding cardinality constraints and experiments proved their efficiency. They were based
mainly on the odd-even or pairwise comparator networks. Codish and Zazon-Ivry
\cite{card2} introduced pairwise selection networks that used the concept of Parberry's
pairwise sorting network \cite{parberry}. Ab\'io, Nieuwenhuis, Oliveras and
Rodr\'iguez-Carbonell \cite{abio,card4} defined encodings that implemented the odd-even
sorting networks by Batcher \cite{batcher}. In \cite{abio} the authors proposed a mixed
parametric approach to the encodings, where the direct encoding is chosen for small
sub-problems and the splitting point is optimized when large problems are divided into two
smaller ones. They proposed to minimize the function $\lambda \cdot num\_vars +
mun\_clauses$ in the encodings. The constructed encodings are small and efficient,
but non-trivial extra time is needed to find the optimal splitting points.

\vspace{0.1cm}

\noindent {\bf Our contribution:} Partially inspired by this
optimization criterion we started looking for selection
networks that can be encoded with a smaller number of auxiliary variables and, if possible,
not much larger number of clauses, and where there is no need for the costly optimization.
In addition, we investigate the influence of our encodings on the execution times of SAT-solvers
to be sure that the new algorithms can be used in practice.
The obtained constructions are presented in this paper. The main idea is to split the
problem into $m$ sub-problems, recursively select elements in them and then
merge the selected subsequences using an idea of multi-way merging \cite{leeb,china}. In such
approach we can use not only comparators (2-sorters) but also $m$-sorters,
which can be encoded directly.

Using this generalized version of comparators we have
created two novel classes of networks parametrized by $m \geq 2$. First one is called
{\bf $m$-Wise Selection Network}, in which we use the idea of pairwise sorting \cite{parberry}
combined with a multi-way merging algorithm by Gao, Liu and Zhao \cite{china},
but in more general way. The second class is called {\bf $m$-Odd-Even Selection Networks},
where the multi-way merge sorting networks by Batcher and Lee \cite{leeb} are generalized
in a way that we can recursively select $k$ largest elements from each of the $m$ sub-problems.
It should be also pointed out that none of the algorithms presented in this paper is using
or is even based on the construction presented in the CP'15 paper of Karpi\'nski and Piotr\'ow \cite{kp15}.
The reason is that there the size of an input sequence is required to be a power of 2.

Both algorithms are presented using divide and conquer paradigm. The key to achieve
efficient algorithms lie in the construction of networks that combines the
results obtained from the recursive calls. The construction of those {\em mergers}
is the main result of this paper. We fix $m=4$ and give detailed constructions
for two merging networks: $4$-Wise Merging Network and $4$-Odd-Even Merging Network.
We prove their correctness and compare the numbers of variables and
clauses of the corresponding encodings to their two counterparts:
Pairwise Merging Networks and Odd-Even Merging Networks \cite{card2}. The
calculations show that encodings based on our merging networks uses less number of
variables and clauses, when $k<n$. In case of $4$-Odd-Even Merging Network this
also translates into smaller encodings in terms of variables and clauses for the
$4$-Odd-Even Selection Network in comparison to Odd-Even Selection Network,
for sufficiently large $k$ and $n$.

The constructions are parametrized by
any values of $k$ and $n$, so they can be further optimized by mixing them with other
constructions. For example, in our experiments we mixed them with the direct encoding for
small values of parameters. We used the standard encoding of $m$-sorters, therefore the
arc consistency is preserved. Finally, we present results of our experiments, which is
another main contribution of this paper. We show that generalized selection networks
are superior to standard selection networks previously proposed in the literature, in context
of translating cardinality constraints into propositional formulas. We also conclude that
although encodings based on pairwise approach use less number of variables than odd-even encodings,
in practice, it is the latter that achieve better reduction in SAT-solving runtime.
It is a helpful observation, because from the practical point of view, implementing odd-even
networks is a little bit easier.

We also empirically compare our encodings with other state-of-art encodings, not only based on
comparator networks, but also on binary adders and binary decision diagrams. Those are mainly
used in encodings of Pseudo-Boolean constraints, but it is informative to see how well they
perform when encoding cardinality constraints.

\vspace{0.1cm}

\noindent {\bf Structure of the paper:} The rest of the paper is organized
as follows: Section \ref{sec:pre} contains definitions and
notations used in the paper. In Section \ref{sec:sel} the construction of the generalized
selection networks and the proofs of correctness are given. Section \ref{sec:merge} is
devoted to specialized merging networks for $m=4$ and contains detailed proofs of correctness.
In Section \ref{sec:comp} we compare the encodings
produced by the construction with Pairwise Cardinality Network and Odd-Even Selection Network, in
terms of number of variables and clauses each of them uses. Our
experimental evaluation is presented in Section \ref{sec:exp} followed by Conclusions.

\section{Preliminaries} \label{sec:pre}

In this section we introduce definitions and notations used in the rest of the paper. Let
$X$ denote a totally ordered set, for example the set of natural numbers $\nat$ or the set
of binary values $\{0, 1\}$. We introduce the auxiliary "smallest" element
$\bot \notin X$, such that for all $x \in X$ we have $\bot < x$.
Thus, $X \cup \{\bot\}$ is totally ordered. The element $\bot$ is used in the next section
to simplify presentation of algorithms.

  \begin{definition}[sequences]
  A sequence of length $n$, say $\bar{x} = \tuple{x_1, \dots, x_n}$, is an element of
  $X^n$. In particular, an element of $\{0, 1\}^n$ is called a {\em binary} sequence. We
  say that a sequence $\bar{x} \in X^n$ is {\em sorted} if $x_i \geq x_{i+1}$, $1 \le i <
  n$. Given two sequences $\bar{x} = \tuple{x_1, \dots, x_n}$ and $\bar{y} = \tuple{y_1,
  \dots, y_m}$ we define {\em concatenation} as $\bar{x} :: \bar{y} = \tuple{x_1, \dots,
  x_n, y_1, \dots, y_m}$. We use also the following subsequence notation: $\bar{x}_{\odd}
  = \tuple{x_1, x_3, \dots}$, $\bar{x}_{\even} = \tuple{x_2, x_4, \dots}$,
  $\bar{x}_{a,\dots,b} = \tuple{x_a, \dots, x_b}$, $1 \le a \leq b \le n$, and the {\em
  prefix/suffix} operators: $pref(i,\bar{x}) ~=~ \bar{x}_{1,\dots,i}$ and $suff(i,\bar{x}) ~=~
  \bar{x}_{i,\dots,n}$, $1 \leq i \leq n$. The number of a given value $b$ in $\bar{x}$ is
  denoted by $|\bar{x}|_b$ and the result of removing all occurrences of $b$ in $\bar{x}$ 
  is written as $\remove(b,\bar{x})$.

  Let $n,m \in \nat$. We define a {\em domination} relation '$\succeq$' on $X^n \times
  X^m$. Let $\bar{x} \in X^n$ and $\bar{y} \in X^m$, then: $\bar{x} \succeq \bar{y} \iff
  \forall_{i \in \{1,\dots,n\}} \forall_{j \in \{1,\dots,m\}} \; x_i \geq y_j$
  \end{definition}

  \begin{definition}[zip operator]
  For $m \ge 1$ given sequences (column vectors) $\bar{x}^i = \tuple{x^i_1, x^i_2, \dots,
  x^i_{n_i}}$, $1 \le i \le m$ and $n_1 \ge n_2 \ge \dots \ge n_m$, let us define the
  $\zip$ operation that outputs the elements of the vectors in row-major order:
    \[
    \zip(\bar{x}^1, \dots, \bar{x}^m) = \begin{cases}
    \bar{x}^1 & \mbox{if~} m = 1 \\ 
    \zip(\bar{x}^1, \dots, \bar{x}^{m-1}) & \mbox{if~} |\bar{x}^m| = 0 \\ 
    \tuple{x^1_1, x^2_1, \dots, x^m_1} ::
    \zip(\bar{x}^1_{2,\dots, n_1}, \dots, \bar{x}^m_{2, \dots, n_m}) & \mbox{otherwise}
    \end{cases}
  \]
  \end{definition}
  
  We construct and use comparator networks in this paper. The standard definitions and
  properties of them can be found, for example, in \cite{knuth}. The only difference is
  that we assume that the output of any sorting operation or comparator is in a
  non-increasing order.
  
  \begin{definition}[m-sorter]\label{def:msort}
    A comparator network $f^m$ is a {\em sorting network} (or {\em m-sorter}), if for each
    $\bar{x} \in X^m$, $f^m(\bar{x})$ is sorted.
  \end{definition}

  In the pairwise sorting network, the first step is to split the input into two equally
  sized sequences $\bar{x}$ and $\bar{y}$, such that for any $i$, $x_i \geq y_i$.
  Then $\bar{x}$ and $\bar{y}$ are sorted independently and finally merged. Even after sorting, the property
  $x_i \geq y_i$ is maintained (see \cite{parberry}). For example,
  the pair of sequences $\tuple{110,100}$ are pairwise sorted. We would like to extend this notion
  to cover larger number of sequences, let's say $m \geq 2$. For example, the tuple
  $\tuple{111,110,100,000}$ is 4-wise (sorted). The following definition captures this idea.
  
\begin{definition}[m-wise sequences]\label{def:mws}
  Let $c, k, m \in \nat$, $1 \le k$ and $k/m \le c$. Moreover, let $k_i=\min(c,
  \floor{k/i})$ and $\bar{x}^i \in X^{k_i}$, $1 \leq i \leq m$. The tuple
  $\tuple{\bar{x}^1, \dots, \bar{x}^m}$ is $m$-wise of order $(c,k)$ if:
  \begin{enumerate}
    \item $\forall_{1 \leq i \leq m}$ $\bar{x}^i$ is sorted,
    \item $\forall_{1 \leq i \leq m-1}$ $\forall_{1 \leq j \leq n_{i+1}}$ $x^i_j \geq x^{i+1}_j$.
  \end{enumerate}
  \end{definition}
  
  \begin{observation} \label{obs:mws}
    Let a tuple $\tuple{\bar{x}^1, \dots, \bar{x}^m}$ be $m$-wise of order 
    $(c,k)$. Then:
    \begin{enumerate}
      \item $|\bar{x}^1| \ge |\bar{x}^2| \ge  \dots \ge |\bar{x}^m| = \floor{k/m}$,
      \item $\sum_{i=1}^{m} |\bar{x}^i| \ge k$,
      \item if $|\bar{x}^{i-1}| > |\bar{x}^i| + 1$ then $|\bar{x}^i| = 
      \floor{k/i}$. 
    \end{enumerate}
  \end{observation}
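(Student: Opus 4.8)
The plan is to observe first that the statement depends only on the column sizes $|\bar{x}^i| = k_i = \min(c,\floor{k/i})$ fixed by Definition~\ref{def:mws}, and not on the sortedness or domination conditions. Thus all three items reduce to elementary facts about the integer sequence $k_i = \min(c,\floor{k/i})$, and the only structural input that will matter is the hypothesis $k/m \le c$, which governs when the cap $c$ is active.

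Items 1 and 3 I expect to be routine. For item 1, $\floor{k/i}$ is non-increasing in $i$ and $\min(c,\cdot)$ preserves this, so $k_1 \ge \dots \ge k_m$; the final equality $|\bar{x}^m| = \floor{k/m}$ holds because $k/m \le c$ forces $\floor{k/m} \le k/m \le c$, leaving the cap inactive in the last column. For item 3 I would argue by contradiction: if $k_i \neq \floor{k/i}$, then $\floor{k/i} > c$ and hence $k_i = c$; since $\floor{k/(i-1)} \ge \floor{k/i} > c$ we also get $k_{i-1} = c = k_i$, contradicting $k_{i-1} > k_i + 1$. So the cap is inactive and $k_i = \floor{k/i}$.

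Item 2 is the step requiring genuine care, and I expect it to be the main obstacle: the cap can pull the dominant term $k_1 = \min(c,k)$ well below $k$, so the shorter columns must compensate, and this is exactly where $k/m \le c$ enters. I would write $k = qm + r$ with $q = \floor{k/m}$ and $0 \le r < m$. When $r = 0$ the hypothesis gives $c \ge q$ and each $\floor{k/i} \ge q$, so every $k_i = q$ and the sum equals $qm = k$. When $r > 0$, $k/m$ is not an integer, so $c \ge k/m > q$ forces $c \ge q+1$; a short computation shows $k \ge i(q+1)$ for $i \le r$ (equivalently $rq + r \le qm + r$, i.e.\ $r \le m$), hence $\floor{k/i} \ge q+1$ and therefore $k_i \ge q+1$ for those indices, while $k_i \ge q$ for the remaining columns. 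Summing yields $\sum_{i=1}^m k_i \ge r(q+1) + (m-r)q = qm + r = k$, as required. The delicate point throughout is tracking precisely when the cap is active, which is controlled exactly by the boundary relation $k/m \le c$.
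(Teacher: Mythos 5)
Your proof is correct, and for item 2 it takes a genuinely different route from the paper's. Items 1 and 3 match the paper essentially verbatim (item 1 is dismissed there as obvious; item 3 is the same contradiction via the cap). For item 2 the paper does not use the division algorithm: it splits on whether all $k_i$ equal $c$ (in which case $\sum k_i = mc \ge k$ directly from $k/m \le c$), and otherwise truncates the sum at the first index $i$ with $k_i \ne c$, getting $\sum_{j=1}^{m} k_j \ge (i-1)(k_i+1) + k_i = i\floor{k/i} + i - 1 \ge k$ from the standard bound $i\floor{k/i} \ge k - i + 1$. Your version instead writes $k = qm + r$ and lower-bounds every column by $q$ or $q+1$ according to whether $i \le r$. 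The paper's argument is shorter and avoids the case split on $r$, but it is less transparent about \emph{why} the shorter columns compensate for the cap; yours makes the mechanism explicit (the boundary relation $k/m \le c$ upgrading to $c \ge q+1$ when $r>0$) at the cost of a little more bookkeeping. One small slip to fix: in your $r=0$ case, $c \ge q$ and $\floor{k/i} \ge q$ only give $k_i \ge q$, not $k_i = q$ (e.g.\ $m=2$, $k=4$, $c=10$ gives $k_1=4$); the inequality $\sum k_i \ge qm = k$ is all you need and still follows, so the conclusion stands, but the equality claim as written is false.
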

  
  \begin{proof}
    The first statement is obvious. To prove the second one let $k_i = |\bar{x}^i|$, $1
    \le i \le m$. Then, if $k_i=c$ for each $1 \leq i \leq m$, we have $\sum_{i=1}^{m} k_i
    = m c \geq m k/m = k$ and we are done. Let $1 \leq i \leq m$ be the first index such
    that $k_i\neq c$, therefore $k_i=\floor{k/i} < c$. Thus, for each $1 \leq j < i$: $k_j
    = c > k_i$, therefore $k_j \geq k_i + 1$. From this we get that $\sum_{j=1}^{m} k_j
    \geq \sum_{j=1}^{i} k_j \geq i\floor{k/i} + i - 1 \geq k$.

    The third one can be easily proved by contradiction. Assume that $k_i = c$, then from
    the first property $k_{i-1} \geq k_i = c$. Therefore, $k_{i-1} = \min(c,
    \floor{k/(i-1)}) \geq c$, so $k_{i-1}=c$, a contradiction.
  \end{proof}
  
  \begin{definition}[top $k$ sorted sequence]
    A sequence $\bar{x} \in X^n$ is top $k$ sorted, with $k \leq n$, if 
    $\tuple{x_1,\dots,x_k}$ is sorted and
    $\tuple{x_1,\dots,x_k} \succeq \tuple{x_{k+1},\dots,x_n}$.
  \end{definition}
  
  \begin{definition}[m-wise merger]\label{def:mwm}
    A comparator network $f^s_{(c,k)}$ is a {\em m-wise merger of order $(c,k)$},
    if for each $m$-wise tuple $T = \tuple{\bar{x}^1, \dots, \bar{x}^m}$ of order
    $(c,k)$, such that $s = \sum_{i=1}^{m} |\bar{x}^i|$, $f^s_{(c,k)}(T)$ is top
    $k$ sorted.
  \end{definition}

  \begin{definition}[m-odd-even merger]\label{def:oem}
    A comparator network $f^s_{k}$ is a {\em m-odd-even merger of order $k$},
    if for each tuple $T = \tuple{\bar{x}^1, \dots, \bar{x}^m}$,
    where each $\bar{x}^i$ is top $k$ sorted and $s = \sum_{i=1}^{m} |\bar{x}^i|$,
    $f^s_k(T)$ is top $k$ sorted.
  \end{definition}
  
  \begin{definition}[selection network]
    A comparator network $f^n_k$ (where $k \leq n$) is a {\em $k$-selection network},
    if for each $\bar{x} \in X^n$, $f^n_k(\bar{x})$ is top $k$ sorted.
  \end{definition}

  A clause is a disjunction of literals (Boolean variables $x$ or their negation $\neg
  x$). A CNF formula is a conjunction of one or more clauses. Cardinality constraints are
  of the form $x_1 + \dots + x_n \sim k$, where $k \in \nat$ and $\sim$ belongs to
  $\{<,\leq,=,\geq,>\}$. We will focus on cardinality constraints with less-than relation,
  i.e. $x_1 + \dots + x_n < k$. The other can be easily translated to such form (see \cite{card4}).
  
  In \cite{abio,card4,card2,card3} authors are using sorting networks for an encoding of
  cardinality constraints, where inputs and outputs of a comparator are Boolean variables
  and comparators are encoded as a CNF formula. In addition, the $k$-th greatest output
  variable $y_k$ of the network is forced to be 0 by adding $\neg y_k$ as a clause to the
  formula that encodes $x_1 + \dots + x_n < k$. We use similar approach, but rather than
  using simple comparators (2-sorters), we also use comparators of higher order
  as building blocks. The $m$-sorter can be encoded as follows:
  for input variables $\bar{x}$ and output variables $\bar{y}$, we add the set of clauses
  $\{x_{i_1} \wedge \dots \wedge x_{i_p} \Rightarrow y_p \, : \, 1 \leq p \leq m, 1 \leq
  i_1 < \dots < i_p \leq m\}$. Therefore, we need $m$ auxiliary variables and $2^m-1$
  clauses.
  
\section{Generalized Selection Networks} \label{sec:sel}

Here we present two novel constructions for selection networks which uses sorters
of arbitrary size as components. We want to apply our algorithms for CNF encoding,
therefore the only non-trivial operation that we are allowed
to use in generalized comparator networks is $sort^m : X^m \rightarrow X^m$,
which is a $m$-sorter. Here (and in the rest of the paper)
we use it as a black box, but keep in mind that in the actual implementation one should encode
the $m$-sorter using the standard procedure explained at the end of Section \ref{sec:pre}.
We would also like to note that only first $k$ output variables for each selection network is
of interest, but to stay consistent with the definitions of Section \ref{sec:pre}, we
write our algorithms so that the size of the output is the same as the size of the input. To this
end we introduce the variable $\overline{out}$ in which we store all throw-away variables in arbitrary order.
The sequence $\overline{out}$ is then appended to the output of the algorithms.

\subsection{m-Wise Selection Network}

We begin with the algorithm for constructing the $m$-Wise Selection Network
(Network \ref{net:mw_sel}) and we prove that it is correct. In Network \ref{net:mw_sel} we
use $mw\_merge^{s}_{(c,k)}$, that is, a $m$-Wise Merger of order $(c,k)$, as a block box.
We give detailed constructions of $m$-Wise Merger for $m=4$ in the next section.

The idea we use is the generalization of the one used in Pairwise Cardinality Network from
\cite{card2} which is based on the Pairwise Sorting Network by Parberry \cite{parberry}.
First, we split the input sequence into $m$ columns of non-increasing sizes (lines 2--5)
and we sort rows using sorters (lines 6--8). Then we recursively run the selection
algorithm on each column (lines 9--11), where at most $\floor{k/i}$ items are selected
from the $i$-th column. In obtained outputs, selected items are sorted and form prefixes of the columns. The
prefixes are padded with zeroes (with $\bot$'s) in order to get the input sizes required by the $m$-wise
property (Definition \ref{def:mws}) and, finally, they are passed to the merging procedure
(line 12--13). The base case, when $k=1$, is handled by the auxiliary network $max^n$, which
outputs the maximum of $n$ elements. The standard construction of $max^n$ uses $n-1$ 2-sorters.

\begin{example}\label{ex:mw_sel}
  In Figure \ref{fig:mw_sel_example} we present a sample run of Network \ref{net:mw_sel}. The input is
  a sequence $1111010010000010000101$, with parameters $\tuple{n_1,n_2,n_3,n_4,k}=\tuple{8,7,4,3,6}$.
  First step (Figure \ref{subfig:mw:a}) is to arrange the input in columns (lines 2--5). In this example
  we get $\bar{x}^1=\tuple{11110100}$, $\bar{x}^2=\tuple{1000001}$, $\bar{x}^3=\tuple{0000}$
  and $\bar{x}^4=\tuple{101}$. Next we sort rows using $1,2,3$ or $4$-sorters (lines 6--8), the result is
  visible in Figure \ref{subfig:mw:b}. We make recursive calls in lines 9--11 of the algorithm. Items selected
  recursively in this step are marked on Figure \ref{subfig:mw:c}. Notice that in $i$-th column we only need
  to select $\floor{k/i}$ largest elements. This is because of the initial sorting of the rows. Next comes
  the merging step, which is selecting $k$ largest elements from the results of the previous step. How
  exactly those elements are obtained and outputted depends on the implementation. We choose the convention that the
  resulting $k$ elements must be placed in the major-row order in our column representation of the input
  (see Figure \ref{subfig:mw:d}).
\end{example}

\begin{figure}
  \centering
  \subfloat[split into columns\label{subfig:mw:a}]{\includegraphics[width=0.22\textwidth]{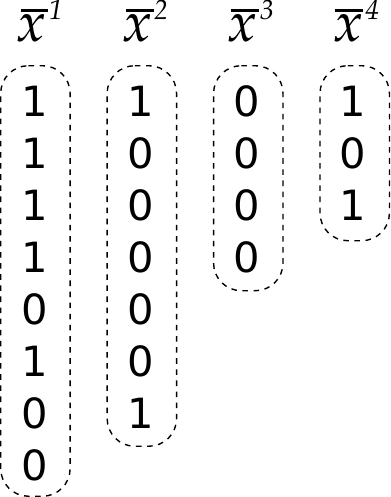}}~~
  \subfloat[sort rows\label{subfig:mw:b}]{\includegraphics[width=0.22\textwidth]{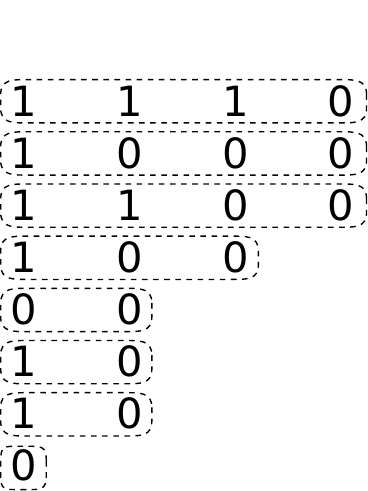}}~~
  \subfloat[select recursively\label{subfig:mw:c}]{\includegraphics[width=0.22\textwidth]{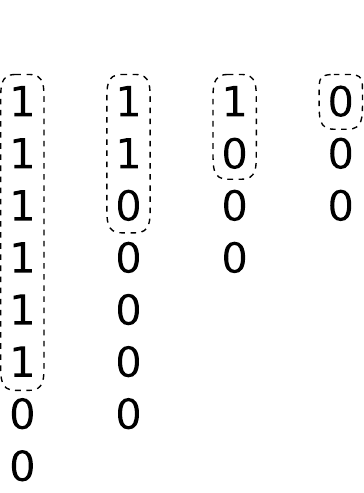}}~~
  \subfloat[merge columns\label{subfig:mw:d}]{\includegraphics[width=0.22\textwidth]{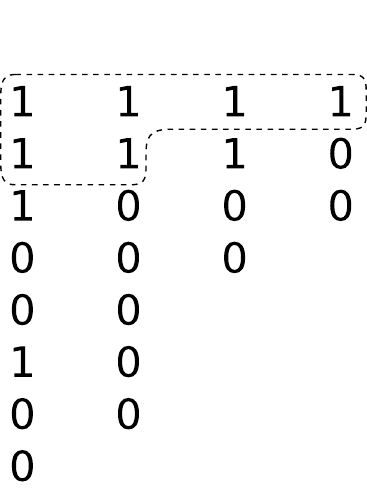}}
  \caption{Sample run of the 4-Wise Slection Network.}
  \label{fig:mw_sel_example}
\end{figure}

  \begin{algorithm}
    \caption{$mw\_sel^n_k$}\label{net:mw_sel}
    \begin{algorithmic}[1]
      \Require {$\bar{x} \in X^n$; $n_1, \dots , n_m \in \nat$ 
        where $n > n_1 \geq \dots \geq n_m$ and $\sum n_i = n$; $1 \le k \le n$}
      \If {$k = 1$}
        \Return $max^n(\bar{x})$
      \EndIf 
      \State $offset = 1$
      \ForAll {$i \in \{1,\dots,m\}$} \Comment{Splitting the input into columns.}
        \State $\bar{x}^i \gets \tuple{x_{offset},\dots,x_{offset + n_i - 1}}$
        \State $offset += n_i$
      \EndFor
      \ForAll {$i \in \{1,\dots,n_1\}$} \Comment{Sorting rows.}
        \State $m' = \max\{j : n_j \geq i\}$
        \State $\bar{y}^i \gets \sort^{m'}(\tuple{x^1_{i},\dots,x^{m'}_{i}})$
      \EndFor
      \ForAll {$i \in \{1,\dots,m\}$} \Comment{Recursively selecting items in columns.}
        \State $k_i=\min(n_1,\floor{k/i})$; ~~$l_i=\min(n_i,\floor{k/i})$
        \Comment{$k_i \ge l_i$}
        \State $\bar{z}^i \gets mw\_sel^{n_i}_{l_i}(\bar{y}^1_i,\dots,\bar{y}^{n_i}_i)$
      \EndFor
      \State $s = \sum_{i=1}^{m} k_i$; ~~$c = k_1$;
        ~~$\overline{out} = \suff(l_1+1,\bar{z}^1) :: \dots :: \suff(l_m+1,\bar{z}^m)$
      \State $\overline{res} \gets mw\_merge^{s}_{(c,k)}(\tuple{
      pref(l_1,\bar{z}^1)::\bot^{k_1-l_1}, \dots, pref(l_m,\bar{z}^m)}::\bot^{k_m-l_m})$
      \State \Return $\remove(\bot,\overline{res}) :: \overline{out}$
    \end{algorithmic}
  \end{algorithm}

In the following we will prove the correctness of Network \ref{net:mw_sel} using the well-known 0-1 
Principle, that is, we will consider binary sequences only.

  \begin{lemma}\label{lma:mw_num}
    Let $\bar{t}^i = \tuple{\bar{y}^1_i,\dots,\bar{y}^{n_i}_i}$, where $\bar{y}^j$, $j=1,
    \dots, n_1$, is the result of Step 8 in Network \ref{net:mw_sel}. For each $1 \leq
    i < m$: $|\bar{t}^i|_1 \geq |\bar{t}^{i+1}|_1$.
  \end{lemma}

  \begin{proof}
    Take any $1 \leq i < m$. Consider element $y^j_i$ (for some $1 \leq j \leq n_i$). Since $\bar{y}^j$ is sorted, we have
    $y^j_i \geq y^j_{i+1}$, therefore if $y^j_{i+1}=1$ then $y^j_i=1$. Thus $|\bar{t}^i|_1 \geq |\bar{t}^{i+1}|_1$.
  \end{proof}

  \begin{corollary}\label{crly:mw_num}
    For each $1 \leq i \leq m$, let $\bar{z}^i$ be the result of the step 11 in Network \ref{net:mw_sel}.
    Then for each $1 \leq i < m$: $|\bar{z}^i|_1 \geq |\bar{z}^{i+1}|_1$.
  \end{corollary}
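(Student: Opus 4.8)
The plan is to reduce the statement directly to Lemma~\ref{lma:mw_num} by exploiting the fundamental permutation property of comparator networks. The crucial observation is that the sequence $\bar{z}^i$ produced in Step~11 is nothing but the output of the recursive selection call $mw\_sel^{n_i}_{l_i}$ applied to the input sequence $\tuple{\bar{y}^1_i, \dots, \bar{y}^{n_i}_i}$, which is exactly the sequence $\bar{t}^i$ appearing in the statement of Lemma~\ref{lma:mw_num}. Thus $\bar{z}^i = mw\_sel^{n_i}_{l_i}(\bar{t}^i)$, and the whole corollary will follow once we know that this recursive call leaves the number of ones unchanged.

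First I would argue that any selection network, being a comparator network assembled from $m$-sorters (and, in the base case $k=1$, from the $2$-sorters constituting $max^n$), realizes a permutation of its input multiset. Consequently it neither creates nor destroys ones, so $|\bar{z}^i|_1 = |\bar{t}^i|_1$ for each $1 \le i \le m$. The only point requiring a moment's care is the $\bot$-padding used internally before the merging step: the elements inserted as $\bot^{k_i - l_i}$ in the merger call (line~13) are stripped away again by the $\remove(\bot,\cdot)$ in the final return. Since $\bot$ contributes no ones, and under the 0-1 Principle behaves simply as a value strictly below $0$, this insertion-then-removal leaves $|\cdot|_1$ invariant, and the count bookkeeping is entirely routine.

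With count invariance in hand, the conclusion is immediate: Lemma~\ref{lma:mw_num} gives $|\bar{t}^i|_1 \ge |\bar{t}^{i+1}|_1$ for each $1 \le i < m$, and chaining the two equalities yields
\[
|\bar{z}^i|_1 = |\bar{t}^i|_1 \ge |\bar{t}^{i+1}|_1 = |\bar{z}^{i+1}|_1,
\]
as required. I expect no genuine obstacle here; the entire content lies in recognizing that selection via comparator networks preserves the number of ones, after which the corollary is a one-line consequence of the preceding lemma.
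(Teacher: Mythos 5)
Your proposal is correct and follows essentially the same route as the paper's own proof: identify $\bar{z}^i$ as $mw\_sel^{n_i}_{l_i}(\bar{t}^i)$, note that a comparator network only permutes its input and hence preserves $|\cdot|_1$, and then chain this with Lemma~\ref{lma:mw_num}. The extra remark about the $\bot$-padding being stripped by $\remove(\bot,\cdot)$ is a harmless additional detail the paper leaves implicit.
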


  \begin{proof}
    For $1 \leq i \leq m$, let $\bar{t}^i$ be the same as in Lemma \ref{lma:mw_num}.
    Take any $1 \leq i < m$. Comparator network only permutate its input, therefore
    $|\bar{z}^i|_1 = |mw\_sel^{n_i}_{s_i}(\bar{t}^i)|_1 = |\bar{t}^i|_1 \geq
    |\bar{t}^{i+1}|_1 = |mw\_sel^{n_{i+1}}_{s_{i+1}}(\bar{t}^{i+1})|_1 = |\bar{z}^{i+1}|_1$.
    The inequality comes from Lemma \ref{lma:mw_num}.
  \end{proof}

  \begin{theorem}\label{thm:mw_sel}
    $mw\_sel^n_k$ is a $k$-selection network.
  \end{theorem}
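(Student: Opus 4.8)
The plan is to proceed by induction on $k$ together with the 0-1 Principle, so that throughout we may assume $\bar{x} \in \{0,1\}^n$. The base case $k=1$ is immediate: the network returns $max^n(\bar{x})$, and the maximum of $n$ binary values is exactly the top $1$ sorted output. For the inductive step, assume $k \ge 2$ and that $mw\_sel^{n'}_{k'}$ is a $k'$-selection network for all $k' < k$ (in particular for each $l_i = \min(n_i, \floor{k/i}) < k$, which holds because $i \ge 1$ and $n_i < n$). The strategy is then to verify that the inputs fed to the merger $mw\_merge^s_{(c,k)}$ in line 13 really do form an $m$-wise tuple of order $(c,k)$, since by Definition \ref{def:mwm} the merger is only guaranteed to produce a top $k$ sorted output on such tuples.

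The key steps, in order, are as follows. First I would use the inductive hypothesis on line 11: each $\bar{z}^i = mw\_sel^{n_i}_{l_i}(\bar{t}^i)$ is top $l_i$ sorted, so $\pref(l_i, \bar{z}^i)$ is sorted and dominates the discarded suffix $\suff(l_i+1, \bar{z}^i)$. Second, I would check the two conditions of Definition \ref{def:mws} for the tuple $\tuple{\pref(l_1,\bar{z}^1)::\bot^{k_1-l_1}, \dots, \pref(l_m,\bar{z}^m)::\bot^{k_m-l_m}}$: condition~1 (each padded column is sorted) follows since the prefixes are sorted and appending $\bot$'s preserves non-increasing order; condition~2 (the columnwise domination $x^i_j \ge x^{i+1}_j$) is the crux and is where the row-sorting of lines 6--8 is essential. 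Third, I would argue that the throw-away suffixes collected in $\overline{out}$ contain only elements dominated by the selected prefixes, so that the top $k$ elements of the whole input are indeed captured among the merger's inputs; this uses Corollary \ref{crly:mw_num} together with a counting argument showing $s = \sum k_i \ge k$ (Observation \ref{obs:mws}, part 2). Finally, given that the merger receives a valid $m$-wise tuple of order $(c,k)$, Definition \ref{def:mwm} yields that $mw\_merge^s_{(c,k)}$ produces a top $k$ sorted sequence, and after stripping the padding $\bot$'s and appending $\overline{out}$ the network output is top $k$ sorted, completing the induction.

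The main obstacle will be the second step, namely establishing the columnwise domination property (condition~2 of Definition \ref{def:mws}) after the recursive selection. The subtlety is that the domination $x^i_j \ge x^{i+1}_j$ holds \emph{before} the recursive calls as a consequence of sorting each row $\bar{y}^i$, but the recursive selections in line 11 permute the columns independently, so one must argue that this columnwise relation survives. In the binary setting the right way to handle this is to work with the \emph{counts} of ones rather than positions: since each $\bar{z}^i$ is sorted in its selected prefix, the prefix $\pref(l_i,\bar{z}^i)$ consists of $\min(l_i, |\bar{t}^i|_1)$ ones followed by zeros, and Corollary \ref{crly:mw_num} guarantees $|\bar{t}^i|_1 \ge |\bar{t}^{i+1}|_1$. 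From this, together with $l_i \ge l_{i+1}$, one deduces that the number of leading ones in column $i$ is at least that in column $i+1$, which for sorted binary columns is exactly the domination $x^i_j \ge x^{i+1}_j$. Care is needed at the boundary where a column's count of ones exceeds its selection budget $l_i$, so that some ones are pushed into the discarded suffix; handling this boundary correctly, and confirming that no discarded one could have belonged among the global top $k$, is the delicate part of the argument.
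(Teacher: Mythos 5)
Your overall strategy coincides with the paper's: lexicographic-style induction, verifying that the padded prefixes form an $m$-wise tuple (condition~2 via counts of ones, Corollary~\ref{crly:mw_num}, and top-sortedness of the recursive outputs), and then invoking the merger. However, two points in your plan are genuine problems rather than routine details. First, the induction measure is wrong as stated: you induct on $k$ alone and justify this by claiming $l_i = \min(n_i,\floor{k/i}) < k$, but for $i=1$ we have $l_1 = \min(n_1,k)$, which equals $k$ whenever $n_1 \ge k$. The recursion does \emph{not} always decrease $k$; the paper inducts on the pair $(n,k)$ in lexicographic order, using $n_i \le n_1 < n$ to make the first recursive call legitimate. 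You mention $n_i<n$ in passing, but the hypothesis ``for all $k'<k$'' as you state it does not cover the call $mw\_sel^{n_1}_{k}$.

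Second, and more seriously, the heart of the proof is missing. Showing that no ``discarded one could have belonged among the global top $k$'' is not a boundary nuisance to be handled with care --- it is the main counting argument, and it cannot be reduced to domination alone. In the 0-1 setting a discarded $1$ is always \emph{a} largest element; what must be shown is that if some $z^i_j=1$ with $j>l_i$ is dropped, then the retained prefixes already contain at least $k$ ones. The paper's argument requires: (i) observing that $j>l_i$ forces $l_i=\floor{k/i}$ (else $j>n_i$); (ii) concluding via domination that all of $\pref(l_i,\bar{z}^1),\dots,\pref(l_i,\bar{z}^i)$ are all-ones, contributing $i\cdot l_i$ ones; and (iii) when $i\cdot l_i < k$, locating the index $r$ with $l_r > l_{r+1} = \dots = l_i$ and showing, via Corollary~\ref{crly:mw_num} and top-$(l_i{+}1)$ sortedness, that each of the first $r$ columns contributes one \emph{extra} one, yielding $i\cdot l_{r+1} + r \ge (r{+}1)\floor{k/(r{+}1)} + r \ge k$. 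Step (iii) in particular depends on the specific choice $l_i=\min(n_i,\floor{k/i})$ and on the identity $l_{r+1}=\floor{k/(r{+}1)}$; without it the plan does not close, so as written the proposal has a gap precisely where the theorem is hardest.
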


  \begin{proof}
    We prove by induction that for each $n,k \in \nat$ such that $1 \leq k \leq n$ and
    each $\bar{x} \in \bool^n$: $mw\_sel^n_k(\bar{x})$ is top $k$ sorted. If $1=k \leq n$
    then $mw\_sel^n_k = max^n$, so the theorem is true. For the induction step assume that
    $n \ge k \ge 2$, $m \ge 2$ and for each $(n^{*},k^{*}) \prec (n,k)$ (in
    lexicographical order) the theorem holds. We have to prove the following two properties:
    \begin{enumerate}
      \item The tuple $\tuple{pref(l_1,\bar{z}^1) :: \bot^{k_1-l_1}, \dots, pref(l_m,\bar{z}^m) :: 
      \bot^{k_m-l_m}}$ is $m$-wise of order $(k_1, k)$.
      \item The sequence $\bar{w} = pref(l_1,\bar{z}^1)::\dots ::pref(l_m,\bar{z}^m)$ contains 
      $k$ largest elements from $\bar{x}$.
    \end{enumerate}
    
    Ad. 1): Observe that for any $i$, $1 \leq i \leq m$, we have $n_i \leq n_1 < n$ and
    $l_i \leq l_1 \leq k$. Thus, $(n_i,l_i) \prec (n,k)$ and $\bar{z}_i$ is top $l_i$
    sorted due to the induction hypothesis. Therefore, $pref(l_i,\bar{z}^i)$ is sorted and
    so is $pref(l_i,\bar{z}^i) :: \bot^{k_i-l_i}$. In this way, we prove that
    the first property of Definition \ref{def:mws} is satisfied. To prove the second one,
    fix $1 \leq j \leq l_{i+1}$ and assume that $z^{i+1}_j = 1$. We are going to
    show that $z^i_j = 1$. Since $\bar{z}^i$ and $\bar{z}^{i+1}$ are both top $l_{i+1}$
    sorted (by the induction hypothesis) and since $|\bar{z}^i|_1 \geq |\bar{z}^{i+1}|_1$
    (from Corollary \ref{crly:mw_num}), we have $|pref(l_{i+1},\bar{z}^i)|_1 \geq
    |pref(l_{i+1},\bar{z}^{i+1})|_1$, so $z^i_j=1$.
    
    Ad. 2): It is easy to observe that $\bar{z} = \bar{z}^1 :: \dots :: \bar{z}^m$ is a
    permutation of the input sequence $\bar{x}$. If all 1's in $\bar{z}$ are in $\bar{w}$,
    we are done. So assume that there exists $z^i_j=1$ for some $1 \leq i \leq m$, $l_i <
    j \leq n_i$. We will show that $|\bar{w}|_1 \geq k$. From the induction hypothesis we
    get $pref(l_i,z^i) \succeq \tuple{z^i_j}$, which implies that $|pref(l_i,z^i)|_1 =
    l_i$. From (1) and the second property of Definition \ref{def:mws} it is clear that
    all $z^s_t$ are 1's, where $1 \leq s \leq i$ and $1 \leq t \leq l_i$, therefore
    $|pref(l_i,\bar{z}^1):: \dots :: pref(l_i,\bar{z}^i)|_1 = i \cdot l_i$. Moreover,
    since $j > l_i$, we have $l_i=\floor{k/i}$; otherwise we would have $j >n_i$. If
    $i=1$, then $|\bar{w}|_1 \geq 1 \cdot l_1 = k$ and (2) holds.
    
    Otherwise, $l_1 > l_i$, so from the definition of $l_i$, $l_1 \geq \dots \geq l_i$, hence there
    exists $r \geq 1$ such that $\forall_{r < i' \leq i}$ $l_r > l_{i'} = l_i$. Notice
    that since $|pref(l_i,z^i)|_1=l_i$ and $z^i_j=1$ where $j>l_i$ we get $|z^i|_1 \geq
    l_i + 1$. From Corollary \ref{crly:mw_num} we have that for $1 \leq r' \leq r$:
    $|\bar{z}^{r'}|_1 \geq l_i + 1$. Using $l_{r'} \geq l_r > l_i$ and the induction
    hypothesis we get that each $\bar{z}^{r'}$ is top $l_{i}+1$ sorted, therefore
    $|pref(l_{i}+1,\bar{z}^{r'})|_1 = l_i+1$. We finally have that $|\bar{w}|_1 \geq
    |pref(l_{i}+1,\bar{z}^1):: \dots :: pref(l_{i}+1,\bar{z}^r) :: pref(l_i,\bar{z}^{r+1})
    :: \dots :: pref(l_i,\bar{z}^{i})|_1 = r (l_{i}+1) + (i-r)l_i = r (l_{r+1}+1) +
    (i-r)l_{r+1} = i \cdot l_{r+1} + r \geq (r+1)\floor{k/(r+1)} + r \geq k$. In the
    second to last inequality, we use the facts: $l_{r+1}=l_i=\floor{k/i} < n_i \leq n_{r+1}$
    from which $l_{r+1}=\floor{k/(r+1)}$ follows.

    From the statements (1) and (2) we can conclude that $mw\_merge^{s}_{(n,k)}$ will return 
    the $k$ largest elements from $\bar{x}$, which completes the proof.
  \end{proof}

  \subsection{m-Odd-Even Selection Network}
  
  Next we present the algorithm for constructing the $m$-Odd-Even Selection Network
  (Network \ref{net:oe_sel}) and we prove that it is correct. In Network \ref{net:oe_sel} we
  use $oe\_merge^{s}_{k}$ as a black box. It is a $m$-Odd-Even Merger of order $k$.
  We give detailed constructions of $m$-Odd-Even Merger for $m=4$ in the next section.

  The idea we use is the generalization of the one used in Odd-Even Selection Network from
  \cite{card2} which is based on the Odd-Even Sorting Network by Batcher \cite{batcher}.
  We arrange the input sequence into $m$ columns of non-increasing sizes and we recursively
  run the selection algorithm on each column (lines 3--6), where at most $k$ items are selected
  from each column. Notice that each column is represented by ranges derived from the increasing value of variable $offset$.
  Selected items are sorted and form prefixes of the columns and they are the input to the merging procedure (line 7--8).

  \begin{example}
    Assuming the same input sequence and parameters as in Example \ref{ex:mw_sel}, the sample run of
    4-Odd-Even Selection Network would be very similar to the one presented in Figure \ref{fig:mw_sel_example},
    except we would not do the row sorting step, and in consequence we would need to recursively select $k$ largest
    elements from each column.
  \end{example}
  
  \begin{algorithm}
    \caption{$oe\_sel^n_k$}\label{net:oe_sel}
    \begin{algorithmic}[1]
      \Require {$\bar{x} \in \bool^n$; $n_1, \dots , n_m \in \nat$ where $n > n_1 \geq \dots \geq
        n_m$ and $\sum n_i = n$; $1 \le k \le n$}
      \If {$k = 1$}
        \Return $max^n(\bar{x})$
      \EndIf
      \State $offset = 1$
      \ForAll {$i \in \{1,\dots,m\}$}
        \State $k_i=\min(k, n_i)$
        \State $\bar{y}^i \gets oe\_sel^{n_i}_{k_i}(\tuple{x_{offset},\dots,x_{offset + n_i - 1}})$
        \State $offset += n_i$
      \EndFor
      \State $s = \sum_{i=1}^{m} k_i$; 
         ~~$\overline{out} = \suff(k_1+1,\bar{y}^1) :: \dots :: \suff(k_m+1,\bar{y}^m)$
      \State \Return $oe\_merge^{s}_{k}(
        \tuple{\pref(k_1,\bar{y}^1),\dots,\pref(k_m,\bar{y}^m)}) :: \overline{out}$
    \end{algorithmic}
  \end{algorithm}

  \begin{theorem}
    $oe\_sel^n_k$ is a $k$-selection network.
  \end{theorem}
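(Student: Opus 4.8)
The plan is to follow the same inductive scheme as in the proof of Theorem~\ref{thm:mw_sel}: induction on the pair $(n,k)$ in lexicographic order, using the $0$--$1$ Principle. The base case $k=1$ is immediate, since then $oe\_sel^n_k = max^n$ returns the maximum of $\bar{x}$, which is top $1$ sorted. For the inductive step I would take $n \ge k \ge 2$ and $m \ge 2$, and assume the claim for every $(n^{*},k^{*}) \prec (n,k)$. Since $n > n_1 \ge n_i$ for all $i$, each recursive call $oe\_sel^{n_i}_{k_i}$ is made on a strictly smaller instance $(n_i,k_i) \prec (n,k)$, so by the induction hypothesis each $\bar{y}^i$ is top $k_i$ sorted. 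The argument then reduces to two facts: (i) the tuple $\tuple{pref(k_1,\bar{y}^1),\dots,pref(k_m,\bar{y}^m)}$ meets the precondition of Definition~\ref{def:oem}, and (ii) the concatenation $\bar{w} = pref(k_1,\bar{y}^1) :: \dots :: pref(k_m,\bar{y}^m)$ contains the $k$ largest elements of $\bar{x}$.

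Claim (i) is the easy part and is simpler here than its $m$-wise counterpart, because Definition~\ref{def:oem} only requires each input sequence to be top $k$ sorted, with no analogue of the cross-column domination from the $m$-wise property. Since $\bar{y}^i$ is top $k_i$ sorted and $k_i = \min(k,n_i) \le k$, the prefix $pref(k_i,\bar{y}^i)$ is fully sorted and of length at most $k$, hence top $k$ sorted. For claim (ii) I would argue by counting ones. Let $c_i$ be the number of $1$'s in the $i$-th column; as $\bar{y}^i$ is a permutation of that column, $|\bar{y}^i|_1 = c_i$. Because $\bar{y}^i$ is top $k_i$ sorted, its first $k_i$ entries contain exactly $\min(k_i,c_i) = \min(k,c_i)$ ones, so $|pref(k_i,\bar{y}^i)|_1 = \min(k,c_i)$ and therefore $|\bar{w}|_1 = \sum_{i=1}^m \min(k,c_i)$. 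A short case split (some $c_i \ge k$, versus all $c_i < k$) then shows $|\bar{w}|_1 \ge \min(k, |\bar{x}|_1)$, which is precisely the statement that $\bar{w}$ retains all of the top $k$ ones of $\bar{x}$.

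Finally I would invoke Definition~\ref{def:oem}: by (i) the merger $oe\_merge^s_k$ is applied to a legal input with $s = \sum_{i=1}^m k_i$, so its output is top $k$ sorted on $\bar{w}$; by (ii) the first $k$ entries of that output are exactly the $k$ largest elements of $\bar{x}$, sorted. It remains to check that appending $\overline{out} = \suff(k_1+1,\bar{y}^1) :: \dots :: \suff(k_m+1,\bar{y}^m)$ does not destroy top $k$ sortedness of the complete output, and this is the one step that needs genuine care. I expect to handle it by noting first that $s = \sum_{i=1}^m \min(k,n_i) \ge k$, so the leading $k$ positions of the final sequence lie entirely inside the merger output, and then splitting on whether $|\bar{x}|_1 \ge k$ or $|\bar{x}|_1 < k$: in the former case claim (ii) makes those $k$ positions all ones, which dominate everything trivially, while in the latter case claim (ii) forces $|\bar{w}|_1 = |\bar{x}|_1$, so $\overline{out}$ is all zeros and again cannot violate domination. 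This case analysis around $\overline{out}$ is the main obstacle; the rest is routine transcription of the inductive bookkeeping.
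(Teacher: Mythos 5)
Your proposal is correct and follows essentially the same route as the paper's proof: induction on $(n,k)$ in lexicographic order via the 0--1 Principle, using the induction hypothesis to get that each $\bar{y}^i$ is top $k_i$ sorted and then showing that the concatenated prefixes retain the top $k$ ones before handing off to $oe\_merge^s_k$. Your counting step $|\bar{w}|_1=\sum_i\min(k,c_i)\ge\min(k,|\bar{x}|_1)$ is a slightly more direct rendering of the paper's witness-based argument ($k_i=k$ and $|pref(k_i,\bar{y}^i)|_1=k$ whenever a 1 escapes the prefix), and your explicit check that appending $\overline{out}$ preserves top $k$ sortedness fills in a detail the paper leaves implicit.
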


  \begin{proof}
    It is easy to observe that $\bar{y} = \bar{y}^1 :: \dots :: \bar{y}^m$ is a
    permutation of the input sequence $\bar{x}$. We prove by induction that for each $n,k
    \in \nat$ such that $1 \leq k \leq n$ and each $\bar{x} \in \bool^n$:
    $oe\_sel^n_k(\bar{x})$ is top $k$ sorted. If $1=k \leq n$ then $oe\_sel^n_k = max^n$,
    so the theorem is true. For the induction step assume that $n \ge k \ge 2$, $m \ge 2$
    and for each $(n^{*},k^{*}) \prec (n,k)$ (in lexicographical order) the theorem holds.
    We have to prove that the sequence $\bar{w} = pref(k_1,\bar{y}^1)::\dots
    ::pref(k_m,\bar{y}^m)$ contains $k$ largest elements from $\bar{x}$. If all 1's in
    $\bar{y}$ are in $\bar{w}$, we are done. So assume that there exists $y^i_j=1$ for
    some $1 \leq i \leq m$, $k_i < j \leq n_i$. We will show that $|\bar{w}|_1 \geq k$.
    Notice that $k_i=k$, otherwise $j > k_i = n_i$ -- a contradiction. Since $|\bar{y}^i|
    = n_i \leq n_1 < n$, from the induction hypothesis we get that $\bar{y}^i$ is top $k_i$
    sorted. In consequence, $pref(k_i,\bar{y}^i) \succeq \tuple{y^i_j}$, which implies
    that $|pref(k_i,y^i)|_1 = k_i = k$. We conclude that $|\bar{w}|_1 \geq
    |pref(k_i,y^i)|_1 = k$. Note also that in the case $n = k$ we have all $k_i = \min(n_i,k) 
    < k$, so the case is correctly reduced.
    
    Finally, using $oe\_merge^{s}_{k}$ the algorithm returns 
    $k$ largest elements from $\bar{x}$, which completes the proof.
  \end{proof}
  
\section{Merging Networks} \label{sec:merge}

In this section we give the detailed constructions of the networks $4mw\_merge$ and
$4oe\_merge$ that merge four sequences (columns) obtained from the recursive calls in
Networks \ref{net:mw_sel} and \ref{net:oe_sel}. In the first construction we can assume
that input columns form a tuple that is 4-wise, because rows were sorted before selecting
top elements in each column. In the second one input columns are just top $k$ sorted and
rows are sorted in the final steps of the merging process. In the subsections \ref{sub:2} and  
\ref{sub:3} we present the $4$-Wise Merger and $4$-Column Odd-Even Merger, respectively. 

\subsection{4-Wise Merging Networks}\label{sub:2}

In this subsection the first merging algorithm for four columns is presented.
The input to the merging procedure is $R = \tuple{pref(n_1,\bar{y}^1), \dots,
pref(n_m,\bar{y}^m)}$, where each $\bar{y}^i$ is the output of the recursive call in
Network \ref{net:mw_sel}. The main observation is the following: since $R$ is $m$-wise, if
you take each sequence $pref(n_i,\bar{y}^i)$ and place them side by side, in columns, from
left to right, then the sequences will be sorted in rows and columns.
The goal of the networks is to put the $k$ largest
elements in top rows. It is done by sorting slope lines with decreasing slope rate, in lines 1--14
(similar idea can be found in \cite{china}).
The algorithm is presented in Network \ref{net:4mw_merge}. The pseudo-code looks non-trivial, but it
is because we need a separate sub-case every time we need to use either $\sort^2$,  $\sort^3$ or
$\sort^4$ operation, and this depends on the sizes of columns and the current slope.

After slope-sorting phase some elements might not be in the desired row-major order,
therefore the correction phase
is needed, which is the goal of the sorting operations in lines 15--18. Figure \ref{fig:4col}
shows the order relations of elements after $i=\ceil{\log n_1}$ iterations of the {\bf while} loop
(disregarding the upper index $i$, for clarity).
Observe that the order should be possibly corrected between $z_{j-1}$ and $w_{j+1}$ and then
the 4-tuples $\tuple{x_j, w_j, z_{j-1}, y_{j-1}}$ and $\tuple{x_{j+1}, w_{j+1}, z_j, y_j}$
should be sorted to get the row-major order. Lines 17--18 addresses certain corner cases of
the correction phase.

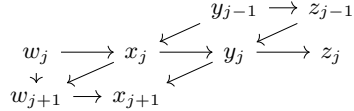
\begin{figure}
  \centering 
  \begin{tikzcd}[column sep=small,row sep=tiny]
    ~ & ~ & y_{j-1} \arrow{dl} \arrow{r} & z_{j-1} \arrow{dl}\\
    w_j \arrow{d} \arrow{r} & x_j \arrow{dl} \arrow{r} & y_j \arrow{dl} \arrow{r} & z_j \\
    w_{j+1} \arrow{r} & x_{j+1} &  & 
  \end{tikzcd}
  \caption{The order relation among elements of neighbouring rows. Arrows shows non-increasing 
    order. Relations that follow from transitivity are not shown.}
  \label{fig:4col}
\end{figure}

  An input to Network \ref{net:4mw_merge} must be 4-wise of order $(c,k)$
  and in the output should be top $k$ sorted. We prove it using the 0-1 principle,
  that is, in this subsection we assume that sequences (in particular, inputs) are binary.
  
\begin{algorithm}[!t]
\caption{$4w\_merge^s_{(c,k)}$}\label{net:4mw_merge}
\begin{algorithmic}[1]
  \Require { 4-wise tuple $\tuple{\bar{w}, \bar{x}, \bar{y}, \bar{z}}$ of order 
  $(c,k)$, where $s = |\bar{w}| + |\bar{x}| + |\bar{y}| + |\bar{z}|$ and $c = |\bar{w}|$.}
  \State $k_1 = |\bar{w}|$; ~~$k_2 = |\bar{x}|$; ~~$k_3 = |\bar{y}|$; ~~$k_4 = 
  |\bar{z}|$
  \State $d \gets \min\{l \in \nat \, | \, 2^{l} \geq k_1\}$
  \State $h \gets 2^{d}$; ~~$i \gets 0$; 
  ~~$\tuple{\bar{w}^0,\bar{x}^0,\bar{y}^0,\bar{z}^0} 
  = \tuple{\bar{w}, \bar{x}, \bar{y}, \bar{z}}$; ~~$h_0 \gets h$
  \While{$h_i > 1$} \Comment{ Define the $(i+1)$-th stage of $4w\_merge^s_{(n,k)}$.}
    \State $i \gets i + 1$; ~~$h_{i} \gets h_{i-1}/2$
    \State $\tuple{\bar{w}^i,\bar{x}^i,\bar{y}^i,\bar{z}^i} = 
      \tuple{\bar{w}^{i-1}, \bar{x}^{i-1}, \bar{y}^{i-1}, \bar{z}^{i-1}}$
    \ForAll {$j \in \{1,\dots,\min(k_3-h_i, k_4)\}$}
      \If {$j + 3h_i \leq k_1 \textbf{~and~} j + 2h_i \leq k_2$}
        $\sort^4(z^{i}_j, y^{i}_{j+h_i}, x^{i}_{j + 2h_i}, w^{i}_{j + 3h_i})$
      \ElsIf {$j + 2h_i \leq k_2$}
        $\sort^3(z^{i}_j, y^{i}_{j+h_i}, x^{i}_{j + 2h_i})$
        \Comment{$w^{i}_{j + 3h_i}$ is not defined.}
      \Else
        ~$\sort^2(z^{i}_j, y^{i}_{j+h_i})$
        \Comment{Both $x^{i}_{j + 2h_i}$ and $w^{i}_{j + 3h_i}$ are not defined.}
      \EndIf
    \EndFor
    \ForAll {$j \in \{1,\dots,\min(k_2-h_i, k_3, h_i)\}$}
      \If {$j + 2h_i \leq k_1$}
        $\sort^3(y^{i}_{j}, x^{i}_{j + h_i}, w^{i}_{j + 2h_i})$
      \Else
        ~$\sort^2(y^{i}_{j}, x^{i}_{j + h_i})$
        \Comment{$w^{i}_{j + 2h_i}$ is not defined.}
      \EndIf
    \EndFor
    \ForAll {$j \in \{1,\dots,\min(k_1-h_i, k_2, h_i)\}$}
        $\sort^2(x^{i}_{j}, w^{i}_{j + h_i})$
    \EndFor
  \EndWhile
  \Comment{Define two more stages to correct local disorders.}
  \ForAll {$j \in \{1,\dots,\min(k_1-2, k_4)\}$}
    $\sort^2(z^{i}_{j}, w^{i}_{j+2})$
  \EndFor
  \ForAll {$j \in \{1,\dots,\min(k_2-1, k_4)\}$}
    $\sort^4(y^{i}_{j}, z^{i}_{j}, w^{i}_{j+1}, x^{i}_{j+1})$
  \EndFor
  \If {$k_1 > k_4 \textbf{~and~}  k_2 = k_4$}
    $\sort^3(y^{i}_{k_4}, z^{i}_{k_4}, w^{i}_{k_4+1})$
    \Comment{$x^{i}_{k_4+1}$ is not defined.}
  \EndIf
  \If {$k \bmod 4 = 3 \textbf{~and~} k_1 > k_3$}
    $\sort^2(y^{i}_{k_4+1}, w^{i}_{k_4+2})$
    \Comment{$y^{i}_{k_4+1}$ must be corrected.}
  \EndIf
  \State \Return $\zip(\bar{w}^i,\bar{x}^i,\bar{y}^i,\bar{z}^i)$
  \Comment{Returns the columns in row-major order.}
\end{algorithmic}
\end{algorithm}

Thus, the algorithm gets as input four sorted 0-1 columns with the additional property
that the numbers of 1's in successive columns do not increase. Nevertheless, the
differences between them can be quite big. The goal of each iteration of the main
loop in Network \ref{net:4mw_merge} is to decrease the maximal possible difference
by the factor of two. Therefore, after the main loop, the differences are bounded by
one.

  \begin{example}
    A sample run of slope sorting phase of Network \ref{net:4mw_merge} is
    presented in Figure \ref{fig:mw_merge_example}. The arrows represent the sorting order.
    Notice how the 1s are being {\em pushed} towards upper-right side. In the end, the differences
    between the number of 1s in consecutive columns are bounded by one.
  \end{example}

\begin{figure}
  \centering
  \subfloat[\label{subfig:mwm:a}]{\includegraphics[width=0.22\textwidth]{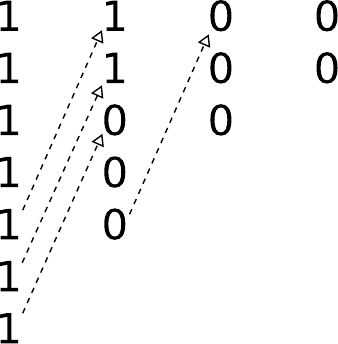}}~~
  \subfloat[\label{subfig:mwm:b}]{\includegraphics[width=0.22\textwidth]{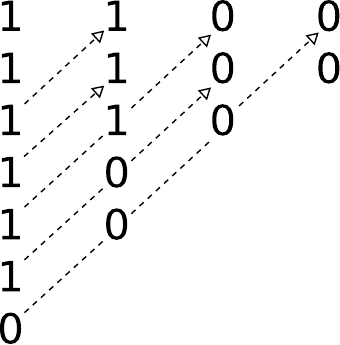}}~~
  \subfloat[\label{subfig:mwm:c}]{\includegraphics[width=0.22\textwidth]{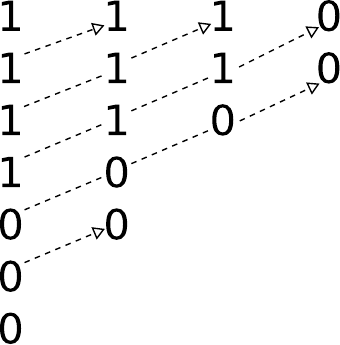}}~~
  \subfloat[\label{subfig:mwm:d}]{\includegraphics[width=0.22\textwidth]{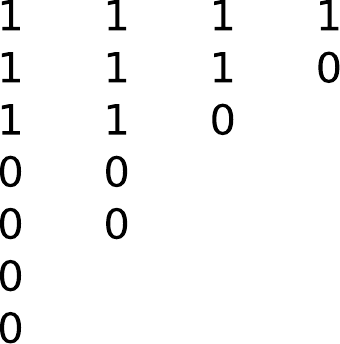}}
  \caption{Sample run of the slope-sorting phase of 4-Wise Merging Network.}
  \label{fig:mw_merge_example}
\end{figure}
  
In the $i$-th iteration of the loop the values of columns $\bar{w}^i$, $\bar{x}^i$,
$\bar{y}^i$ and $\bar{z}^i$ are obtained by sorting the results of the previous iteration
in slope lines with decreasing slope rate. The sizes of the sorted prefix of the columns
can also decrease in each iteration, but we would like to keep the $k$ largest input
elements in that prefixes. In the following observation and lemma we define and formulate the basic
properties of the column lengths after each iteration.

\begin{observation} \label{obs:abc}
  Let $k_1, \dots, k_4$ be as defined in Network \ref{net:4mw_merge}.
  For each $i$, $0 \le i \le \ceil{\log(k_1)}$, let $c_i = \min(k_3,
  \floor{k/4}+h_i)$, $b_i = \min(k_2, c_i+h_i)$ and  $a_i = \min(k_1, b_i+h_i)$.
  Then we have: $k_1 \ge a_i \ge b_i \ge c_i \ge \floor{k/4}$ and the
  inequalities: $a_i - b_i \le h_i$, $b_i - c_i \le h_i$ and $c_i - \floor{k/4}
  \le h_i$ are true.
\end{observation}

\begin{proof}
  From Observation \ref{obs:mws}.(1) we have $k_1 \ge k_2 \ge k_3 \ge
  \floor{k/4}$. It follows that $c_i \ge \floor{k/4}$ and $b_i \ge
  \min(k_3,\floor{k/4}+h_i) = c_i$ and $k_1 \ge a_i \ge \min(k_2,c_i+h_i) = b_i$.
  Moreover, one can see that $c_i - \floor{k/4} = \min(k_3-\floor{k/4},h_i) \le
  h_i$, $b_i - c_i = \min(k_2-c_i, h_i) \le h_i$ and $a_i - b_i = \min(k_1-b_i,
  h_i) \le h_i$, so we are done.
\end{proof}

\begin{lemma} \label{lem:check}
  Let $a_i, b_i, c_i$, $0 \le i \le \ceil{\log(k_1)}$, be as defined in Observation
  \ref{obs:abc}. Let $\check{w}^i = \pref(a_i,\bar{w}^i)$, $\check{x}^i =
  \pref(b_i,\bar{z}^i)$, $\check{y}^i = \pref(c_i,\bar{y}^i)$ and $\check{z}^i =
  \bar{z}^i$. Then only the items in $\check{w}^i$, $\check{x}^i$, $\check{y}^i$ and
  $\check{z}^i$ take part in the $i$-th iteration of sorting operations in lines 8-14 of
  Network \ref{net:4mw_merge}.
\end{lemma}

\begin{proof}
  An element of the vector $\bar{z}^i$, that is,  $z^i_j$ is sorted in one of the lines
  8--10 of Network \ref{net:4mw_merge}, thus for all such $j$ that $1 \le j \le k_4 =
  \floor{k/4}$. An element $y^i_j$ of $\bar{y}^i$ is sorted in lines 8--10 of the first
  inner loop and in lines 12--13 of the second inner loop. In the first three of those
  lines we have the bounds on $j$: $1 + h_i \le j \le \min(k_3,k_4+h_i)$ and in the last
  three lines - the bounds: $1 \le j \le \min(k_3,h_i)$. The sum of these two intervals
  is $1 \le j \le min(k_3, \floor{k/4}+h_i) = c_i$. Similarly, we can analyse the
  operations on $x^i_j$ in lines 8--10, 12--13 and 14. The range of $j$ is the sum of the
  following disjoint three intervals: $[1+2h_i, \min(k_2, k_3+h_i, k_4+2h_i)]$, $[1+h_i,
  \min(k_3+h_i,2h_i)]$ and $[1, \min(k_2, h_i)]$ that give us the interval $[1, \min(k_2,
  k_3+h_1, k_4+2h_i)] = [1, b_i]$ as their sum. The analysis of the range of $j$ used in
  the operations on $w^i_j$ in Network \ref{net:4mw_merge} can be done in the same way.
\end{proof}

The next lemma formulates the key invariants of the main loop in Network
\ref{net:4mw_merge}. They will be used in the proof of the main theorem in this
section stating the merging properties of the network.

\begin{lemma} \label{lma:4wsel}
  Let $\check{w}^i$, $\check{x}^i$, $\check{y}^i$ and $\check{z}^i$ be as defined in Lemma
  \ref{lem:check}. Then for all $i$, $0 \leq i \leq \ceil{\log_2 k_1}$, after the $i$-th
  iteration of the while loop in Network \ref{net:4mw_merge} the sequences $\check{w}^i,
  \check{x}^i, \check{y}^i, \check{z}^i$ are of the form $1^{p_i}0^*$, $1^{q_i}0^*$,
  $1^{r_i}0^*$ and $1^{s_i}0^*$, respectively, and:
  \begin{eqnarray}
    p_i \ge q_i \ge r_i \ge s_i, \label{eq:1}\\
    p_i - q_i \le h_i \textrm{~and~} q_i - r_i \le h_i 
      \textrm{~and~} r_i - s_i \le h_i, \label{eq:2}\\
    p_i + q_i + r_i + s_i \ge \min(k, p_0 + q_0 + r_0 + s_0). \label{eq:3}
  \end{eqnarray}
\end{lemma}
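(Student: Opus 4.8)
The plan is to prove Lemma~\ref{lma:4wsel} by induction on $i$. The base case $i=0$ should be immediate: by hypothesis the input tuple $\tuple{\bar w,\bar x,\bar y,\bar z}$ is $4$-wise of order $(c,k)$, so each column is already sorted (hence of the form $1^{\ast}0^{\ast}$) and, by the second property of Definition~\ref{def:mws}, the numbers of $1$'s satisfy $p_0\ge q_0\ge r_0\ge s_0$, giving \eqref{eq:1}. For \eqref{eq:3} at $i=0$ the right-hand side is $\min(k,\,p_0+q_0+r_0+s_0)$, which equals the left-hand side trivially. The only nontrivial base-case obligation is \eqref{eq:2}: I would need $p_0-q_0\le h_0$, etc., where $h_0=2^d\ge k_1$. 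Since each difference is at most $k_1\le h_0$, this holds. (I should double-check here that the truncation to $\check w^0=\pref(a_0,\bar w^0)$ does not shorten a column below its number of $1$'s; this follows from Observation~\ref{obs:abc}, since $a_0\ge b_0\ge c_0\ge\floor{k/4}$ and these are exactly the lengths that dominate $p_0,q_0,r_0$.)

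For the inductive step, I would assume the four invariants hold after iteration $i-1$ with parameter $h_{i-1}=2h_i$, and analyze what one pass of the while loop does. The crucial structural point, guaranteed by Lemma~\ref{lem:check}, is that iteration $i$ only touches the truncated prefixes $\check w,\check x,\check y,\check z$, so I may reason entirely within the $1^{\ast}0^{\ast}$ columns. The three inner \textbf{for} loops apply $\sort$ operations along slope lines connecting position $j$ in one column with position $j+h_i$ in the next (comparing $z_j,y_{j+h_i},x_{j+2h_i},w_{j+3h_i}$, and the shorter-slope variants). The key observation is that after sorting along these slopes of offset $h_i$, each column again becomes $1^{p_i}0^{\ast}$ (the sort cannot introduce a $0$ above a $1$ within a single column, because entries already sorted within a column are only permuted along slopes that respect the existing order), and the new counts satisfy the halved-gap bound \eqref{eq:2} with $h_i=h_{i-1}/2$. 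I would establish \eqref{eq:2} by a clean-0-1-sequence argument: if after iteration $i-1$ we have $q_{i-1}-r_{i-1}\le 2h_i$, then sorting the offset-$h_i$ slopes between the $\bar y$ and $\bar z$ columns forces the surplus of $1$'s to be shared, so that the new difference $r_i$ versus $s_i$ (and likewise the other adjacent pairs) drops to at most $h_i$. Preservation of the ordering \eqref{eq:1} follows because the sorters are applied in the direction $w\to x\to y\to z$ that matches the domination order, so the dominance $p\ge q\ge r\ge s$ is maintained.

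The invariant I expect to be the main obstacle is the conservation-of-mass statement \eqref{eq:3}: that the total number of $1$'s inside the tracked prefixes never drops below $\min(k,\,p_0+q_0+r_0+s_0)$. The difficulty is that the prefix lengths $a_i,b_i,c_i$ shrink as $h_i$ halves (per Observation~\ref{obs:abc}), so a $1$ could in principle be pushed out of a truncated prefix and thereby lost from the count. The plan is to show this never costs us any of the top $k$ ones: whenever a $1$ leaves the prefix of a column (say falls below position $c_i$ in $\bar y^i$), the slope-sort must have moved a corresponding $1$ upward into the prefix of a column to its right, or the column already held its full complement $\floor{k/4}+h_i$ of $1$'s so that no top-$k$ element is sacrificed. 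Concretely I would argue that the quantities $c_i-\floor{k/4}\le h_i$ etc.\ from Observation~\ref{obs:abc} are exactly tuned so that the prefixes always retain at least $\min(k,\cdot)$ ones, using that once $p_i+q_i+r_i+s_i$ reaches $k$ the $\min$ caps the requirement and any further shedding of $1$'s beyond position $\floor{k/4}+h_i$ is irrelevant.

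Finally, I would record the terminal value: when the loop exits we have $h_i=1$, so \eqref{eq:2} gives adjacent differences bounded by $1$, and \eqref{eq:1} together with \eqref{eq:3} says the $k$ largest elements are among the four nearly-balanced prefixes --- precisely the hypotheses the subsequent correction phase (lines~15--18) needs. This is the payoff that the main merging theorem will invoke, so I would state the $h_i=1$ consequences explicitly as the bridge to that theorem.
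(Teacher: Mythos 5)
Your induction skeleton matches the paper's: base case from the $4$-wise property, inductive step tracking the three invariants, and you correctly single out Eq.~\eqref{eq:3} as the hard part. But three of your key steps are asserted with justifications that do not actually carry the weight. First, the claim that each column stays of the form $1^{*}0^{*}$ is not because the slope sorts ``respect the existing order within a column'' --- a single output entry $y^i_j$ depends on entries from all four columns, and the correct argument (the one the paper gives) is to write $w^i_j = \min(\cdot)$, $x^i_j = \snd(\cdot)$, $y^i_j = \trd(\cdot)$, $z^i_j = \max(\cdot)$ of the four slope inputs (padding short sorters with $1$'s on the left and $0$'s on the right), and then use that these selection functions are monotone and the input columns are monotone. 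Second, Eq.~\eqref{eq:1} does not follow merely from ``the sorters are applied in the direction matching the domination order'': proving $w^i_{q_i}=1$ requires showing all four arguments of the $\min$ are $1$, and for the argument $z^{i-1}_{q_i-3h_i}$ this needs the relation $z^{i-1}_{j-2h_i}\ge y^{i-1}_{j}$, which comes from the gap bound $r'_{i-1}-s'_{i-1}\le h_{i-1}=2h_i$ --- i.e.\ Eq.~\eqref{eq:1} at level $i$ genuinely depends on Eq.~\eqref{eq:2} at level $i-1$, a dependency your sketch does not surface.

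The most serious gap is in \eqref{eq:3}. The mechanism by which $1$'s are ``lost'' is not the slope-sort pushing a $1$ out of a prefix --- by Lemma~\ref{lem:check} the sorts in iteration $i$ act entirely inside the tracked prefixes and therefore conserve the number of $1$'s there. The loss occurs \emph{between} iterations, when the tracked prefix lengths shrink from $(a_{i-1},b_{i-1},c_{i-1})$ to $(a_i,b_i,c_i)$ and any $1$'s sitting in positions $a_i+1,\dots,a_{i-1}$ (and analogously for the other columns) silently drop out of the count. The paper handles this by introducing $p'_{i-1}=\min(a_i,p_{i-1})$ etc.\ and proving, by a three-case analysis on which of $a_i<p_{i-1}$, $b_i<q_{i-1}$, $c_i<r_{i-1}$ holds, that whenever truncation actually discards a $1$ the surviving total $p'_{i-1}+q'_{i-1}+r'_{i-1}+s'_{i-1}$ is already at least $k$ (using, e.g., that $c_i<r_{i-1}$ forces $c_i=\floor{k/4}+h_i$ and then $4c_i+3-2h_i\ge k$, or in the $b_i<q_{i-1}$ subcase an appeal to Observation~\ref{obs:mws}.(3) to get $k_3=\floor{k/3}$). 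Your second alternative (``the column already held its full complement $\floor{k/4}+h_i$'') is the right germ of this, but without the case split and the accompanying arithmetic the invariant is not established, and your first alternative (a compensating $1$ moved into a neighbouring column) is not how the argument goes.
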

\begin{proof}
  By induction. At the beginning we have $h_0 \geq k_1 \geq k_2 \geq k3 \geq k4 =
  \floor{k/4}$ and therefore $\check{w}^0 = \bar{w}$, $\check{x}^0 = \bar{x}$,
  $\check{y}^0 = \bar{y}$ and $\check{z}^0 = \bar{z})$. All four sequences $\bar{w},
  \bar{x}, \bar{y}, \bar{z}$ are sorted (by Definition \ref{def:mws}), thus, they are of
  the form $1^{p_0}0^*$, $1^{q_0}0^*$, $1^{r_0}0^*$, and $1^{s_0}0^*$ respectively. By
  Definition \ref{def:mws}.(2), for each pair of sequences: $(\bar{w}_{1..k_2}, \bar{x})$,
  $(\bar{x}_{1..k_3}, \bar{y})$ and $(\bar{y}_{1..k_4}, \bar{z})$, if there is a 1 on the
  $j$ position in the right sequence, it must be a corresponding 1 on the same position in
  the left one. Therefore, we have: $p_0 \geq q_0 \geq r_0 \geq s_0 \geq 0$. Moreover,
  $p_0 - q_0 \leq p_0 \leq k_1 \leq h_0$, $q_0 - r_0 \leq q_0 \leq k_2 \leq h_0$ and $r_0
  - s_0 \leq r_0 \leq k_3 \leq h_0$. Finally, $p_0+q_0+r_0+s_0 \geq \min(k, p_0 + q_0 +
  r_0 + s_0)$, thus the lemma holds for $i=0$.

  In the inductive step $i > 0$ observe that the elements of $\check{w}^i$, $\check{x}^i$,
  $\check{y}^i$ and $\check{z}^i$ are defined by the sort operations over the elements
  with the same indices from vectors  $\check{w}^{i-1}$, $\check{x}^{i-1}$,
  $\check{y}^{i-1}$ and $\check{z}^{i-1}$. This means that the values of
  $w^{i-1}_{a_i+1,\dots,a_{i+1}}$, $x^{i-1}_{b_i+1,\dots,b_{i+1}}$ and
  $y^{i-1}_{c_i+1,\dots,c_{i+1}}$ are not used in the $i$-th iteration. Therefore, the
  numbers of 1's in columns that are sorted in the $i$-th iteration are defined by values:
  $p'_{i-1} = \min(a_i, p_{i-1})$, $q'_{i-1} = \min(b_i, q_{i-1})$, $q'_{i-1} = \min(c_i,
  r_{i-1})$ and $s'_{i-1} = s_{i-1}$. In the following we will prove that the numbers with
  primes have the same properties as those without them.
   \begin{eqnarray}
     p'_{i-1} \ge q'_{i-1} \ge r'_{i-1} \ge s'_{i-1}, \label{eq:4}\\
     p'_{i-1} - q'_{i-1} \le h_{i-1} \textrm{~and~} q'_{i-1} - r'_{i-1} \le h_{i-1} 
       \textrm{~and~} r'_{i-1} - s'_{i-1} \le h_{i-1}, \label{eq:5}\\
     p'_{i-1} + q'_{i-1} + r'_{i-1} + s'_{i-1} \ge \min(k, p_{i-1} + q_{i-1} + 
     r_{i-1} + s_{i-1}). \label{eq:6}
   \end{eqnarray}
  The proofs of inequalities in Eq. \ref{eq:4} are quite direct and follow from
  the monotonicity of $\min$. For example, we can observe that $p'_{i-1} =
  \min(a_i,p_{i-1}) \ge \min(b_i, q_{i-1}) = q'_{i-1}$, since we have $a_i \ge
  b_i$ and $p_{i-1} \ge q_{i-1}$, by Observation \ref{obs:abc} and the induction
  hypothesis. The others can be shown in the same way.
  
  Let us now prove one of the inequalities of Eq. \eqref{eq:5}, say, the second one.
  We have $q'_{i-1} - r'_{i-1} = \min(b_i,q_{i-1}) - \min(c_i,r_{i-1}) \le
  \min(c_i + h_i,r_{i-1} + h_{i-1}) - \min(c_i,r_{i-1}) \le h_{i-1}$, by
  Observation \ref{obs:abc}, the fact that $h_{i-1} = 2h_i$ and the induction 
  hypothesis. The proofs of the others are similar.
  
  The proof of Eq. \eqref{eq:6} is only needed if at least one of the following 
  inequalities are true: $p'_{i-1} < p_{i-1}$, $q'_{i-1} < q_{i-1}$ or $r'_{i-1} < 
  r_{i-1}$. Obviously, the inequalities are equivalent to $a_i < p_{i-1}$, 
  $b_i < q_{i-1}$ and $c_i < r_{i-1}$, respectively. Therefore, to prove Eq. 
  \ref{eq:6} we consider now three separate cases: (1) $a_i \ge p_{i-1}$ and $b_i 
  \ge q_{i-1}$ and $c_i < r_{i-1}$, (2) $a_i \ge p_{i-1}$ and $b_i < q_{i-1}$ and 
  (3) $a_i < p_{i-1}$.
  
  In the case (1) we have $p'_{i-1} = p_{i-1}$, $q'_{i-1} = q_{i-1}$ and $r'_{i-1}
  = c_i = \min(k_3,h_i + \floor{k/4}) < r_{i-1} \le q_{i-1} \le p_{i-1}$. It
  follows that $p'_{i-1} + q'_{i-1} + r'_{i-1} = p_{i-1} + q_{i-1} + c_i \ge c_i +
  1 + c_i + 1 + c_i = 3c_i + 2$. Since $k_3 \ge r_{i-1} \ge c_i + 1 = \min(k_3,
  \floor{k/4} + h_i) + 1$, we can observe that $c_i$ must be equal to $\floor{k/4}
  + h_i$. In addition, by the induction hypothesis we have $s_{i-1} \ge r_{i-1} -
  h_{i-1} \ge c_i + 1 - 2h_i$. Merging those facts we can conclude that $p'_{i-1}
  + q'_{i-1} + r'_{i-1} + s'_{i-1} \ge 4c_i +3 - 2h_i = 4\floor{k/4} + 4h_i + 3 -
  2h_i \ge k$, so we are done in this case.
  
  In the case (2) we have $p'_{i-1} = p_{i-1}$ and $q'_{i-1} = b_i = \min(k_2, c_i 
  + h_i) < q_{i-1} \le p_{i-1}$. It follows that $p'_{i-1} + q'_{i-1} \ge 2b_i + 
  1$. Since $k_2 \ge q_{i-1} \ge b_i + 1 = \min(k_2, c_i + h_i) + 1$, we can 
  observe that $b_i$ must be equal to $c_i + h_i$. In addition, by the induction 
  hypothesis and Observation \ref{obs:abc}, we can bound $r'_{i-1}$ as $r'_{i-1} = 
  \min(c_i, r_{i-1}) \ge \min(b_i-h_i, q_{i-1} - h_{i-1}) \ge b_i + 1 - 2h_i$. 
  Therefore, $p'_{i-1} + q'_{i-1} + r'_{i-1} \ge 3b_i + 2 - 2h_i = 3c_i + 2 + h_i$. 
  Since $c_i$ is defined as $\min(k_3,\floor{k/4} + h_i)$, we have to consider two 
  subcases of the possible value of $c_i$. If $c_i = k_3 \le \floor{k/4} + h_i$, 
  then we have $k_2 \ge q_{i-1} \ge b_i + 1 = c_i + h_i + 1 \ge k_3 + 2$. By 
  Observation \ref{obs:mws}.(3), $k_3$ must be equal to $\floor{k/3}$, thus $3c_i + 
  2 = 3k_3 + 2 \ge k$, and we are done. Otherwise, we have $c_i =\floor{k/4} + h_i$ 
  and since $s'_{i-1} = s_{i-1} \ge r_{i-1} - h_{i-1} \ge b_i + 1 - 4h_i = c_i + 1 
  - 3h_i$, we can conclude that $p'_{i-1} + q'_{i-1} + r'_{i-1} + s'_{i-1} \ge 4c_i 
  + 3 - 2h_i = 4\floor{k/4} + 3 + 2h_i \ge k$.
  
  The last case $a_i < p_{i-1}$ can be proved be the similar arguments. Having Eqs.
  (\ref{eq:4}.\ref{eq:5},\ref{eq:6}), we can start proving the inequalities from the
  lemma. Observe that in Network 4 the values of vectors $\bar{w}^i$, $\bar{x}^i$,
  $\bar{y}^i$ and $\bar{z}^i$ are defined with the help of three types of sorters:
  $\sort^4$, $\sort^3$ and $\sort^2$. The smaller sorters are used, when the corresponding
  index is out of the range and an input item is not available. In the following analysis
  we would like to deal only with $\sort^4$ and in the case of smaller sorters we extend
  artificially their inputs and outputs with 1's at the left end and 0's at the right end.
  For example, in line 18 we have $\tuple{y^i_{j}, x^i_{j + h_i}} \gets
  \sort^2(y^{i-1}_{j}, x^{i-1}_{j + h_i})$ and we can analyse this operation as $\tuple{1,
  y^i_{j}, x^i_{j + h_i}, 0} \gets \sort^4(1, y^{i-1}_{j}, x^{i-1}_{j + h_i}, 0)$. The 0
  input corresponds to the element of $\bar{w}^{i-1}$ with index $j+2h_i$, where $j+2h_i >
  k_1 \ge a_i$, and the 1 input corresponds to the element of $\bar{z}^{i-1}$ with index
  $j-h_i$, where $j-h_i < 0$. A similar situation is in lines 10, 12, 16 and 22, where
  $\sort^2$ and $\sort^3$ are used. Therefore, in the following we assume that elements of
  input sequences $\bar{w}^{i-1}$, $\bar{x}^{i-1}$, $\bar{y}^{i-1}$ and $\bar{z}^{i-1}$
  with negative indices are equal to 1 and elements of the inputs with indices above
  $a_i$, $b_i$, $c_i$ and $\floor{k/4}$, respectively, are equal to 0. This assumption
  does not break the monotonicity of the sequences and we also have the property that
  $w^{i-1}_j = 1$ if and only if $j \le p'_{j-1}$ (and similar ones for $\bar{x}^{i-1}$
  and $q'_{i-1}$, and so on).
  
  It should be clear now that, under the assumption above, we have:
  \begin{subequations}
  \begin{align}
  w^i_j = & \min(w^{i-1}_j, x^{i-1}_{j-h_i}, y^{i-1}_{j-2h_i}, z^{i-1}_{j-3h_i}) 
  & \textrm{~for~} 1 \le j \le a_i,\\
  x^i_j = & \snd(w^{i-1}_{j+h_i}, x^{i-1}_{j}, y^{i-1}_{j-h_i}, z^{i-1}_{j-2h_i}) 
  & \textrm{~for~} 1 \le j \le b_i,\\
  y^i_j = & \trd(w^{i-1}_{j+2h_i}, x^{i-1}_{j+h_i}, y^{i-1}_{j}, z^{i-1}_{j-h_i}) 
  & \textrm{~for~} 1 \le j \le c_i, \displaybreak[0]\\
  z^i_j = & \max(w^{i-1}_{j+3h_i}, x^{i-1}_{j+2h_i}, y^{i-1}_{j+h_i}, z^{i-1}_{j})
  & \textrm{~for~} 1 \le j \le \floor{k/4},
  \end{align}
  \end{subequations}
  where $\snd$ and $\trd$ denote the second and the third smallest element of
  its input, respectively. Since the functions $\min$, $\snd$, $\trd$ and
  $\max$ are monotone and the input sequences are monotone. we can conclude
  that $w^i_j \ge w^i_{j+1}$, $x^i_j \ge x^i_{j+1}$, $y^i_j \ge y^i_{j+1}$ and
  $z^i_j \ge z^i_{j+1}$. Let $p_i$, $q_i$, $r_i$ and $s_i$ denote the numbers
  of 1's in them. Clearly, we have $p_i + q_i + r_i + s_i = p'_{i-1} +
  q'_{i-1} + r'_{i-1} + s'_{i-1}$, thus $p_i + q_i + r_i + s_i \ge \min(k,
  p_{i-1} + q_{i-1} + r_{i-1} + s_{i-1}) \ge \min(k, p_0 + q_0 + r_0 + s_0)$,
  by the induction hypothesis. Thus. we have proved monotonicity of
  $\bar{w}^i$, $\bar{x}^i$, $\bar{y}^i$ and $\bar{z}^i$ and that Eq.
  \eqref{eq:3} holds.
  
  To prove Eq. \eqref{eq:2} we would like to show that $p_i - h_i \le q_i$,
  $q_i - h_i \le r_i$, $r_i - h_i \le s_i$, that is, that the following
  equalities are true: $x^i_{p_i-h_i} = 1$, $y^i_{q_i-h_i} = 1$ and
  $z^i_{r_i-h_i} = 1$. The first equality follows from the fact that
  $w^i_{p_i} = 1$ and $w^i_{p_i} \le x^i_{p_i-h_i}$, because they are output
  in this order by a single sort operation. The other two equalities can be shown
  by the similar arguments.
  
  The last equation we have to prove is \eqref{eq:1}. By the induction hypothesis and our 
  assumption we have $w^{i-1}_j \ge x^{i-1}_j \ge y^{i-1}_j \ge z^{i-1}_j$, for any $j$. 
  We know also that the vectors are non-increasing. We use these facts to show that 
  $w^i_{q_i} = 1$, which is equivalent to $p_i \ge q_i$. Since $w^i_{q_i} = 
  \min(w^{i-1}_{q_i}, x^{i-1}_{q_i-h_i}, y^{i-1}_{q_i-2h_i}, z^{i-1}_{q_i-3h_i})$, we 
  have to prove that all the arguments of the $\min$ function are 1's. From the 
  definition of $q_i$ we have $1 = x^i_{q_i} = \snd(w^{i-1}_{q_i+h_i}, x^{i-1}_{q_i}, 
  y^{i-1}_{q_i-h_i}, z^{i-1}_{q_i-2h_i})$, thus the maximum of any pair of arguments of 
  $\snd$ must be 1. Now we can see that $w^{i-1}_{q_i} \ge \max(w^{i-1}_{q_i+h_i}, 
  x^{i-1}_{q_i}) \ge 1$, $x^{i-1}_{q_i-h_i} \ge \max(x^{i-1}_{q_i}, y^{i-1}_{q_i-h_i}) 
  \ge 1$, $y^{i-1}_{q_i-2h_i} \ge \max(y^{i-1}_{q_i-h_i}, z^{i-1}_{q_i-2h_i}) \ge 1$ and 
  $z^{i-1}_{q_i-3h_i} \ge \max(y^{i-1}_{q_i-h_i}, z^{i-1}_{q_i-2h_i}) \ge 1$. In the last 
  inequality we use the fact that for any $j$ it is true that $z^{i-1}_{j-2h_i} \ge 
  y^{i-1}_{j}$ (because $r'_{i-1} - 2h_i \le s'_{i-1}$, by Eq. \eqref{eq:5}). Thus, 
  $w^i_{q_i}$ must be 1 and we are done. The other two inequalities can be proved in the 
  similar way with the help of two additional relations: $x^{i-1}_{j-2h_i} \ge 
  w^{i-1}_{j}$ and $y^{i-1}_{j-2h_i} \ge x^{i-1}_{j}$, which follows from the induction 
  hypothesis.
\end{proof}

\begin{theorem}
  The output of Network \ref{net:4mw_merge} is top $k$ sorted.
\end{theorem}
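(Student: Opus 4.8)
The plan is to reason on binary inputs via the 0--1 principle, read off the state of the four columns after the main \textbf{while} loop from Lemma~\ref{lma:4wsel}, and then show that the correction phase in lines 15--22 turns $\zip(\bar{w}^i,\bar{x}^i,\bar{y}^i,\bar{z}^i)$ into a top $k$ sorted sequence.

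First I would instantiate Lemma~\ref{lma:4wsel} at the last iteration $i^{*}=\ceil{\log_2 k_1}$, where $h_{i^{*}}=1$. This yields four sorted $0$--$1$ columns $1^{p}0^{*}$, $1^{q}0^{*}$, $1^{r}0^{*}$, $1^{s}0^{*}$ (writing $p,q,r,s$ for the counts $p_{i^*},q_{i^*},r_{i^*},s_{i^*}$) with $p\ge q\ge r\ge s$, consecutive differences $p-q\le 1$, $q-r\le 1$, $r-s\le 1$, and $p+q+r+s\ge\min(k,N)$, where $N$ is the number of $1$'s in the input (note $p_0+q_0+r_0+s_0=N$ since the truncations are the full columns at $i=0$). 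I would then translate the bounded differences into the element-wise relations drawn in Figure~\ref{fig:4col}: for instance $q-r\le 1$ is exactly $y^{i}_{j-1}\ge x^{i}_{j}$ for all $j$, and similarly for the other slanted arrows, while $p\ge q\ge r\ge s$ gives the horizontal order $w^{i}_{j}\ge x^{i}_{j}\ge y^{i}_{j}\ge z^{i}_{j}$ and column sortedness gives the vertical arrows. Thus the only place where $\zip$ can fail to be non-increasing is at a row boundary, i.e.\ where $z^{i}_{j}\ge w^{i}_{j+1}$ is violated, and this can happen only for the few boundary rows with $s< j\le p-1$, since $p-s\le 3$.

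The heart of the argument is that the correction phase removes exactly these boundary inversions. The key operation is the $\sort^4$ in line 18, which sorts the cross-row $4$-tuple $\tuple{y^{i}_{j},z^{i}_{j},w^{i}_{j+1},x^{i}_{j+1}}$; afterwards $y^{i}_{j}\ge z^{i}_{j}\ge w^{i}_{j+1}\ge x^{i}_{j+1}$, which simultaneously restores the cross-row inequality $z^{i}_{j}\ge w^{i}_{j+1}$ and the local within-row orders on both sides of the boundary. I would check that these sorters act on pairwise disjoint index sets, so line 18 is a legal parallel layer, and---using the relations of Figure~\ref{fig:4col}---that it creates no new inversions with the untouched neighbours $x^{i}_{j}$, $y^{i}_{j+1}$, $z^{i}_{j+1}$; here the bound ``differences $\le 1$'' is precisely what keeps a single sort from cascading. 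The remaining lines handle the places where line 18 would reference an undefined entry: line 16, $\sort^2(z^{i}_{j},w^{i}_{j+2})$, deals with the $p-s=3$ overhang where $\bar w$ is two rows longer than $\bar z$, and lines 19--22 treat the corner rows governed by the conditions $k_1>k_4$, $k_2=k_4$ and $k\bmod 4=3$, i.e.\ exactly the configurations of $(k_1,k_2,k_3,k_4)$ and of $k$ modulo $4$ in which the last populated row is only partially present.

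Finally I would combine the two ingredients. After the correction phase the $\zip$ is non-increasing on its first $\min(k,N)$ positions, and the inequality $p+q+r+s\ge\min(k,N)$ from Lemma~\ref{lma:4wsel} guarantees that all of the (at least) $\min(k,N)$ relevant $1$'s lie inside the sorted prefix, so these positions are exactly $1^{\min(k,N)}$; when $N<k$ the same bound forces $p+q+r+s=N$, hence the sequence is fully sorted $1^{N}0^{*}$ and the first $k$ entries dominate the rest in the sense of $\succeq$. In either case $\zip(\bar{w}^i,\bar{x}^i,\bar{y}^i,\bar{z}^i)$ is top $k$ sorted. I expect the main obstacle to be this last case analysis of lines 19--22: verifying that, across every admissible shape $(p,q,r,s)$ and every residue $k\bmod 4$, these few extra sorters fire under exactly the right conditions and correct the final boundary without disturbing the already-sorted prefix.
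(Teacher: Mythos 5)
Your proposal is correct and follows essentially the same route as the paper's proof: instantiate Lemma~\ref{lma:4wsel} at the final iteration where $h_i=1$, use the order relations of Figure~\ref{fig:4col} to localize the remaining inversions at row boundaries, show that the correction-phase sorters of lines 15--18 repair them, and invoke Eq.~\eqref{eq:3} to conclude that the discarded elements are dominated. The only discrepancies are cosmetic: your line numbers are shifted relative to Network~\ref{net:4mw_merge} (the $\sort^4$ correction is line 16, the $\sort^2(z_j,w_{j+2})$ is line 15, and the corner cases are lines 17--18), and you spell out the boundary case analysis in more detail than the paper does.
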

\begin{proof}
  The $\zip$ operation outputs its input vectors in the row-major order. By Eq.
  \eqref{eq:3} of Lemma \ref{lma:4wsel}, we know that elements in the $out$
  sequence are dominated by the $k$ largest elements in the output vectors of the
  main loop. From the order diagram given in Fig. \ref{fig:4col} it follows that
  $\tuple{y_{i-1}, \max(z_{i-1}, w_{j+1}), w_j, x_j}$ dominates $\tuple{y_i, z_i,
  \min(z_{i-1}, w_{j+1}), x_{i+1}}$, thus, after the sorting operations in lines
  15--16, the values will appear in the row-major order. The two special cases,
  where the whole sequence of four elements is not available for the $\sort^4$
  operation, are covered by lines 17 and 18. If the vector $\check{x}^i$ does not
  have the element with index $k_4 + 1$, then just 3 elements are sorted in line
  17. If $k \bmod 4 = 3$ then the element $y_{k_4 + 1}$ is the last one in desired
  order, but the vector $\check{z}^i$ does not have the element with index $k_4 +
  1$, so the network must sort just 2 elements in line 18. Note that in the first
  $\floor{k/4}$ rows we have $4\floor{k/4}$ elements of the top $k$ ones, so
  the values of $k - 4\floor{k/4}$ leftmost elements in the row $k_4+1$ should be
  corrected.
\end{proof}

\subsection{4-Odd-Even Merging Networks}\label{sub:3}

In this subsection the second merging algorithm for four columns is presented (Network \ref{net:4oe_merge}).
The input to the procedure is $\tuple{pref(k_1,\bar{y}^1), \dots,
pref(k_m,\bar{y}^m)}$, where each $\bar{y}^i$ is the output of the recursive call in
Network \ref{net:oe_sel}. Again, the goal is to put the $k$ largest
elements in top rows. It is done by splitting each input sequence into two parts, one
containing elements of odd index, the other containing elements of even index. Odd sequences
and even sequences are then recursively merged (lines 7--8). The resulting sequences
are again split into odd and even parts, and combined using auxiliary Network \ref{net:4oe_combine}
(lines 9--10). Lastly, the combine operation is used again (line 11) to get the top
$k$ sorted sequence.

Our network is the generalization of the classic Odd-Even Merging Network by Batcher \cite{batcher}.
Network \ref{net:4oe_combine} is the standard combine operation of the Odd-Even Merging Network,
but with smaller number of comparators, since our network is the selection network.
For more information on these classes of networks and to see examples, see \cite{knuth}.

\begin{definition}
  A binary sequence $\bar{r}$ is top $k'$ {\em full} if it is of the form $1^{k'}\bar{r}'$,
  where $\bar{r}' \in \bool^{|\bar{r}|-k'}$.
\end{definition}

\begin{algorithm}
\caption{$oe\_combine^s_k$}\label{net:4oe_combine}
\begin{algorithmic}[1]
  \Require {A pair of sequences $\tuple{\bar{x}, \bar{y}}$,
  where $k \leq s = |\bar{x}| + |\bar{y}|$. }
  \State $\bar{z} \gets \zip(\bar{x}, \bar{y})$
  \ForAll {$i \in \{1,\dots,\floor{\min(k, |\bar{z}|-1)/2}\}$}
    \State $\sort^2(z_{2i}, z_{2i+1})$
  \EndFor
  \State \Return $\bar{z}$ 
\end{algorithmic}
\end{algorithm}

\begin{observation}\label{obs:oec2}
  Let $K,k,l \in \nat$ and $K \geq k+l$ and $0 \leq l - k \leq 2$. If $\bar{a}$ is top $l$
  full and $\bar{b}$ is top $k$ full
  then $oe\_combine^{|a| + |b|}_{K}(\bar{a},\bar{b})$ is top $l+k$ full.
\end{observation}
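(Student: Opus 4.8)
The plan is to invoke the 0--1 principle and then collapse the general statement to a single canonical input by exploiting monotonicity of comparator networks. Recall that $oe\_combine^s_K$ (Network \ref{net:4oe_combine}) first forms $\bar{z} = \zip(\bar{x},\bar{y})$ and then applies the comparators $\sort^2(z_{2i},z_{2i+1})$ for $1 \le i \le M$, where $M = \floor{\min(K,|\bar{z}|-1)/2}$. Since these comparators act on the pairwise disjoint index pairs $\{2,3\},\{4,5\},\dots$, the whole combine is a composition of $\zip$ (a fixed rearrangement) with a single layer of $\min/\max$ gates, hence a monotone map on $\bool^s$. Now every admissible input satisfies $\bar{a} \ge 1^l0^{|\bar{a}|-l}$ and $\bar{b} \ge 1^k0^{|\bar{b}|-k}$ componentwise, so by monotonicity it suffices to prove the claim for the \emph{canonical} input $\bar{a} = 1^l0^{|\bar{a}|-l}$, $\bar{b} = 1^k0^{|\bar{b}|-k}$: if its output begins with $l+k$ ones, so does the output on every admissible input.

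For the canonical input I would compute $\bar{z} = \zip(\bar{a},\bar{b})$ directly. The ones of $\bar{a}$ sit in odd positions and those of $\bar{b}$ in even positions, so the first $2k$ entries of $\bar{z}$ are all $1$, entry $2k+1$ equals $a_{k+1}$ (which is $1$ iff $l \ge k+1$), and entry $2k+2$ is $b_{k+1}$ if $|\bar{b}| > k$ and $a_{k+2}$ if $|\bar{b}| = k$. Running through the three values $l-k \in \{0,1,2\}$ together with the two possibilities for whether $\bar{b}$ extends past index $k$, one checks that $\bar{z}$ is already of the form $1^{l+k}0^{*}$, and hence sorted, in every case \emph{except} when $l = k+2$ and $|\bar{b}| > k$. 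In that single exceptional case $\bar{z} = 1^{2k+1}\,0\,1\,0^{*}$, i.e.\ it is sorted apart from one adjacent inversion located exactly at positions $2k+2 = l+k$ and $2k+3$.

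It then remains to show that the comparator layer repairs this lone inversion without disturbing the leading block of ones. The inversion is corrected by $\sort^2(z_{2k+2},z_{2k+3})$, which is the operation for index $i = k+1$; every other comparator acts on two equal entries ($1,1$ below $l+k$, and $0,0$ beyond it) and therefore leaves the top $l+k$ positions intact. The one fact that must be verified is that $i=k+1$ is within range, i.e.\ $k+1 \le M$, and this is exactly where the hypothesis $K \ge k+l$ is used: in the exceptional case $|\bar{a}| \ge l = k+2$ and $|\bar{b}| \ge k+1$ give $s \ge 2k+3$, whence $\min(K,|\bar{z}|-1) \ge \min(k+l,\,s-1) \ge 2k+2$ and $M = \floor{\min(K,|\bar{z}|-1)/2} \ge k+1$. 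I expect this range check to be the only genuine obstacle; the remainder is bookkeeping. Note that the bound $l-k \le 2$ is essential: it guarantees at most one defective position and that the misplaced entry lies adjacent to a guaranteed $1$ (namely $a_{k+2}$), so a single layer of disjoint-pair comparators suffices, whereas a gap of $3$ or more would leave defects that one such layer cannot fix.
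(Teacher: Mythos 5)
Your proof is correct and follows essentially the same route as the paper's: both identify that the zipped sequence is already top $(l+k)$ full except in the single case $l=k+2$ with a $0$ at position $2k+2$, and both use $K\ge k+l$ to check that the comparator $\sort^2(z_{2k+2},z_{2k+3})$ is within range. Your extra monotonicity reduction to the canonical input $1^l0^*,1^k0^*$ is a harmless (and valid) wrapper around what the paper does directly by case-splitting on the value of $b_{k+1}$.
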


\begin{proof}
  If $l \in \{k,k+1\}$ or $b_{k+1}=1$, then $oe\_combine^{|a| + |b|}_{K}(\bar{a},\bar{b})$
  is top $l+k$ full due to the zip
  operation in line 1 (notice that additional sorting is not needed). If $l = k+2$ and $b_{k+1}=0$,
  then after the zip operation in line 1, the sequence $\bar{z}$ is of the form $1^{2k+1}01\bool^*$.
  Since $K \geq k+l=2k+2$, the elements $\tuple{z_{2k+2},z_{2k+3}} = \tuple{0,1}$ will be sorted in line 3.
  Therefore $\bar{z}$ will be of the form $1^{2k+2}\bool^*$, so $\bar{z}$ is top $l+k$ full.
\end{proof}

\begin{algorithm}
\caption{$4oe\_merge^s_k$}\label{net:4oe_merge}
\begin{algorithmic}[1]
  \Require {A tuple of top $k$ sorted sequences $\tuple{\bar{w}, \bar{x}, \bar{y}, \bar{z}}$, 
  where $s = |\bar{w}| + |\bar{x}| + |\bar{y}| + |\bar{z}|$,
  $|\bar{w}| \ge |\bar{x}| \ge |\bar{y}| \ge |\bar{z}|$ and $k \geq 1$.}
  \State $k_1 = |\bar{w}|$; ~~$k_2 = |\bar{x}|$; ~~$k_3 = |\bar{y}|$; ~~$k_4 = 
  |\bar{z}|$
  \If {$k_2 = 0$} \Return $\bar{w}$ \EndIf
  \If {$k_1 = 1$}
    \State $\bar{a} \gets \zip(\bar{w},\bar{x},\bar{y},\bar{z})$
    \State \Return $\sort^{|\bar{a}|}(\bar{a})$
  \EndIf
  \State $sa = \ceil{k_1/2} + \ceil{k_2/2} + \ceil{k_3/2} + \ceil{k_4/2}$;
       ~~$sb = \floor{k_1/2} + \floor{k_2/2} + \floor{k_3/2} + \floor{k_4/2}$
  \State $\bar{a} \gets 4oe\_merge^{sa}_{\min(sa,\floor{k/2}+2)}(
      \bar{w}_{\odd},\bar{x}_{\odd},\bar{y}_{\odd},\bar{z}_{\odd})$
      \Comment{Recursive calls.}
  \State $\bar{b} \gets 4oe\_merge^{sb}_{\min(sb,\floor{k/2})}(
      \bar{w}_{\even},\bar{x}_{\even},\bar{y}_{\even},\bar{z}_{\even})$
  \State $\bar{c} \gets oe\_combine_{\floor{k/2}+1}(\bar{a}_{\odd},\bar{b}_{\odd})$
  \State $\bar{d} \gets oe\_combine_{\floor{k/2}}(\bar{a}_{\even},\bar{b}_{\even})$
  \State \Return $oe\_combine_k(\bar{c},\bar{d})$ 
  \Comment{The output is top $k$ sorted.}
\end{algorithmic}
\end{algorithm}

\begin{observation}\label{obs:ones}
If $\bar{x}$ is a sorted sequence, then $|\bar{x}_{\odd}|_1 - |\bar{x}_{\even}|_1 \leq 1$.
\end{observation}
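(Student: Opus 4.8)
The plan is to exploit the special form of a sorted binary sequence. Since the quantities $|\bar{x}_{\odd}|_1$ and $|\bar{x}_{\even}|_1$ only count occurrences of the value $1$, we may assume $\bar{x} \in \bool^n$. By the convention adopted in this paper a sorted sequence is non-increasing, so $\bar{x}$ is of the form $1^p 0^{n-p}$ with $p = |\bar{x}|_1$; that is, all the 1's form a prefix occupying exactly the positions $1, \dots, p$.

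First I would count the 1's contributed to each subsequence. The odd subsequence $\bar{x}_{\odd} = \tuple{x_1, x_3, \dots}$ acquires a $1$ precisely for each odd index in $\{1, \dots, p\}$, and there are $\ceil{p/2}$ such indices; hence $|\bar{x}_{\odd}|_1 = \ceil{p/2}$. Symmetrically, $\bar{x}_{\even}$ acquires a $1$ for each even index in $\{1,\dots,p\}$, of which there are $\floor{p/2}$, so $|\bar{x}_{\even}|_1 = \floor{p/2}$.

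It then remains to combine the two counts: $|\bar{x}_{\odd}|_1 - |\bar{x}_{\even}|_1 = \ceil{p/2} - \floor{p/2}$, which equals $0$ when $p$ is even and $1$ when $p$ is odd, and is in either case at most $1$, as required.

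There is no genuine obstacle here; the only point requiring a little care is the reduction to binary together with the observation that \emph{sorted} means non-increasing in this paper, so that the 1's form a prefix rather than a suffix. Getting this orientation right is exactly what guarantees that the odd-position count is the larger of the two and that the difference never exceeds one.
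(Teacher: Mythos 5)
Your proof is correct. The paper states Observation~\ref{obs:ones} without any proof at all, so there is no argument of the authors' to compare yours against; what you give is the natural justification, and it is exactly what a reader would reconstruct. One small remark: the statement says ``sorted sequence,'' not ``sorted binary sequence,'' and your opening reduction (``we may assume $\bar{x} \in \bool^n$'') slightly glosses the fact that for a non-binary sorted sequence the occurrences of the value $1$ form a contiguous block, not necessarily a prefix --- e.g.\ $\tuple{2,1,1,0}$. The conclusion still holds in that generality, since a block of $p$ consecutive positions contains at most one more odd index than even indices, and in the paper's actual use of the observation (inside a 0--1--principle argument) all sequences are binary anyway, where your prefix form $1^p0^{n-p}$ and the counts $\ceil{p/2}$ versus $\floor{p/2}$ are exactly right.
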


\begin{lemma}
  Let $k,k' \in \nat$ and $k' \le k$. If $\tuple{\bar{w}, \bar{x}, \bar{y}, \bar{z}}$ is a top $k'$ sorted tuple
  such that $k \leq s = |\bar{w}| + |\bar{x}| + |\bar{y}| + |\bar{z}|$
  and $|\bar{w}| \ge |\bar{x}| \ge |\bar{y}| \ge |\bar{z}|$, then 
  $4oe\_merge^s_k(\bar{w}, \bar{x}, \bar{y}, \bar{z})$ is top $k'$ sorted.
\end{lemma}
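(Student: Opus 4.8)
The plan is to invoke the 0-1 principle, so that it suffices to treat binary inputs, and to argue by induction on $k_1 = |\bar{w}|$ (equivalently, on the recursion depth of Network \ref{net:4oe_merge}). First I would dispose of the two base cases. When $k_2 = 0$, the ordering $|\bar w| \ge |\bar x| \ge |\bar y| \ge |\bar z|$ forces $\bar x, \bar y, \bar z$ to be empty, so the network returns $\bar w$, which is top $k'$ sorted by assumption. When $k_1 = 1$ every nonempty column has length one, so $\bar a = \zip(\bar w, \bar x, \bar y, \bar z)$ has at most four entries and line 5 sorts it completely; a fully sorted sequence is top $k'$ sorted for every $k'$.

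For the inductive step ($k_1 \ge 2$, $k_2 \ge 1$) I would first pin down the top-sortedness of the two recursive inputs. A short computation shows that if each of $\bar w, \bar x, \bar y, \bar z$ is top $k'$ sorted, then the odd tuple $\tuple{\bar w_{\odd}, \bar x_{\odd}, \bar y_{\odd}, \bar z_{\odd}}$ is a top $\ceil{k'/2}$ sorted tuple and the even tuple $\tuple{\bar w_{\even}, \bar x_{\even}, \bar y_{\even}, \bar z_{\even}}$ is a top $\floor{k'/2}$ sorted tuple, with the required size ordering preserved. Since the odd and even columns are strictly shorter than the originals, the induction hypothesis applies to the two recursive calls in lines 7--8, once I check that the parameters $\min(sa, \floor{k/2}+2)$ and $\min(sb, \floor{k/2})$ dominate $\ceil{k'/2}$ and $\floor{k'/2}$ respectively (which uses $k' \le k$). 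Hence $\bar a$ is top $\ceil{k'/2}$ sorted and $\bar b$ is top $\floor{k'/2}$ sorted.

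The core of the argument is then to push these sorted sequences through the three $oe\_combine$ layers with Observation \ref{obs:oec2}. Writing $A$ and $B$ for the numbers of ones in $\bar a$ and $\bar b$ (each counted up to the relevant sorted prefix), Observation \ref{obs:ones} applied column by column and summed over the four columns gives $0 \le A - B \le 4$. Halving, the inputs $\bar a_{\odd}, \bar b_{\odd}$ of line 9 and $\bar a_{\even}, \bar b_{\even}$ of line 10 are top full with fullness lengths differing by at most two, and the resulting $\bar c, \bar d$ entering line 11 again differ by at most two, since there $C - D = (A \bmod 2) + (B \bmod 2) \le 2$ while $C + D = A + B$. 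At each layer I would invoke Observation \ref{obs:oec2} to conclude that the combine output is once more top full with fullness length equal to the sum of the two input lengths, after verifying that observation's capacity hypothesis $K \ge l + k$; this is exactly what the offsets $\floor{k/2}+2$, $\floor{k/2}+1$ and $\floor{k/2}$ were chosen to guarantee.

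Finally I would translate the fullness conclusion of the last combine into the desired top $k'$ sortedness: the output has its first $\min(k', N)$ entries equal to one, where $N$ is the total number of input ones, and any remaining zero inside the first $k'$ positions forces all later entries to be zero by the sorted structure produced by the combine, so the domination clause in the definition of top $k'$ sorted holds. I expect the main obstacle to be the simultaneous bookkeeping across the three combine layers: at each layer one must check both the difference-of-ones bound ($\le 2$) required by Observation \ref{obs:oec2} and its capacity precondition $K \ge l + k$, while carefully distinguishing ``top full'' (what Observation \ref{obs:oec2} delivers) from ``top $k'$ sorted'' (what is required), and handling separately the regime $N > k'$, in which only the first $k'$ output positions, rather than the full block of ones, need be certified.
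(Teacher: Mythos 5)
Your overall skeleton matches the paper's proof (induction, odd/even recursive calls, three applications of $oe\_combine$ controlled by Observations \ref{obs:ones} and \ref{obs:oec2}), but there are two linked gaps in the inductive step that break the argument. First, Observation \ref{obs:ones} requires a \emph{sorted} sequence, while your columns are only top $k'$ sorted: a column that is top $k'$ full may carry additional 1's placed arbitrarily in its tail, so its odd and even halves need not be sorted and their numbers of 1's can differ by much more than 1, in either direction. Hence ``$0 \le A - B \le 4$ by Observation \ref{obs:ones} applied column by column'' is not justified. The paper avoids this by splitting the inductive step into the case where some column is top $k'$ full --- there only \emph{fullness} lengths are tracked and Observation \ref{obs:ones} is never applied to a column --- and the cases where every column is of the form $1^j0^*$ with $j<k'$, which is exactly when the columns are genuinely sorted and the odd/even imbalance bound holds.

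Second, and more seriously, you invoke the induction hypothesis only to conclude that $\bar{a}$ is top $\ceil{k'/2}$ sorted. That is too weak: when no column is full, the odd columns may contain up to two more 1's than $\ceil{k'/2}$ (the imbalance of up to $4$ between odd and even halves, split over the two recursive calls), and a top-$\ceil{k'/2}$-sorted guarantee does not pin those extra 1's to the front of $\bar{a}$, so the subsequent combines cannot be shown to collect them. Concretely, take $k'=6$ and four columns each equal to $1\,0^*$: the input has $4<k'$ ones, so the output must be $1^40^*$, but your bookkeeping only certifies that the first $3$ output positions are 1, leaving the fourth 1 unaccounted for and the domination clause of top $k'$ sortedness unproved. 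This is precisely why the recursive call in line 7 of Network \ref{net:4oe_merge} carries the parameter $\floor{k/2}+2$ and why the paper applies the induction hypothesis at level $\floor{k'/2}+2$ (forcing $\bar{a}=1^q0^*$ with $q$ possibly as large as $\floor{k'/2}+2$); the constraint $\floor{k'/2}+2\le k'$, i.e.\ $k'\ge 3$, is also what forces the paper's separate base cases for $k'\in\{1,2\}$ and $k'=s$, which your outline omits. Without the case split and this strengthened use of the induction hypothesis the proof does not go through.
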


\begin{proof}
  By induction on $(k',s)$. First we introduce the schema of the proof, as it has many base cases:

  \begin{enumerate}
    \item When $k' \le s = 1$ or $k_1 = 1$.
    \item When $k'=s$ and $k_1 \geq 2$.
    \item When $k'=1$.
    \item When $k'=2$.
  \end{enumerate}

  \noindent After that we will prove the inductive step. Notice that we needed to include cases 2 and 4
  in the base step, because inductive step is only viable when $\floor{k'/2} + 2 \leq k'$, which is when $k' \geq 3$.
  
  {\em Ad. 1.} If $k' \le s = 1$, then from $k_1 \ge k_2 \ge k_3 \ge k_4$ we get that $k_1 = 1$ and
  $k_2 = 0$. Therefore $\bar{w}$ is the only non-empty input sequence and it consists of only one element,
  therefore it is sorted and the algorithm terminates on line 2. If $k_1=1$, then from $k_1 \ge k_2 \ge k_3 \ge k_4$
  we get that each input sequence has at most one element. Therefore using sorter of order at most 4 gives the desired
  result (lines 4--5).

  {\em Ad. 2.} If $k'=s$, then $k=s$. Assume that $k_1 \geq 2$. We prove this case by induction on $s$. Base case is covered by $(1)$. Take any $s>1$
  and assume that for any $s'<s$, the lemma holds. Consider the execution of $4oe\_merge^{s}_{s}(\bar{w}, \bar{x}, \bar{y}, \bar{z})$.
  Each input sequence is top $s$ sorted, therefore they are sorted. Which means that $\tuple{\bar{w}_{\odd},\bar{x}_{\odd},\bar{y}_{\odd},\bar{z}_{\odd}}$
  and $\tuple{\bar{w}_{\even},\bar{x}_{\even},\bar{y}_{\even},\bar{z}_{\even}}$ are also sorted. Since $\min(sa,\floor{s/2} + 2)=sa$, $\min(sb,\floor{s/2})=sb$,
  $sa < s$ (because $k_1 \geq 2$) and $sb<s$ we can
  use the induction hypothesis to get that $\bar{a}$ is top $sa$ sorted and $\bar{b}$ is top $sb$ sorted. Since $|\bar{a}|=sa$ and $|\bar{b}|=sb$ we get that
  $\bar{a}$ and $\bar{b}$ are sorted. Using Observation \ref{obs:ones} on the input sequences we also get that $|\bar{a}|_1 - |\bar{b}|_1 \leq 4$.
  Therefore sequences $\bar{a}_{\odd},\bar{b}_{\odd},\bar{a}_{\even},\bar{b}_{\even}$ are sorted and
  $|\bar{a}_{\odd}|_1 - |\bar{b}_{\odd}|_1 \leq 2$ and $|\bar{a}_{\even}|_1 - |\bar{b}_{\even}|_1 \leq 2$. It is easy to check that
  $|\bar{a}_{\odd}|_1 + |\bar{b}_{\odd}|_1 \leq \ceil{sa/2} + \ceil{sb/2} \leq \ceil{sa/2} + \floor{sb/2} + 1 = \ceil{s/4} + \floor{s/4} + 1 \leq \floor{s/2} + 1$, (for $s>1$).
  Similarly we get $|\bar{a}_{\even}|_1 + |\bar{b}_{\even}|_1 \leq \floor{sa/2} + \floor{sb/2} \leq \ceil{sa/2} + \floor{sb/2} = \ceil{s/4} + \floor{s/4} \leq \floor{s/2}$, (for $s>1$).
  Therefore from Observation \ref{obs:oec2} we get that $\bar{c}$ is top $|\bar{a}_{\odd}|_1 + |\bar{b}_{\odd}|_1$ full and $\bar{d}$ is top $|\bar{a}_{\even}|_1 + |\bar{b}_{\even}|_1$ full.
  Also since $\bar{a}_{\odd},\bar{b}_{\odd},\bar{a}_{\even},\bar{b}_{\even}$ were sorted, $\bar{c}$ and $\bar{d}$ are also sorted.
  Using Observation \ref{obs:ones} we get that $|\bar{c}|_1 - |\bar{d}|_1 = |\bar{a}_{\odd}|_1 + |\bar{b}_{\odd}|_1 - |\bar{a}_{\even}|_1 - |\bar{b}_{\even}|_1 \leq 2$.
  Also $|\bar{c}|_1 + |\bar{d}|_1 \leq \ceil{sa/2} + \ceil{sb/2} + \floor{sa/2} + \floor{sb/2} = sa + sb = s$. Therefore using Observation \ref{obs:oec2}
  we get that the sequence returned in line 11 is top $|\bar{c}|_1 + |\bar{d}|_1$ full and it is sorted, since $\bar{c}$ and $\bar{d}$ were sorted. This completes the
  inductive step and this case.

  {\em Ad. 3.} Let $k \geq k'=1$. We prove this case by induction on $s$.
  Assume $s > 1$ and $k_1 \geq 2$ (the other cases are covered by $(1)$). Consider two sub-cases:

  $(i)$. $|\bar{w}::\bar{x}::\bar{y}::\bar{z}|_1=0$. Therefore output of the algorithm is of the form $0^*$, so it is top $k'$ sorted.

  $(ii)$. $|\bar{w}::\bar{x}::\bar{y}::\bar{z}|_1>0$. Without loss of generality assume that $|\bar{w}|_1>0$,
  and because it is top $k'$ sorted, we get that $w_1=1$. Because of that $w^{\odd}_1=1$ and since $k_1\geq2$, we get that $sa < s$,
  so we can use induction hypothesis to get that $\bar{a}$ is top $\ceil{k'/2}=1$ sorted and $a_1=1$. From this we get that $a^{\odd}_1=1$,
  therefore $c_1=1$. From this we conclude that the output in line 11 is of the form $1\bool^*$, so it is top $k'$ sorted.
  
  {\em Ad. 4.} Let $k \geq k'=2$. Assume $s>2$ and $k_1 \geq 2$ (the other cases are covered by $(1)$ and $(2)$).
  From our assumptions $\ceil{k'/2} = \floor{k'/2}=1$. Therefore from case $(3)$ we get that $\bar{a}$ and $\bar{b}$
  are top 1 sorted. Notice that $\floor{k/2}+1 \geq 2$ and consider $\tuple{a_1,b_1},\tuple{a_2,b_2} \in \{\tuple{0,0},\tuple{1,0},\tuple{1,1}\}$.
  Using Observation \ref{obs:oec2}, in all possible cases $\bar{c}$ is top 2 sorted (regardless of the value of $a_3$) and $\bar{d}$ is top 1 sorted.
  We can now consider all possible cases for values of $c_1,c_2$ and $d_1$ and using Observation \ref{obs:oec2} again, conclude that the sequence returned
  in line 11 is top $2=k'$ sorted.
  
  We can finally prove the induction step. Assume $s \geq k \geq k' > 2$ and $k_1 \geq 2$. Consider three cases:
  
  \noindent (i) One of the input sequences (say $\bar{w}$) is top $k'$ full.
  Therefore $\bar{w}_{\odd}$ is top $\ceil{k'/2}$ full and $\bar{w}_{\even}$ is top $\floor{k'/2}$ full.
  Also since $s \geq k' >2$ and $k_1 \geq 2$ we get that $sa < s$, $sb < s$ and $\floor{k'/2}+2 \leq k'$. Therefore
  we can use induction hypothesis to get that $\bar{a}$ is top $\ceil{k'/2}$ full and $\bar{b}$ is top $\floor{k'/2}$ full.
  Therefore $\bar{a}_{\odd}$ is top $\ceil{\ceil{k'/2}/2}$ full,$\bar{a}_{\even}$ is top $\floor{\ceil{k'/2}/2}$
  full, $\bar{b}_{\odd}$ is top $\ceil{\floor{k'/2}/2}$ full and $\bar{b}_{\even}$ is top $\floor{\floor{k'/2}/2}$ full.
  Let $q = \ceil{\floor{k'/2}/2} + \ceil{\floor{k'/2}/2}$ and $r = \floor{\ceil{k'/2}/2} + \floor{\floor{k'/2}/2}$. Notice that
  $q \leq \floor{k'/2} + 1$ and $r \leq \floor{k'/2}$ (for $k' \geq 2$).
  Furthermore, $0 \leq \ceil{\ceil{k'/2}/2} - \ceil{\floor{k'/2}/2} \leq 1$ and $0 \leq \floor{\ceil{k'/2}/2} - \floor{\floor{k'/2}/2} \leq 1$.
  Hence we can use Observation \ref{obs:oec2} to get that $\bar{c}$ is top $q$ full
  and $\bar{d}$ is top $r$ full. It is easy to check that $0 \leq q - r \leq 2$ and $q+r=k'$,
  therefore we can use Observation \ref{obs:oec2} again and we can conclude that
  the last combine operation (line 11) will return top $k'$ sorted sequence, therefore this case is done.

  \noindent (ii) All input sequences are of the form $1^j0^*$, where $j<k'$, and $I = |\bar{w}::\bar{x}::\bar{y}::\bar{z}|_1 \leq k'$.
  Since $s \geq k' >2$ and $k_1 \geq 2$ we get that $sa < s$, $sb < s$ and $\floor{k'/2}+2 \leq k'$. Therefore
  using induction hypothesis we get that $\bar{a}$ is of the form $1^q0^*$ and $\bar{b}$ is of the form
  $1^r0^*$, where $q + r = I$. Furthermore, applying Observation \ref{obs:ones} to all input sequences we get that $0 \leq q - r \leq 4$.
  Which means that $\bar{a}_{\odd},\bar{a}_{\even},\bar{b}_{\odd},\bar{b}_{\even}$ are all sorted and of the form
  $1^{\ceil{q/2}}0^*$, $1^{\floor{q/2}}0^*$, $1^{\ceil{r/2}}0^*$, $1^{\floor{r/2}}0^*$ respectively, also
  $0 \leq |\bar{a}_{\odd}|_1 - |\bar{b}_{\odd}|_1 = \ceil{q/2} - \ceil{r/2} \leq 2$ and $0 \leq |\bar{a}_{\even}|_1 - |\bar{b}_{\even}|_1 = \floor{q/2} - \floor{r/2} \leq 2$.
  Using Observation \ref{obs:oec2} we get that $\bar{c}$ is top $\ceil{q/2} + \ceil{r/2}$ full and $\bar{d}$
  is top $\floor{q/2} + \floor{r/2}$ full.
  Moreover, $0 \leq |\bar{c}|_1 - |\bar{d}|_1 = \ceil{q/2} + \ceil{r/2} - \floor{q/2} - \floor{r/2} \leq 2$, therefore using
  Observation \ref{obs:oec2} again we conclude
  that the returned sequence (in line 11) is top $q+r$ full, and since $q+r = I < k'$,
  then the returned sequence is of the form $1^{I}0^*$, so it is also top $k'$ sorted.

  \noindent (iii) All input sequences are of the form $1^j0^*$,
  where $j<k'$, and $I = |\bar{w}::\bar{x}::\bar{y}::\bar{z}|_1 > k'$.
  Since $s \geq k' >2$ and $k_1 \geq 2$ we get that $sa < s$, $sb < s$ and $\floor{k'/2}+2 \leq k'$.
  Therefore we can use induction hypothesis to get that $\bar{a}$ is top $\floor{k'/2} + 2$
  sorted and $\bar{b}$ is top $\floor{k'/2}$ sorted.
  Applying Observation \ref{obs:ones} to all input sequences we get that $0 \leq |\bar{a}|_1 - |\bar{b}|_1 \leq 4$,
  in particular, $|\bar{a}|_1 \geq |\bar{b}|_1$.
  Let $q = |a_{1\dots\floor{k'/2} + 2}|_1$ and $r = |b_{1\dots\floor{k'/2}}|_1$.
  Furthermore, either $\bar{a}$ is top $\floor{k'/2} + 2$ full or $\bar{b}$ is top $\floor{k'/2}$ full.
  Otherwise $a_{\floor{k'/2} + 2}=0$ and $b_{\floor{k'/2}}=0$, which means that for each $j \geq \floor{k'/2} + 2$
  and $j' \geq \floor{k'/2}$, $a_j=0$ and $b_{j'}=0$. This implies that
  $I \leq q + r \leq (\floor{k'/2} + 1) + (\floor{k'/2} - 1) \leq k'$, a contradiction. Consider two cases:
  (A) $\bar{b}$ is top $\floor{k'/2}$ full. Since $|\bar{a}|_1 \geq |\bar{b}|_1$, then $\bar{a}$ is also top $\floor{k'/2}$ full.
  Furthermore, since $I > k'$ it is true that $\bar{a}$ is also top $\floor{k'/2} + 1$.
  This means that $q+r \geq k'$. For case (B) assume that $\bar{a}$ is top $\floor{k'/2} + 2$ full.
  If $a_{\floor{k'/2} + 3}=0$, then since $I>k'$ we get that $\bar{b}$ is top $\floor{k'/2}-1$ full. This implies that $q+r \geq k'$.
  On the othe hand, if $a_{\floor{k'/2} + 3}=1$, then since $|\bar{a}|_1 - |\bar{b}|_1 \leq 4$ we get that $\bar{b}$ is top $\floor{k'/2}-1$ full,
  which also implies that $q+r \geq k'$. Therefore in all the cases $q+r \geq k'$.
  Notice that $0 \leq \ceil{q/2} - \ceil{r/2} \leq 2$ and $0 \leq \floor{q/2} - \floor{r/2} \leq 2$.
  Using Observation \ref{obs:oec2} we get that $\bar{c}$ is top $\ceil{q/2} + \ceil{r/2}$ full and $\bar{d}$
  is top $\floor{q/2} + \floor{r/2}$ full. Moreover,
  $0 \leq \ceil{q/2} + \ceil{r/2} - \floor{q/2} - \floor{r/2} \leq 2$, therefore using
  Observation \ref{obs:oec2} again we conclude
  that the returned sequence (in line 11) is top $q+r$ full, and since $q+r \geq k'$,
  then the returned sequence is top $k'$ sorted.
\end{proof}

\section{Comparison of Selection networks} \label{sec:comp}

In this section we would like to estimate and compare the number of variables and clauses in encodings
based on our algorithms to other encoding based
on odd-even selection and pairwise selection proposed by Codish and Zazon-Ivry \cite{card2}.
We start with counting how many variables and clauses are needed in order to merge 4 sorted sequences returned
by recursive calls of 4-Wise Selection Network, Pairwise Cardinality Network,
4-Odd-Even Selection Network and Odd-Even Selection Network. Then, based on those values we
prove that overall number of variables and clauses are almost
always smaller when using 4-column encodings rather than 2-column encodings.
In the next section we show that the new encodings are not just smaller, but
also have better solving times in many benchmark instances.

To simplify the presentation we assume that $k \leq n/4$ and both $k$ and $n$
are the powers of $4$. We also omit the ceiling and floor
function in the calculations, when it is convenient for us.

\begin{definition}
  Let $n,k \in \nat$. For given (selection) network $f^n_k$ let $V(f^n_k)$ and $C(f^n_k)$ denote the number
  of variables and clauses used in the standard CNF encoding of $f^n_k$.
\end{definition}

We remind the reader that a single 2-comparator uses 2 auxiliary variables and 3 clauses.
In case of 3-comparator the numbers are 3 and 7, and
for 4-comparator it's 4 and 15.

We begin by briefly introducing the Pairwise Cardinality Networks \cite{card2} that can
be summed up by the following three-step recursive procedure (omitting the base case):

\begin{enumerate}
\item Split the input $\bar{x} \in \bool^n$ into two sequences $\bar{x}^1$ and $\bar{x}^2$
  of equal size and order pairs, such that for each $1 \leq i \leq n/2$, $\bar{x}^1_i \geq \bar{x}^2_i$.
\item Recursively select top $k$ sorted elements from $\bar{x}^1$ and top $k/2$ sorted
  elements from $\bar{x}^2$.
\item Merge the outputs of the previous step using a Pairwise Merging Network of order $(2k,k)$ and
  output the top $k$ elements.
\end{enumerate}

The details of the above construction can be found in \cite{card2}.
If we treat the splitting step as a network $pw\_split^n$,
the merging step as a network $pw\_merge^{2k}_k$, then the number of 2-comparators used in the
Pairwise Cardinality Network of order $(n,k)$ can be written as:

\begin{equation}
|pw\_sel^n_k| = \left\{ 
    \begin{array}{l l}
      |pw\_sel^{n/2}_k| + |pw\_sel^{n/2}_{k/2}|+ &  \\ 
      + |pw\_split^n| + |pw\_merge^{2k}_k| & \quad \text{if $k<n$}\\
      |oe\_sort^k| & \quad \text{if $k=n$} \\
      |max^n| & \quad \text{if $k=1$} \\
    \end{array} \right.
    \label{eq:pw}
\end{equation}

One can check that
Step 1 requires $|pw\_split^n| = n/2$ 2-comparators and Step 3 requires $|pw\_merge^{2k}_k|=k\log k - k + 1$
2-comparators. The schema of this network is presented in Figure \ref{fig:pw_sel}. We want to count
the number of variables and clauses used in merging 4 outputs of the recursive steps, therefore we expand the recursion
by one level (see Figure \ref{fig:pw_sel2}).

 \begin{figure}
  \centering
      \subfloat[one step\label{fig:pw_sel1}]{%
      \includegraphics[width=0.48\textwidth]{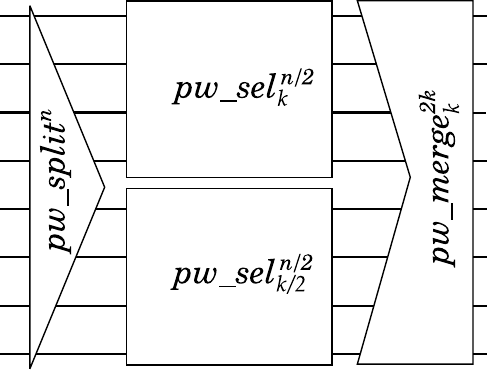}
      }
      ~
      \subfloat[two steps\label{fig:pw_sel2}]{%
      \includegraphics[width=0.48\textwidth]{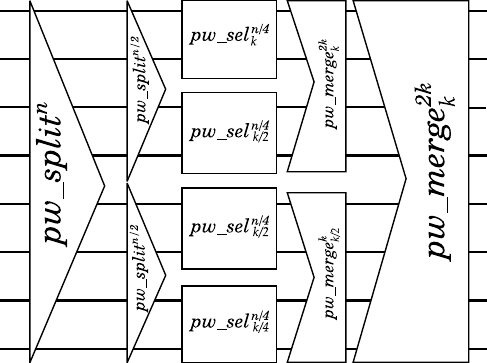}
      }
      \caption{Pairwise Cardinality Network}
    \label{fig:pw_sel}
  \end{figure}

 \begin{lemma}
   Let $k \in \nat$. Then $2V(pw\_merge^{2k}_k) + V(pw\_merge^{k}_{k/2}) = 5k\log k -6k +6$
   and $2C(pw\_merge^{2k}_k) + C(pw\_merge^{k}_{k/2}) =\frac{15}{2}k\log k - 9k + 9$.
 \end{lemma}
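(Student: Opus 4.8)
The plan is to exploit the fact that a pairwise merging network is assembled entirely from 2-comparators, so that the variable and clause counts of $pw\_merge$ are obtained from its comparator count by simple multiplication. Recall from the preliminaries that a single 2-comparator is encoded with $2$ auxiliary variables and $3$ clauses, and that we have already established $|pw\_merge^{2k}_k| = k\log k - k + 1$. Hence $V(pw\_merge^{2k}_k) = 2\,|pw\_merge^{2k}_k|$ and $C(pw\_merge^{2k}_k) = 3\,|pw\_merge^{2k}_k|$, that is, $V(pw\_merge^{2k}_k) = 2k\log k - 2k + 2$ and $C(pw\_merge^{2k}_k) = 3k\log k - 3k + 3$.

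First I would derive the cost of the smaller merger $pw\_merge^{k}_{k/2}$ by substituting $k \mapsto k/2$ into the comparator-count formula. The only point that requires a little care is the identity $\log(k/2) = \log k - 1$, which is legitimate because our standing assumption makes $k$ a power of $4$, so $k/2$ is an integer and $\log k$ is an integer. This gives $|pw\_merge^{k}_{k/2}| = (k/2)(\log k - 1) - (k/2) + 1 = (k/2)\log k - k + 1$, whence $V(pw\_merge^{k}_{k/2}) = k\log k - 2k + 2$ and $C(pw\_merge^{k}_{k/2}) = (3k/2)\log k - 3k + 3$.

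Finally I would assemble the two target expressions. For the variable count, $2V(pw\_merge^{2k}_k) + V(pw\_merge^{k}_{k/2}) = 2(2k\log k - 2k + 2) + (k\log k - 2k + 2) = 5k\log k - 6k + 6$, and for the clause count, $2C(pw\_merge^{2k}_k) + C(pw\_merge^{k}_{k/2}) = 2(3k\log k - 3k + 3) + ((3k/2)\log k - 3k + 3) = \frac{15}{2}k\log k - 9k + 9$, which are exactly the claimed values. There is no genuine obstacle here: the whole argument is a bookkeeping computation once the per-comparator costs and the comparator count $|pw\_merge^{2k}_k|$ are in hand; the only step worth stating explicitly is the logarithm identity used in the substitution $k \mapsto k/2$, since mishandling it would shift every coefficient.
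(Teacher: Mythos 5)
Your proof is correct and follows essentially the same route as the paper: both reduce the claim to the 2-comparator count $2|pw\_merge^{2k}_k| + |pw\_merge^{k}_{k/2}| = \frac{5}{2}k\log k - 3k + 3$ (via $|pw\_merge^{2k}_k| = k\log k - k + 1$ and the substitution $k \mapsto k/2$) and then multiply by the per-comparator costs of $2$ variables and $3$ clauses. Your write-up merely makes explicit the $\log(k/2) = \log k - 1$ step that the paper leaves as ``elementary calculation.''
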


 \begin{proof}
   The number of 2-comparator used is:
   
   \[
     2|pw\_merge^{2k}_k| + |pw\_merge^{k}_{k/2}| = \frac{5}{2}k \log k - 3k + 3
   \]

   Elementary calculation gives the desired result.
 \end{proof}

 We now count the number of variables and clauses for the 4-Wise Selection Network, again,
 disregarding the recursive steps.

 \begin{lemma}
   Let $k \in \nat$. Then $V(4w\_merge^{4k}_k) = k\log k + \frac{7}{6}k - 5$,
   $C(4w\_merge^{4k}_k) = \frac{15}{4}k \log k - \frac{33}{24}k - 10$.
 \end{lemma}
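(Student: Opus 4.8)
The plan is to reduce the statement to a count of the sorters of each arity that Network \ref{net:4mw_merge} instantiates, and then apply the per-gate costs recalled just above the lemma: a $\sort^2$ contributes $2$ variables and $3$ clauses, a $\sort^3$ contributes $3$ and $7$, and a $\sort^4$ contributes $4$ and $15$. Writing $N_2, N_3, N_4$ for the numbers of $2$-, $3$- and $4$-sorters used, we have $V(4w\_merge^{4k}_k) = 2N_2 + 3N_3 + 4N_4$ and $C(4w\_merge^{4k}_k) = 3N_2 + 7N_3 + 15N_4$, so the whole proof comes down to evaluating $N_2, N_3, N_4$ under the standing simplifying assumptions.

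Under those assumptions ($k$ a power of $4$, $k \le n/4$, floors and ceilings dropped) the four input columns have sizes $k_1 = k$, $k_2 = k/2$, $k_3 = k/3$, $k_4 = k/4$, the \textbf{while} loop makes $d = \log_2 k$ passes, and the shift on pass $i$ is $h_i = k/2^i$. First I would count the sorters produced inside the loop, treating its three inner \textbf{forall} loops (lines 7--10, 11--13 and 14) separately. For each pass $i$ the length of an inner loop is the corresponding $\min$ expression evaluated at $h_i$, and that range splits according to which guards ($j+3h_i \le k_1$, $j+2h_i \le k_2$, and so on) hold, which tells me how many of its iterations emit a $\sort^4$, a $\sort^3$ or a $\sort^2$. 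The key structural fact I would extract is that the first inner loop (the length-$\min(k_3-h_i,k_4)$ diagonal anchored on $\bar z$) is empty on the first pass, emits only $\sort^2$'s on the second, and from the third pass on emits exactly $\sort^4$'s over a range that quickly saturates at $k_4 = k/4$. Summed over the $d \approx \log_2 k$ passes this yields the dominant $(k/4)\log_2 k$ count of $4$-sorters, which after multiplication by $4$ and $15$ produces precisely the $k\log k$ and $\tfrac{15}{4}k\log k$ leading terms of $V$ and $C$. The two remaining inner loops have length bounded by $h_i$, so the geometric sum $\sum_i h_i \approx k$ shows that they contribute only $O(k)$ sorters, feeding the linear term.

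After the loop I would account for the correction phase (lines 15--18 of Network \ref{net:4mw_merge}): the two unconditional \textbf{forall} loops have length $\min(k_1-2,k_4)$ and $\min(k_2-1,k_4)$, both $\Theta(k)$, and emit a fixed pattern of $\sort^2$ and $\sort^4$ operations, while the two guarded lines 17--18 add at most one $\sort^3$ and one $\sort^2$; all of these feed only the linear and constant terms. Collecting the three counts $N_m$ as closed-form sums and substituting into the two cost expressions then reduces the lemma to evaluating a handful of geometric and arithmetic series.

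The main obstacle will be this lower-order bookkeeping rather than the leading term. The coefficient of $k$ (namely $\tfrac76 k$ in $V$ and $-\tfrac{33}{24}k$ in $C$) and the additive constants ($-5$ and $-10$) are fixed entirely by boundary effects: the early passes where $k_3 - h_i$ is negative or $k/2 - 2h_i \le 0$ so the diagonal degenerates, the exact pass at which the diagonal switches from $\sort^2$'s to $\sort^4$'s, the partial saturation of its range, and the constant corrections of lines 15--18. The leading term is immediate once the diagonal's length is identified, but isolating every linear and constant contribution without double counting across the three inner loops is where the care lies; I would verify the final closed forms against a small instance (say $k = 4$) before trusting the general computation.
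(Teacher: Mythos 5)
Your plan is methodologically identical to the paper's proof: a per-iteration census of the $2$-, $3$- and $4$-sorters emitted by the three inner loops of Network \ref{net:4mw_merge} (first loop empty at $i=1$, emitting $k/12$ $\sort^2$'s at $i=2$ and $5k/24$ $\sort^4$'s at $i=3$, saturating at $k/4$ $\sort^4$'s for $i\ge 4$; the other two loops contributing $h_i$ sorters per pass), followed by a weighted sum with the per-sorter costs $2/3$, $3/7$ and $4/15$ for variables/clauses --- this is exactly the content of the paper's Table \ref{tbl:4w_comp}. The one substantive divergence is that you also charge for the correction phase. The paper's tally stops at the \textbf{while} loop and never counts lines 15--18, which under the standing assumptions ($k_1=k$, $k_2=k/2$, $k_4=k/4$, $k$ a power of $4$) emit a further $k/4$ $\sort^2$'s and $k/4$ $\sort^4$'s. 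This is not lower-order noise: it adds $\tfrac{3}{2}k$ to $V$ and $\tfrac{9}{2}k$ to $C$, and since the lemma's stated clause count $\tfrac{15}{4}k\log k-\tfrac{33}{24}k-10$ is reproduced \emph{exactly} by the while-loop-only census, your more complete count cannot arrive at the stated formulas. (For what it is worth, the stated variable count is not consistent even with the paper's own table, which yields $k\log k+\tfrac{3}{2}k-5$ rather than $k\log k+\tfrac{7}{6}k-5$.) So the approach is the right one, but expect your $k=4$ cross-check to disagree with the lemma unless you silently drop lines 15--18 from the accounting, as the paper's proof implicitly does.
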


 \begin{proof}
   We separately count the number of 2, 3 and 4-comparators used in the merger (Network \ref{net:4mw_merge}).
   By the assumption that $k \leq n/4$ we get $|\bar{w}|=k$, $|\bar{x}|=k/2$, $|\bar{y}|=k/3$,
   $|\bar{z}|=k/4$ and $h_1=k/2$. We will consider iterations 1, 2 and 3 separately and then provide
   the formulas for the number of comparators for iterations 4 and beyond. Results are summarized in
   Table \ref{tbl:4w_comp}.

   In the first iteration $h_1 = k/2$, which means that sets of $j$-values
   in the first two inner loops (lines 7--10 and 11--13) are empty. On the other hand, $1 \leq j \leq \min(k_1 - h_1, k_2, h_1) = k/2$
   (line 14), therefore $k/2$ 2-comparators are used. In fact it is
   true for every iteration $i$ that $\min(k_1 - h_i, k_2, h_i) = h_i = k/2^i$. We note this fact in column 3 of Table \ref{tbl:4w_comp}
   (the first term of each expression). For the next iterations we only need to consider the
   first and second inner loops of the algorithm.

   In the second iteration $h_2 = k/4$, therefore in the first inner loop only the condition in line 10 will hold, and only
   when $j \leq k/3 - k/4 = k/12$, hence $k/12$ 2-comparators are used. In the second inner loop the $j$-values satisfy condition
   $1 \leq j \leq \min(k_2 - h_2,k_3,h_2)=k/4$. Therefore the condition in line 12
   will hold for each $j \leq k/4$, hence $k/4$ 3-comparators
   are used. In fact it is true for every iteration $i \geq 2$ that $\min(k_2 - h_i,k_3,h_i)=k/2^i$,
   so the condition in line 12 will be true for $1 \leq j \leq k/2^i$.
   We note this fact in column 4 of Table \ref{tbl:4w_comp}. For the next iterations we only need to consider the
   first inner loop of the algorithm.

   In the third iteration $h_3 = k/8$, therefore $j$-values in the first inner loop satisfy the condition
   $1 \leq j \leq \min(k_3 - h_3,k_4) = 5k/24$ and
   the condition in line 8 will hold for each $j \leq 5k/24$, hence
   $5k/24$ 4-comparators are used.
   
   From the forth iteration $h_i \leq k/16$, therefore for every $1 \leq j \leq \min(k_3 - h_i,k_4) = k/4$ the condition in line 8
   holds. Therefore $k/4$ 4-comparators are used.

   What's left is to sum 2,3 and 4-comparators throughout $\log k$ iterations of the algorithm.
   The results are are presented in Table \ref{tbl:4w_comp}.
   Elementary calculation gives the desired result.
   
   \begin{table}[t] \setlength{\tabcolsep}{4pt}
     \centering
     \bgroup
     \def\arraystretch{1.5}
     \begin{tabular}{ c | c || c | c | c | } \hline
       \multicolumn{1}{|c|}{$i$} & $h_i$ & \#2-comparators & \#3-comparators & \#4-comparators \\ \hline
       \multicolumn{1}{|c|}{$1$} & $\frac{k}{2}$ & $\frac{k}{2}$ & $0$ & $0$ \\ \hline
       \multicolumn{1}{|c|}{$2$} & $\frac{k}{4}$ & $\frac{k}{4} + \frac{k}{12}$ & $\frac{k}{4}$ & $0$ \\ \hline
       \multicolumn{1}{|c|}{$3$} & $\frac{k}{8}$ & $\frac{k}{8}$ & $\frac{k}{8}$ & $\frac{5k}{24}$ \\ \hline
       \multicolumn{1}{|c|}{$\geq4$} & $\frac{k}{2^i}$ & $\frac{k}{2^i}$ & $\frac{k}{2^i}$ & $\frac{k}{4}$ \\ \hline
       & Sum & $\frac{13}{12}k - 1$ & $\frac{k}{2} - 1$ & $\frac{1}{4} k \log k - \frac{13k}{24}$ \\ \cline{2-5} 
     \end{tabular}
     \def\abovecaptionskip{10pt}
     \caption{Number of comparators used in different iterations of Network \ref{net:4mw_merge}}\label{tbl:4w_comp}
     \egroup
   \end{table}
\end{proof}

 Let $V_{PCN}=2V(pw\_merge^{2k}_k) + V(pw\_merge^{k}_{k/2})$, $C_{PCN}=2C(pw\_merge^{2k}_k) + C(pw\_merge^{k}_{k/2})$,
 $V_{4W}=V(4w\_merge^{4k}_k)$ and $C_{4W}=C(4w\_merge^{4k}_k)$. Then the following corollary shows that
 using 4-column pairwise merging networks produces smaller encodings than their
 2-column counterpart.
 
 \begin{corollary}\label{crly:pw}
   Let $k \in \nat$ such that $k \geq 4$. Then $V_{4W} < V_{PCN}$ and $C_{4W} < C_{PCN}$.
 \end{corollary}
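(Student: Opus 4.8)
The plan is to reduce both inequalities to elementary estimates on the closed-form expressions already established in the two preceding lemmas. Substituting the stated values of $V_{PCN}, V_{4W}, C_{PCN}$ and $C_{4W}$, I would first form the two differences
\[
V_{PCN} - V_{4W} = 4k\log k - \tfrac{43}{6}k + 11, \qquad
C_{PCN} - C_{4W} = \tfrac{15}{4}k\log k - \tfrac{61}{8}k + 19,
\]
so that the goal becomes showing that each of these is strictly positive for every integer $k \ge 4$ (here $\log$ denotes $\log_2$, as throughout the paper).

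For the variable inequality the argument is immediate: since $k \ge 4$ gives $\log k \ge 2$, the leading term satisfies $4k\log k \ge 8k$, which already dominates $\tfrac{43}{6}k < 8k$, so $V_{PCN} - V_{4W} > 8k - \tfrac{43}{6}k + 11 > 0$. No boundary analysis is needed here.

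The clause inequality is the delicate one and is the only real obstacle. At the smallest admissible value $k = 4$ the leading term $\tfrac{15}{4}\cdot 4\cdot 2 = 30$ barely exceeds $\tfrac{61}{8}\cdot 4 = 30.5$, so the leading term alone does not win; positivity at the boundary relies on the constant $+19$. I would therefore handle $k = 4$ by direct evaluation, obtaining $C_{PCN} - C_{4W} = 30 - 30.5 + 19 = 18.5 > 0$, and then establish monotonicity of $g(k) = \tfrac{15}{4}k\log k - \tfrac{61}{8}k + 19$ on $[4,\infty)$. Differentiating (with $\log k = \ln k/\ln 2$) gives $g'(k) = \tfrac{15}{4}\bigl(\log k + \tfrac{1}{\ln 2}\bigr) - \tfrac{61}{8}$, which is already positive at $k = 4$ and increasing in $k$; hence $g$ is strictly increasing for $k \ge 4$ and its positivity at $k = 4$ propagates to all larger $k$. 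Alternatively, for $k \ge 8$ one has $\log k \ge 3$, so $\tfrac{15}{4}k\log k \ge \tfrac{45}{4}k$ comfortably dominates $\tfrac{61}{8}k$, and only $k = 4$ need be checked by hand.

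Combining the two estimates yields $V_{4W} < V_{PCN}$ and $C_{4W} < C_{PCN}$ for all $k \ge 4$, which is the claim. The whole proof is routine arithmetic once the two differences are written down; the only point requiring genuine care is that the clause comparison is tight at $k = 4$, so the constant term together with a short monotonicity (or case-split) argument cannot be skipped.
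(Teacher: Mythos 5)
Your proposal is correct and follows exactly the route the paper intends: the corollary is stated as an immediate arithmetic consequence of the two preceding lemmas, and your differences $V_{PCN}-V_{4W}=4k\log k-\tfrac{43}{6}k+11$ and $C_{PCN}-C_{4W}=\tfrac{15}{4}k\log k-\tfrac{61}{8}k+19$ match the closed forms given there. Your extra care at the boundary $k=4$ for the clause count (where the leading term alone does not dominate) is a sensible addition to what the paper leaves implicit.
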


 \noindent For the following theorem, note that $V(4w\_split^n)=V(pw\_split^n)=n$.
 
 \begin{theorem}
    Let $n,k \in \nat$ such that $1\leq k \leq n/4$ and $n$ and $k$ are both powers of 4. Then
    $V(4w\_sel^n_k) \leq V(pw\_sel^n_k)$.
  \end{theorem}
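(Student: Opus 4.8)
The plan is to prove the inequality by induction on $n$ over powers of $4$, comparing the two constructions level by level. First I would write out the one-level recurrence for the $4$-Wise network. Since $k \le n/4$ and $n$ is a power of $4$, every column has size $n/4$, so the row-sorting step consists of exactly $n/4$ $4$-sorters and the $i$-th column feeds a recursive call $4w\_sel^{n/4}_{k/i}$. Using the note that $V(4w\_split^n)=n$ this gives
\[
V(4w\_sel^n_k) = \sum_{i=1}^{4} V(4w\_sel^{n/4}_{k/i}) \; + \; n \; + \; V_{4W}.
\]
To make the recursions comparable I would unroll the pairwise recurrence \eqref{eq:pw} by \emph{two} levels, so that it too branches into four subproblems of size $n/4$:
\[
V(pw\_sel^n_k) = V(pw\_sel^{n/4}_k) + 2V(pw\_sel^{n/4}_{k/2}) + V(pw\_sel^{n/4}_{k/4}) \; + \; 2n \; + \; V_{PCN},
\]
where the split cost $2n = V(pw\_split^n)+2V(pw\_split^{n/2})$ and the merge cost $V_{PCN}=2V(pw\_merge^{2k}_k)+V(pw\_merge^k_{k/2})$ collect the contributions of the top two levels. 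The problem then reduces to comparing the three matching components: split cost, merge cost, and the sum of the four recursive calls.

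The split and merge components are immediate. For the split, the $4$-Wise construction pays $n$ against the pairwise $2n$, saving exactly $n$. For the merge, Corollary~\ref{crly:pw} gives $V_{4W} < V_{PCN}$, so the $4$-Wise merger is strictly cheaper. Hence it suffices to show that $\sum_{i=1}^{4} V(4w\_sel^{n/4}_{k/i})$ does not exceed $V(pw\_sel^{n/4}_k) + 2V(pw\_sel^{n/4}_{k/2}) + V(pw\_sel^{n/4}_{k/4})$; any residual discrepancy there is absorbed with room to spare by the guaranteed savings $n + (V_{PCN}-V_{4W}) > 0$. Three of the four recursive terms align exactly with the pairwise terms (the calls selecting $k$, $k/2$ and $k/4$), so the induction hypothesis bounds each of these by its pairwise counterpart, and the base case $k=1$, where both networks are $max^n$, holds with equality.

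The one term without an exact partner is the third column's call $V(4w\_sel^{n/4}_{k/3})$, which must be matched against the \emph{second} copy of $V(pw\_sel^{n/4}_{k/2})$. Here I would use that $k/3 \le k/2$ together with monotonicity of the variable count in the selection parameter: selecting fewer elements never needs more variables, so $V(4w\_sel^{n/4}_{k/3}) \le V(4w\_sel^{n/4}_{k/2})$, and the induction hypothesis bounds the latter by $V(pw\_sel^{n/4}_{k/2})$. I expect this mismatch, rather than the split or merge bookkeeping, to be the main obstacle, for two reasons: monotonicity of $V$ in the selection parameter has to be established for the $4$-Wise family, and, more delicately, the subparameters $k/2$, $k/3$, $k/4$ are not powers of $4$ and the relation $k \le n/4$ is not preserved by the recursion. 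The clean way around this is to prove the stronger statement for \emph{all} $k \le n/4$ (retaining the paper's convention of dropping floors and ceilings), so that the induction hypothesis and the monotonicity bound apply to the non-power-of-$4$ subparameters directly; the power-of-$4$ restriction in the statement then serves only to license the closed-form merge counts underlying Corollary~\ref{crly:pw}.
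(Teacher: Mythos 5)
Your proposal follows essentially the same route as the paper's proof: induction with the pairwise recurrence unrolled two levels, the split costs compared directly ($n$ vs.\ $2n$), the mergers compared via Corollary~\ref{crly:pw}, and the mismatched $k/3$ call absorbed into a $k/2$ term by monotonicity of the variable count in the selection parameter. The only cosmetic difference is that you apply monotonicity on the $4$-wise side before invoking the induction hypothesis, whereas the paper applies the induction hypothesis to all four calls first and then uses $V(pw\_sel^{n/4}_{k/3}) \leq V(pw\_sel^{n/4}_{k/2})$; your explicit flagging of the non-power-of-$4$ subparameters is a legitimate point that the paper silently glosses over.
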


  \begin{proof}
    Proof is by induction. For the base case, consider $1=k<n$. It follows that
    $V(4w\_sel^n_k)=V(pw\_sel^n_k)=V(max^n)$. For the induction step assume that for each
    $(n',k') \prec (n,k)$ (in lexicographical order), where $k \geq 4$, the inequality holds, we get:

    \[
    \begin{tabular}{l r}
      \multicolumn{2}{l}{$V(4w\_sel^n_k)= V(4w\_split^n) + \sum_{1 \leq i \leq 4} V(4w\_sel^{n/4}_{k/i}) + V(4w\_merge^{4k}_k)$} \\
      & \small{\bf (by the construction of $4w\_sel$)} \\
      \multicolumn{2}{l}{$\leq V(4w\_split^n) + \sum_{1 \leq i \leq 4} V(pw\_sel^{n/4}_{k/i}) + V(4w\_merge^{4k}_k)$} \\
      & \small{\bf (by the induction hypothesis)} \\
      \multicolumn{2}{l}{$\leq V(pw\_split^n) + 2V(pw\_split^{n/2}) + \sum_{1 \leq i \leq 4} V(pw\_sel^{n/4}_{k/i})$} \\
      \multicolumn{2}{l}{$\quad + 2V(pw\_merge^{2k}_k) + V(pw\_merge^{k}_{k/2})$} \\
      & \small{\bf (by Corollary \ref{crly:pw} and because $V(4w\_split^n) < V(pw\_split^n) + 2V(pw\_split^{n/2})$)} \\
      \multicolumn{2}{l}{$\leq V(pw\_split^n) + 2V(pw\_split^{n/2}) + V(pw\_sel^{n/4}_{k})+ 2V(pw\_sel^{n/4}_{k/2})$} \\
      \multicolumn{2}{l}{$\quad + V(pw\_sel^{n/4}_{k/4}) + 2V(pw\_merge^{2k}_k) + V(pw\_merge^{k}_{k/2})$} \\
      & \small{\bf (because $V(pw\_sel^{n/4}_{k/3}) \leq V(pw\_sel^{n/4}_{k/2})$)} \\
      \multicolumn{2}{l}{$= V(pw\_sel^n_k)$} \\
      & \small{\bf (by the construction of $pw\_sel$)} \\
    \end{tabular}
    \]
  \end{proof}
 
  Now we will count how many variables and clauses are needed in order to merge 4 sorted sequences
  returned by recursive calls of Odd-Even Selection Network and 4-Odd-Even Selection Network, respectively.
  Two-column selection network using odd-even approach is presented in \cite{card2}.
  We briefly introduce this network with the following three-step recursive procedure (omitting the base case):

  \begin{enumerate}
    \item Split the input $\bar{x} \in \bool^n$ into two sequences $\bar{x}^1 = \bar{x}_{\odd}$
      and $\bar{x}^2 = \bar{x}_{\even}$.
    \item Recursively select top $k$ sorted elements from $\bar{x}^1$ and top $k$ sorted
      elements from $\bar{x}^2$.
    \item Merge the outputs of the previous step using a Odd-Even Merging Network of order $(2k,k)$ and
      output the top $k$ elements.
  \end{enumerate}

  If we treat the merging step as a network $oe\_merge^{2k}_k$, then the number of 2-comparators used in the
  Odd-Even Selection Network of order $(n,k)$ can be written as:

  \begin{equation}
    |2oe\_sel^n_k| = \left\{ 
    \begin{array}{l l}
      2|2oe\_sel^{n/2}_k| + |2oe\_merge^{2k}_k| & \quad \text{if $k<n$}\\
      |oe\_sort^k| & \quad \text{if $k=n$} \\
      |max^n| & \quad \text{if $k=1$} \\
    \end{array} \right.
    \label{eq:oe}
  \end{equation}
  
  One can check that Step 3 requires $|2oe\_merge^{2k}_k|=k\log k + 1$ 2-comparators.
  Overall number of 2-comparators was already calculated in \cite{card2}.

  \begin{lemma}[\cite{card2}]\label{lma:2oe_comp}
    For $k \leq n$ (both powers of two) $2oe\_sel^n_k$
    uses $\frac{1}{4}n\log^2 k + \frac{3}{4}n \log k - k \log k$
    2-comparators.
  \end{lemma}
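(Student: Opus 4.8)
The plan is to prove Lemma~\ref{lma:2oe_comp} by solving the recurrence \eqref{eq:oe} in closed form through unrolling, using the already-recorded cost $|2oe\_merge^{2k}_k| = k\log k + 1$ of a single merger. Write $n = 2^a$ and $k = 2^b$ with $b \le a$. As long as the current width strictly exceeds $k$, the first branch of \eqref{eq:oe} applies, so each level replaces one instance of $2oe\_sel^{n'}_k$ by two instances of $2oe\_sel^{n'/2}_k$ plus one merger. Hence after $j$ levels there are $2^j$ subproblems of width $n/2^j$, each already having contributed a merger of cost $k\log k + 1$. The recursion bottoms out after exactly $a - b = \log n - \log k$ levels, at which point the width equals $k$ and the $k=n$ branch applies, yielding $n/k$ copies of the base sorter $oe\_sort^k$.

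First I would record the base-case cost. Since $oe\_sort^k$ is Batcher's odd-even sorting network on $k = 2^b$ inputs, its comparator count is the standard value $|oe\_sort^k| = \tfrac14 k\log^2 k - \tfrac14 k\log k + k - 1$, which itself follows from the same unrolling technique applied to the sort's own recurrence $B(2m) = 2B(m) + M(2m)$ with odd-even merge cost $M(2m) = m\log(2m) - m + 1$. I would then assemble the whole network as
\[
  |2oe\_sel^n_k| \;=\; \frac{n}{k}\,|oe\_sort^k| \;+\; (k\log k + 1)\!\left(\frac{n}{k} - 1\right),
\]
where the second term collects the mergers from all levels via the geometric sum $\sum_{j=0}^{a-b-1} 2^j = \tfrac{n}{k} - 1$, and the first term is the total cost of the $n/k$ base sorters.

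The remaining work is purely algebraic: substitute the value of $|oe\_sort^k|$ and simplify. The $\tfrac1k$-scaled quadratic and linear-in-$\log$ terms of the base sorter contribute $\tfrac14 n\log^2 k - \tfrac14 n\log k$, while the merger term supplies $n\log k - k\log k$; collecting the $\log k$ coefficients turns $-\tfrac14 n\log k + n\log k$ into $\tfrac34 n\log k$, producing the claimed terms $\tfrac14 n\log^2 k + \tfrac34 n\log k - k\log k$. The degenerate endpoints serve as consistency checks: for $k=1$ the network is $max^n$ and $\log 1 = 0$ collapses the formula, while for $k=n$ the merger sum is empty and the expression reduces to $|oe\_sort^k|$.

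The main obstacle I anticipate is not conceptual but the bookkeeping of the lower-order additive terms. Carrying every term through the unrolling honestly, the exact count turns out to be $\tfrac14 n\log^2 k + \tfrac34 n\log k - k\log k + (n-1)$, so the statement records precisely the three dominant terms and suppresses the additive $O(n)$ remainder (which is also what the $k=1$ and $k=n$ checks reveal, since $max^n$ and $oe\_sort^k$ each exceed the formula by exactly that linear slack). The cleanest route is therefore to keep all contributions explicit inside the geometric sum, cancel the matching $\pm\,n/k$ terms arising from the base-sorter's constant, and only identify the retained leading part at the very end, in line with the approximating conventions already adopted in this section.
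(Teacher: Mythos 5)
Your derivation is correct, but note that the paper itself gives no proof of this lemma: it is imported verbatim from \cite{card2} (``Overall number of 2-comparators was already calculated in \cite{card2}''), so there is nothing internal to compare against. Your unrolling of recurrence \eqref{eq:oe} is the natural way to establish it: with $|oe\_sort^k|=\tfrac14 k\log^2 k-\tfrac14 k\log k+k-1$ and $|2oe\_merge^{2k}_k|=k\log k+1$, the exact solution is indeed $\tfrac{n}{k}|oe\_sort^k|+(k\log k+1)(\tfrac{n}{k}-1)=\tfrac14 n\log^2 k+\tfrac34 n\log k-k\log k+(n-1)$, which I have checked on small cases (e.g.\ $n=4,k=2$ gives $5$, and $n=8,k=2$ gives $13$, matching the recurrence). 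So the stated formula is off by the additive $n-1$ you identify; equivalently, it is exact if one drops the ``$+1$'' from the merger cost and the ``$+k-1$'' from the base sorter cost, which is consistent with the section's declared convention of suppressing low-order terms. Your observation that the $k=n$ and $k=1$ endpoints expose exactly this linear slack is a good sanity check. The only caveat is presentational: since the lemma is attributed to \cite{card2}, a careful write-up should either state the exact count with the $n-1$ term or make explicit that the formula is to be read modulo an additive $O(n)$ term, as you do in your final paragraph.
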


  The schema of this network is presented in Figure \ref{fig:oe_sel}.
  Similarly to the previous case, in order to count the number of
  comparators used in merging 4 sorted sequences we need
  to expand the recursive step by one level (see Figure \ref{fig:oe_sel2}).

  \begin{figure}
    \centering
    \subfloat[one step\label{fig:oe_sel1}]{%
      \includegraphics[width=0.48\textwidth]{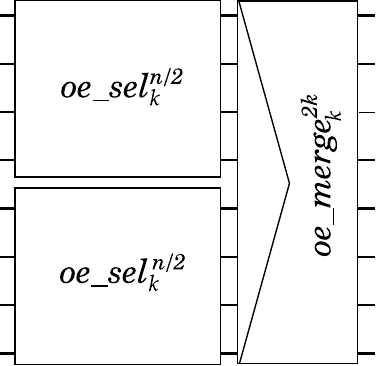}
    }
    ~
    \subfloat[two steps\label{fig:oe_sel2}]{%
      \includegraphics[width=0.48\textwidth]{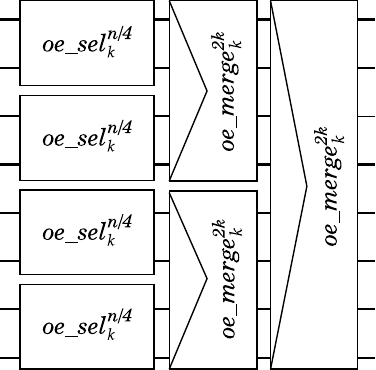}
    }
    \caption{Pairwise Cardinality Network}
    \label{fig:oe_sel}
  \end{figure}

  \begin{lemma}\label{lma:2oe_vars_cls}
   $3V(2oe\_merge^{2k}_k) = 6 k \log k + 6$, $3C(2oe\_merge^{2k}_k) = 9 k \log k + 9$.
  \end{lemma}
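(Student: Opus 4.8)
The plan is to reduce both identities to the single comparator count stated just above this lemma, namely $|2oe\_merge^{2k}_k| = k\log k + 1$, and then to multiply through by the fixed per-comparator encoding costs. First I would recall the two facts I am allowed to assume: every $2oe\_merge^{2k}_k$ network is built entirely out of $2$-comparators, and there are exactly $k\log k + 1$ of them; and each $2$-comparator, in the standard encoding described at the end of Section \ref{sec:pre}, contributes exactly $2$ auxiliary variables and $3$ clauses.

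From these two inputs the per-merger costs follow immediately. Since the variables and clauses of the encoding are additive over the comparators of the network, and the encoding introduces no further auxiliary variables or clauses, I would write
\[
  V(2oe\_merge^{2k}_k) = 2\,|2oe\_merge^{2k}_k| = 2(k\log k + 1),
\]
\[
  C(2oe\_merge^{2k}_k) = 3\,|2oe\_merge^{2k}_k| = 3(k\log k + 1).
\]
Finally I would multiply each equation by $3$. The factor $3$ is exactly the number of copies of the order-$(2k,k)$ odd-even merger that appear when the selection recursion is unfolded by one level: one top-level merger together with the two mergers inside the two recursive calls, each of order $(2k,k)$ because both halves select the top $k$ elements. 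This yields $3V(2oe\_merge^{2k}_k) = 6k\log k + 6$ and $3C(2oe\_merge^{2k}_k) = 9k\log k + 9$, as claimed.

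There is no genuine obstacle here: the argument is a one-line substitution followed by elementary arithmetic, entirely analogous to the corresponding step for the pairwise network. The only point worth a sentence of care is the bookkeeping of the encoding cost — confirming that the standard $m$-sorter encoding specializes, for $m=2$, to precisely $2$ variables and $2^2-1 = 3$ clauses, so that nothing is silently dropped from the totals and the additivity over comparators used above is exact.
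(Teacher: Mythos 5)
Your proposal is correct and matches the paper's own proof: both rest on the comparator count $|2oe\_merge^{2k}_k| = k\log k + 1$ stated just before the lemma, multiply by three copies of the merger to get $3k\log k + 3$ two-comparators, and then apply the per-comparator costs of $2$ variables and $3$ clauses. The paper compresses this into one displayed line plus ``elementary calculation,'' so your version is just a more explicit rendering of the same argument.
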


  \begin{proof}
    The number of 2-comparator used is:
   
   \[
     3|2oe\_merge^{2k}_k| = 3 k \log k + 3
   \]

   Elementary calculation gives the desired result.
  \end{proof}
  
  Now we do the same counting for our 4-Odd-Even Selection Network based on
  Networks \ref{net:oe_sel} and \ref{net:4oe_merge}.
  
  \begin{lemma}\label{lma:4oe_vars_cls}
    Let $k \in \nat$. Then $V(4oe\_merge^{4k}_k) \leq 2k \log k + 16k - 4 \log k - 6$
    and $C(4oe\_merge^{4k}_k) \leq 3k \log k + 33k - 6 \log k - 9$.
  \end{lemma}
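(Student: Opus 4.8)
The plan is to read a recurrence for the number of sorters used by $4oe\_merge^{4k}_k$ directly off Network \ref{net:4oe_merge}, solve it, and only at the very end translate sorter counts into variables and clauses using that a $2$-sorter costs $2$ variables and $3$ clauses while a $4$-sorter costs $4$ variables and $15$ clauses. I work throughout in the balanced regime fixed in this section, so the four input columns $\bar{w},\bar{x},\bar{y},\bar{z}$ all have length $k$ and $s=4k$; this forces $sa=sb=2k$, so each of the two recursive calls in lines 7--8 merges four columns of length $k/2$, and the recursion is a complete binary tree of depth $\log k$ with $k$ leaves.

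First I would fix the cost of the three $oe\_combine$ calls. By Network \ref{net:4oe_combine} a call $oe\_combine^s_k$ uses exactly $\floor{\min(k,|\bar z|-1)/2}$ two-comparators, where $|\bar z|=|\bar x|+|\bar y|$. Since $|\bar a|=|\bar b|=2k$, the halves fed into lines 9--10 have length $k$, so each of those combines costs at most about $k/4$ comparators, while the final combine in line 11 joins two length-$2k$ sequences and costs about $k/2$. Thus the three combines at the top level contribute roughly $k$ two-comparators.

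Next I would set up $T(k)=|4oe\_merge^{4k}_k|$, meaning four equal columns of length $k$ and order $k$. The even branch in line 8 has order $\floor{k/2}=k/2$ on columns of length $k/2$, so it is \emph{exactly} $T(k/2)$. The obstacle I expect to dominate the argument is the odd branch in line 7, whose order is $\floor{k/2}+2$ rather than $\floor{k/2}$: an order that exceeds the column length by $c$ produces, one level down, an odd sub-call whose slack obeys $c\mapsto\floor{c/2}+2$, which converges to a small constant (bounded by $4$). The key auxiliary fact to isolate is therefore that along every root-to-leaf path the order exceeds the column length by at most a constant; once this is in place, the combine cost at each node differs from its clean-halving value by only an additive constant, so summed over the $2k-1$ nodes the total correction is $O(k)$ and is absorbed into the linear slack $16k$ of the stated bound.

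With the slack controlled, the recurrence reads $T(k)\le 2T(k/2)+k+O(1)$ until the columns shrink to length $1$ at depth $\log k$, where line 5 applies a single $\sort^{\le 4}$. Summing the combine costs over the $\log k$ levels ($2^d$ nodes of order about $k/2^d$ at depth $d$, hence about $k$ per level) yields the $k\log k$ leading term together with the negative $\log k$ correction and the linear terms, while the $k$ leaves contribute $k$ four-sorters. Converting, the two-comparators give $2(k\log k+O(k))$ variables and $3(k\log k+O(k))$ clauses, and the $k$ four-sorters add $4k$ variables and $15k$ clauses; the extra $15k-\frac{3}{2}\cdot 4k=9k$ clauses contributed disproportionately by the four-sorters are precisely what pushes the clause bound to $3k\log k+33k-\cdots$ while the variable bound remains $2k\log k+16k-\cdots$. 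Carrying the lower-order constants honestly through this bookkeeping gives the two claimed inequalities.
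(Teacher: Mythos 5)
Your proposal follows essentially the same route as the paper's proof: count the $4$-sorters separately (exactly $k$ of them, one per leaf of the recursion tree, via $A(4)=1$, $A(s)=2A(s/2)$), set up a recurrence for the $2$-comparators contributed by the three $oe\_combine$ calls at each node (about $k+1$ per node of column length $k$), and convert to variables and clauses at the end using the per-sorter costs. The only difference is stylistic: the paper carries the order parameter $\floor{k/2}+2$ explicitly in its recurrence $B(s,k)$ and verifies a precise closed-form bound by induction, whereas you bound the drift of the order parameter by a constant ($c\mapsto\floor{c/2}+2$ staying at most $4$) and absorb it into the linear term — which is sound, but note that you leave the final constant bookkeeping (needed to land exactly on $16k-4\log k-6$ and $33k-6\log k-9$, which matter for Corollary \ref{crly:oe}) as an unchecked claim, and that is the one step the paper actually carries out.
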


  \begin{proof}
    We separately count the number of 2 and 4-comparators used.
    We count the comparators in the merger (Network \ref{net:4oe_merge}).
    By the assumption that $k \leq n/4$ and that $k$ is the power of 4 we get
    that $|\bar{w}|=|\bar{x}|=|\bar{y}|=|\bar{z}|=k$. This implies that $s=4k$ and
    $sa=sb=2k$.

    In the algorithm 4-comparators are only used in the base case (line 3--5). Furthermore
    number of 4-comparators is only dependent on the variable $s$. The solution to the following recurrence
    gives the sought number: $\{A(4) = 1; A(s) = 2A(s/2), \text{for } s>4\}$, which is equal to $s/4 = k$.

    We now count the number of 2-comparators used. The combine operations in lines 9--11
    uses $(2k/2 + 1 + k)/2 \leq k + 1$ 2-comparators. The total number of 2-comparators is then
    a solution to the following recurrence:

    \[
      B(s,k) =
        \begin{cases}
          0                     & \quad \text{if } s \leq 4 \\
          B(s/2,k/2+2)+B(s/2,k/2)+k+1  & \quad \text{otherwise } \\
        \end{cases}
    \]

    \noindent We claim that $B(s,k) \leq k \log \frac{s}{2} - k - 2 \log s + \frac{3}{2}s + 1$.
    This can be easily verified using induction. Applying $s=4k$ gives us the upper bound
    on number of 2-comparators used, which is $k \log k + 6k - 2\log k - 3$.

    Elementary calculation gives the desired result.
  \end{proof}

  Let $V_{2OE}=3V(oe\_merge^{2k}_k)$, $C_{2OE}=3C(oe\_merge^{2k}_k)$,
  $V_{4OE}=V(4oe\_merge^{4k}_k)$ and $C_{4OE}=C(4oe\_merge^{4k}_k)$.
  
  \begin{corollary}\label{crly:oe}
    Let $k \in \nat$. If $k \geq 16$, then $V_{4OE} \leq V_{2OE}$. If $k \geq 64$, then $C_{4OE} \leq C_{2OE}$
  \end{corollary}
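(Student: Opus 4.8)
The plan is to reduce both inequalities to elementary comparisons in $k$ and $\log k$ by plugging in the formulas already established. Lemma \ref{lma:2oe_vars_cls} gives the \emph{exact} values $V_{2OE} = 6k\log k + 6$ and $C_{2OE} = 9k\log k + 9$, whereas Lemma \ref{lma:4oe_vars_cls} supplies only \emph{upper bounds} on $V_{4OE}$ and $C_{4OE}$. This is convenient: it suffices to show that the upper bound on the $4$-column quantity does not exceed the exact $2$-column value, since any such inequality then holds a fortiori for the true $V_{4OE}$ and $C_{4OE}$. Throughout I use that in this section $k$ ranges over powers of $4$, so $\log_2 k$ is a nonnegative even integer.

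For the variable count I would substitute the bound $V_{4OE} \le 2k\log k + 16k - 4\log k - 6$ and the value $V_{2OE} = 6k\log k + 6$, so that $V_{4OE} \le V_{2OE}$ is implied by
\[
2k\log k + 16k - 4\log k - 6 \;\le\; 6k\log k + 6.
\]
Collecting terms and dividing by $4$, this is equivalent to $k\log k - 4k + \log k + 3 \ge 0$. The key observation is that the hypothesis $k \ge 16$ forces $\log_2 k \ge 4$, hence $k\log k \ge 4k$; the first two terms therefore cancel to something nonnegative, and the remaining $\log k + 3$ is strictly positive, closing the case without any further estimate.

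The clause count is handled identically, but with a tighter threshold that I would highlight as the only subtle point. Substituting $C_{4OE} \le 3k\log k + 33k - 6\log k - 9$ against $C_{2OE} = 9k\log k + 9$ and dividing by $6$ reduces $C_{4OE} \le C_{2OE}$ to $k\log k - \tfrac{11}{2}k + \log k + 3 \ge 0$. Now the cancellation requires $\log_2 k \ge \tfrac{11}{2}$, and since $\log_2 k$ is an even integer this means $\log_2 k \ge 6$, i.e.\ $k \ge 64$; indeed at $k = 16$ the left side equals $64 - 88 + 7 = -17 < 0$, which is precisely why the statement cannot claim $k \ge 16$ here. For $k \ge 64$ we have $\log_2 k \ge 6$, so $k\log k \ge 6k = \tfrac{11}{2}k + \tfrac12 k$, giving $k\log k - \tfrac{11}{2}k + \log k + 3 \ge \tfrac12 k + \log k + 3 > 0$. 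Thus no monotonicity argument is even needed: the bound $\log_2 k \ge 6$ already makes the $k\log k$ term dominate the linear $-\tfrac{11}{2}k$ uniformly for all admissible $k$. The verification at the boundary power of $4$ (namely $k = 64$, where the inequality reads $384 \ge 343$) together with this dominance completes both parts of the corollary; the main thing to get right is keeping the inequality directions consistent when passing from the upper bounds of Lemma \ref{lma:4oe_vars_cls} to the final comparison.
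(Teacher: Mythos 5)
Your proof is correct and follows the same route the paper intends: the corollary is a direct consequence of Lemmas \ref{lma:2oe_vars_cls} and \ref{lma:4oe_vars_cls}, comparing the upper bounds on $V_{4OE}$, $C_{4OE}$ with the exact values of $V_{2OE}$, $C_{2OE}$ and using $\log_2 k \ge 4$ (resp.\ $\log_2 k \ge 6$) to make the $k\log k$ term dominate. The algebra checks out, so nothing further is needed.
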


  The corollary above does not cover all values of $k$. For the sake of the next theorem we set $k=4$ and
  compare values $V(4oe\_sel^n_4)$ and $V(2oe\_sel^n_4)$ in the following lemma. We do similar calculations
  for $C(4oe\_sel^n_k)$ and $C(2oe\_sel^n_k)$, for $k \in \{4,16\}$.
  
  \begin{lemma}\label{lma:k=4}
    Let $n \in \nat$ such that $n \geq 16$ and $n$ is a power of 4.
    Then $V(4oe\_sel^n_4) \leq V(2oe\_sel^n_4)$.
  \end{lemma}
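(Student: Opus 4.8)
The plan is to solve, for the fixed value $k=4$, the two linear recurrences that describe how the variable counts of the two selection networks grow with $n$, and then to compare the resulting closed forms. First I would record the count for the two-column network: specializing Lemma \ref{lma:2oe_comp} to $k=4$ (so $\log k = 2$) gives $\frac{5}{2}n - 8$ two-comparators, and since each such comparator contributes two variables, $V(2oe\_sel^n_4) = 5n - 16$.

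Next I would set up the recurrence for the four-column network. Since $n$ is a power of $4$ with $n \geq 16$, Network \ref{net:oe_sel} splits the input into four columns of size $n/4 \geq 4$ and selects $\min(4, n/4) = 4$ elements from each, so the recursion reads $V(4oe\_sel^n_4) = 4\,V(4oe\_sel^{n/4}_4) + V(4oe\_merge^{16}_4)$, bottoming out at $n=4$. For the base case I would observe that $4oe\_sel^4_4$ reduces to four singleton selections followed by a single $\sort^4$, so $V(4oe\_sel^4_4) = 4$.

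The crucial step is to pin down $V(4oe\_merge^{16}_4)$ \emph{exactly} rather than through the generic bound of Lemma \ref{lma:4oe_vars_cls}. From the analysis underlying that lemma the merger uses exactly $s/4 = 4$ four-comparators, while the number of two-comparators is $B(16,4)$, which I would evaluate directly from its defining recurrence: $B(16,4) = B(8,4) + B(8,2) + 5 = 5 + 3 + 5 = 13$. Hence $V(4oe\_merge^{16}_4) = 4\cdot 4 + 13\cdot 2 = 42$. Solving $V(4oe\_sel^n_4) = 4\,V(4oe\_sel^{n/4}_4) + 42$ with $V(4oe\_sel^4_4) = 4$ then yields the closed form $V(4oe\_sel^n_4) = \frac{9}{2}n - 14$.

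Finally I would compare the two expressions: $\frac{9}{2}n - 14 \leq 5n - 16$ is equivalent to $n \geq 4$, which holds throughout the range $n \geq 16$, completing the proof. The main obstacle is precisely this merger count. The asymptotic bound from Lemma \ref{lma:4oe_vars_cls} evaluates to $66$ at $s=16$, which would inflate the solution of the recurrence to $\frac{13}{2}n - 22$ and hence exceed $5n-16$ for large $n$; the desired inequality survives only because the recurrences for the comparator counts terminate after just a couple of levels at $s=16$, so the exact value $42$, well below the asymptotic estimate, must be substituted. I therefore expect the work to lie in justifying the exact evaluation of $B(16,4)$ and the base case $V(4oe\_sel^4_4)=4$, the remaining recurrence solving and comparison being routine.
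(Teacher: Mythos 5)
Your proposal is correct and follows essentially the same route as the paper: specialize Lemma \ref{lma:2oe_comp} to get $V(2oe\_sel^n_4)=5n-16$, compute the exact merger cost $V(4oe\_merge^{16}_4)=42$ by unwinding the comparator-count recurrences at $s=16$ (the paper packages this as a variable-counting recurrence $B^*$ with $B^*(16,4)=42$, which is the same computation as your $4\cdot 4+13\cdot 2$), solve the linear recurrence to obtain $\frac{9}{2}n-14$, and compare. The only cosmetic difference is that you place the base case at $n=4$ with value $4$ while the paper bases the recurrence at $n=16$ with value $58$; these agree, and your closing remark about why the generic bound of Lemma \ref{lma:4oe_vars_cls} (which gives $66$ at $s=16$) would not suffice is accurate.
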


  \begin{proof}
    Applying $k=4$ to the formula in Lemma \ref{lma:2oe_comp} gives $\frac{5}{2}n-8$
    2-comparators and therefore $V(2oe\_sel^{n}_4)=5n - 16$.

    On the other hand, we calculate the value $B(s,k)$ from Lemma \ref{lma:4oe_vars_cls} more accurately
    when $k=4$ to get accurate result for $V(4oe\_merge^{s}_4)$ and, in consequence, for $V(4oe\_sel^n_4)$.
    First we rewrite the formula so it represents the number of variables used in $4oe\_merge^n_k$:

    \[
      B^*(s,k) =
        \begin{cases}
          4                                                           & \quad \text{if } s \leq 4 \\
          B^*(s/2,\floor{k/2}+2)+B^*(s/2,\floor{k/2})+4\floor{k/2}+2  & \quad \text{otherwise } \\
        \end{cases}
    \]

    \noindent The change is straightforward. Since $s=4k$, we can calculate by hand that $B^*(16,4)=42$.
    The sought value is given by:

    \[
      V(4oe\_sel^n_4) =
        \begin{cases}
          V(4oe\_merge^{16}_4) + 16                   & \quad \text{if } n = 16 \\
          4V(4oe\_sel^{n/4}_4) + V(4oe\_merge^{16}_4)  & \quad \text{otherwise } \\
        \end{cases}
    \]

    \noindent In the first case we use one merger of order $(16,4)$ and 4 4-comparators.
    The second case is straightforward. Since $V(4oe\_merge^{16}_4)=B^*(16,4)$, so the recursive formula
    above can be reduced into:

    \[
    f(n) =
    \begin{cases}
          58                   & \quad \text{if } n = 16 \\
          4f(n/4) + 42         & \quad \text{otherwise } \\
        \end{cases}
    \]

    \noindent for which the solution is $f(n)=\frac{9}{2}n-14$. Obviously
    $\frac{9}{2}n-14 \leq  5n - 16$, for $n \geq 4$, so we are done.
  \end{proof}

  \begin{theorem}
    Let $n,k \in \nat$ such that $1\leq k \leq n/4$ and $n$ and $k$ are both powers of 4. Then
    $V(4oe\_sel^n_k) \leq V(2oe\_sel^n_k)$.
  \end{theorem}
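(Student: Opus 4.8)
The plan is to follow the blueprint of the pairwise theorem proved above, arguing by induction on the pair $(n,k)$ in lexicographic order. The driving identity is obtained by unfolding each selection recurrence by a single level, so that the two-column and four-column constructions can be matched term by term. From Eq.~\eqref{eq:oe}, expanding $2oe\_sel$ one extra level yields $V(2oe\_sel^n_k) = 4V(2oe\_sel^{n/4}_k) + 3V(oe\_merge^{2k}_k) = 4V(2oe\_sel^{n/4}_k) + V_{2OE}$, while the construction in Network~\ref{net:oe_sel} gives directly $V(4oe\_sel^n_k) = 4V(4oe\_sel^{n/4}_k) + V(4oe\_merge^{4k}_k) = 4V(4oe\_sel^{n/4}_k) + V_{4OE}$. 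Here I use that $k \le n/4 = n_i$, so each column sets $k_i = \min(k,n_i) = k$; this is the essential simplification over the pairwise case, because all four recursive calls now carry the identical parameter pair $(n/4,k)$ and no auxiliary comparison between subcalls of differing orders (the analogue of $V(pw\_sel^{n/4}_{k/3}) \le V(pw\_sel^{n/4}_{k/2})$) is needed.

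With the two recurrences lined up, the inductive step is immediate for $k \ge 16$. Assuming the inequality for every $(n',k') \prec (n,k)$, I would apply the induction hypothesis to each of the four identical subcalls, giving $V(4oe\_sel^{n/4}_k) \le V(2oe\_sel^{n/4}_k)$, and invoke Corollary~\ref{crly:oe} for the merging term, which is valid precisely because $k \ge 16$ and yields $V_{4OE} \le V_{2OE}$. Summing the four subcall bounds and the merger bound gives $V(4oe\_sel^n_k) \le 4V(2oe\_sel^{n/4}_k) + V_{2OE} = V(2oe\_sel^n_k)$, completing the step.

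The remaining work, and the real content, lives in the base cases, which I would dispatch as follows: for $k=1$ both networks are literally $max^n$, so the bound holds with equality; and $k=4$ is settled wholesale by Lemma~\ref{lma:k=4}, where $V(4oe\_sel^n_4)$ is solved in closed form as $\tfrac{9}{2}n-14$ and compared against $V(2oe\_sel^n_4)=5n-16$ read off from Lemma~\ref{lma:2oe_comp}.

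The main obstacle is exactly why these base cases cannot be absorbed into the generic step: Corollary~\ref{crly:oe} only certifies $V_{4OE} \le V_{2OE}$ once $k \ge 16$, so the chaining argument breaks for $k=4$, forcing the separate computational treatment. A secondary subtlety I would check is the boundary $n=4k$, where the four recursive calls degenerate into full sorts $4oe\_sel^k_k$ that fall outside the hypothesis $k \le n/4$; this has to serve as the base of the recursion in $n$ (just as $n=16$ does in Lemma~\ref{lma:k=4}) and be verified directly, after which the induction on $n$ lifts the inequality to all admissible $n$.
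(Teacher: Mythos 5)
Your proposal follows essentially the same route as the paper: induction on $(n,k)$, unfolding the two-column recurrence one level to match the four identical subcalls of the four-column network, invoking Corollary~\ref{crly:oe} for the merger when $k \ge 16$, and dispatching $k=1$ via $max^n$ and $k=4$ via Lemma~\ref{lma:k=4}. Your closing remark about the boundary $n=4k$ (where the subcalls $4oe\_sel^{k}_{k}$ fall outside the stated hypothesis $k \le n/4$) is a legitimate subtlety that the paper's own proof silently glosses over, so flagging it for direct verification is a small improvement rather than a deviation.
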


  \begin{proof}
    Proof is by induction. For the base case, consider $1=k<n$. It follows that
    $V(2oe\_sel^n_k)=V(4oe\_sel^n_k)=V(max^n)$. The case $n>k=4$ is handled by Lemma \ref{lma:k=4}.
    For the induction step assume that for each
    $(n',k') \prec (n,k)$ (in lexicographical order), where $k \geq 16$, the inequality holds, we get:

    \[
    \begin{tabular}{l r}
      \multicolumn{2}{l}{$V(4oe\_sel^n_k)=4V(4oe\_sel^{n/4}_{k}) + V(4oe\_merge^{4k}_k)$} \\
      & \small{\bf (by the construction of $4oe\_sel$)} \\
      \multicolumn{2}{l}{$\leq 4V(2oe\_sel^{n/4}_{k}) + V(4oe\_merge^{4k}_k)$} \\
      & \small{\bf (by the induction hypothesis)} \\
      \multicolumn{2}{l}{$\leq 4V(2oe\_sel^{n/4}_{k}) + 3V(2oe\_merge^{2k}_k)$} \\
      & \small{\bf (by Corollary \ref{crly:oe})} \\
      \multicolumn{2}{l}{$= V(2oe\_sel^n_k)$} \\
      & \small{\bf (by the construction of $2oe\_sel$)} \\
    \end{tabular}
    \]
  \end{proof}
  
\section{Experimental Evaluation} \label{sec:exp}
  
As it was observed in \cite{abio}, having a smaller encoding in terms of number of
variables or clauses is not always beneficial in practice, as it should also be
accompanied with a reduction of SAT-solver runtime. In this section we assess how our
encodings based on the new family of selection networks affect the performance of a
SAT-solver.

Our algorithms that encode CNF instances with cardinality constraints into CNFs were
implemented as an extension of \textsc{MiniCard} ver. 1.1, created by Mark Liffiton and
Jordyn Maglalang\footnote{See https://github.com/liffiton/minicard}. \textsc{MiniCard}
uses three types of solvers:

\begin{itemize}
  \item {\em minicard} - the core \textsc{MiniCard} solver with native AtMost constraints,
  \item {\em minicard\_encodings} - a cardinality solver using CNF encodings for AtMost constraints,
  \item {\em minicard\_simp\_encodings} - the above solver with simplification / preprocessing.
\end{itemize}

The main program in {\em minicard\_encodings} has an option to generate a CNF formula, given a CNFP
instance (CNF with the set of cardinality constraints)
and to select a type of encoding applied to cardinality constraints. Program run with this option outputs a CNF
instance that consists of collection of the original clauses with the conjunction of CNFs generated by given method
for each cardinality constraint. No additional preprocessing and/or simplifications are made.
Authors of {\em minicard\_encodings} have implemented six methods to encode cardinality constraints and arranged them in one library called {\em Encodings.h}. 
Our modification of \textsc{MiniCard} is that we added implementations of encodings presented in this paper and put them in
the library {\em Encodings\_MW.h}. Then, for each CNFP instance and each encoding method, we used \textsc{MiniCard} to generate CNF instances.
After preparing, the benchmarks were run on a different SAT-solvers. Our extension of \textsc{MiniCard}, which we call \textsc{KP-MiniCard},
is available online\footnote{See https://github.com/karpiu/kp-minicard}.

We use two SAT-solvers in this evaluation. The first one is the well known \textsc{MiniSat} (ver. 2.2.0)
created by Niklas E\'en and Niklas S\"orensson\footnote{See https://github.com/niklasso/minisat}.
The second SAT-solver is the \textsc{GlueMiniSat} (ver. 2.2.8), by Hidetomo Nabeshima, Koji Iwanuma and
Katsumi Inoue\footnote{See http://glueminisat.nabelab.org/}.
All experiments were carried out on the machines with Intel(R) Core(TM) i7-2600 CPU @ 3.40GHz.

Detailed results are available online\footnote{See http://www.ii.uni.wroc.pl/\%7ekarp/sat/2017.html }.
We publish spreadsheets showing running time for each instance,
speed-up/slow-down tables for our encodings, number of time-outs met and total running time.

\vspace{0.3cm}

\noindent {\bf Encodings:} From this paper we choose two multi-column selection networks
for evaluation:

\begin{itemize}
  \item {\bf 4WISE} - 4-column selection network based on pairwise approach (Networks \ref{net:mw_sel} and \ref{net:4mw_merge}),
  \item {\bf 4OE} - 4-column odd-even selection network based on Networks \ref{net:oe_sel} and \ref{net:4oe_merge}.
\end{itemize}

We compare our encodings to some others found in the literature. We consider
the Pairwise Cardinality Networks \cite{card2} to rival our 4-column
pairwise selection network. We also consider a solver called \textsc{MiniSat+}\footnote{See https://github.com/niklasso/minisatp}
which implements techniques to encode Pseudo-Boolean constraints to propositional formulas \cite{card3}.
Since cardinality constraints are a subclass of Pseudo-Boolean constraints, we can
measure how well the encodings used in \textsc{MiniSat+} perform, compared with our methods.
The solver chooses between three techniques to generate SAT encodings for Pseudo-Boolean constraints.
These convert the constraint to: a BDD structure, a network of binary adders, a network of sorters.
The network of adders is the most concise encoding, but it has poor propagation properties and often
leads to longer computations than the BDD based encoding. The network of sorters is the implementation
of classic odd-even (2-column) sorting network by Batcher \cite{batcher}. Calling the solver we can
choose the encoding with one of the parameters: {\em -ca}, {\em -cb}, {\em -cs}.
By default, \textsc{MiniSat+} uses the so called {\bf Mixed} strategy, where program
chooses which method ({\em -ca}, {\em -cb} or {\em -cs}) to use in the encodings, and this depends
on size of the input and memory restrictions. We don't include {\bf Mixed} strategy in the results,
as the evaluation showed that it performs almost the same as {\em -cb} option.
The generated CNFs were written to files with the option {\em -cnf=$<$file$>$}.

To sum up, here are the competitors' encodings used in this evaluation:

\begin{itemize}
  \item {\bf PCN} - the Pairwise Cardinality Networks (our implementation)
  \item {\bf CA} - encodings based on Binary Adders (from \textsc{MiniSat+})
  \item {\bf CB} - encodings based on Binary Decision Diagrams (from \textsc{MiniSat+})
  \item {\bf CS} - the Odd-Even Sorting Networks (from \textsc{MiniSat+})
\end{itemize}

Additionally, encodings {\bf 4WISE}, {\bf 4OE} and {\bf PCN} were extended,
following the idea presented in \cite{abio}, where authors use
Direct Cardinality Networks in their encodings for sufficiently
small values of $n$ and $k$. Values of $n$ and $k$ for which we substitute the recursive calls
with Direct Cardinality Network were selected based on our knowledge and some experiments.

Solver \textsc{MiniSat+} have been slightly modified, namely, we fixed several bugs such as
the one reported in the experiments section of \cite{aavani}.

\vspace{0.3cm}

\noindent {\bf Benchmarks:} We used the following sets of benchmarks:

\noindent {\bf 1. MSU4 suite:} A set consisting of about 14000 benchmarks, each of which
contains a mix of CNF formula and multiple cardinality constraints. This suite was
created from a set of MaxSAT instances reduced from real-life problems, and then it was 
converted by the implementation of {\em msu4} algorithm \cite{msu4}. This algorithm reduces
a MaxSAT problem to a series of SAT problems with cardinality constraints. The MaxSAT
instances were taken from the Partial Max-SAT division of the Third Max-SAT
evaluation\footnote{See {\texttt http://www.maxsat.udl.cat/08/index.php?disp=submitted-benchmarks}}.
We perform experiments for this benchmarks using \textsc{MiniSat}.

\noindent {\bf 2. PB15 suite:} A set of benchmarks from the
Pseudo-Boolean Evaluation 2015\footnote{See {\texttt http://pbeva.computational-logic.org/}}.
One of the categories of the competition was {\em DEC-LIN-32-CARD}, which contains 2289 instances
-- we use these in our evaluation. Every instance is a set of cardinality constraints. This set
of benchmarks were evaluated using \textsc{GlueMiniSat}. 

\renewcommand{\arraystretch}{1.1}
\begin{table}[t]\setlength{\tabcolsep}{4pt}
    \centering
    \begin{tabular}{ c | c | c | c | c | c | c || c | c | c | c | c | c | } \cline{2-13}
         & \multicolumn{6}{|c||}{4OE speed-up} & \multicolumn{6}{c|}{4OE slow-down} \\ 
         \cline{2-13}
                                  & TO  & 4.0  & 2.0 & 1.5 & 1.1 & Total & TO & 4.0 & 2.0 & 1.5 & 1.1 & Total \\ \hline
      \multicolumn{1}{|l|}{PCN}   & 54  & 73   & 1   & 6   & 9   & 143   & 1  & 6   & 0   & 2   & 4   & 13 \\ \hline
      \multicolumn{1}{|l|}{4WISE} & 26  & 48   & 11  & 5   & 4   & 94    & 2  & 7   & 4   & 3   & 4   & 20 \\ \hline
      \multicolumn{1}{|l|}{CA}    & 840 & 1114 & 0   & 6   & 4   & 1964  & 2  & 3   & 1   & 1   & 3   & 10 \\ \hline
      \multicolumn{1}{|l|}{CB}    & 100 & 187  & 0   & 7   & 3   & 297   & 3  & 6   & 1   & 4   & 2   & 16 \\ \hline
      \multicolumn{1}{|l|}{CS}    & 53  & 73   & 0   & 2   & 4   & 132   & 2  & 17  & 0   & 2   & 0   & 21 \\ \hline
      \multicolumn{13}{c}{ } \\
      \cline{2-13}
         & \multicolumn{6}{|c||}{4WISE speed-up} & \multicolumn{6}{c|}{4WISE slow-down} \\ 
         \cline{2-13}
                                & TO  & 4.0  & 2.0 & 1.5 & 1.1 & Total & TO  & 4.0 & 2.0 & 1.5 & 1.1 & Total \\ \hline
      \multicolumn{1}{|l|}{PCN} & 30  & 35   & 16  & 8   & 12  & 101   & 1   & 7   & 6   & 2   & 6   & 22 \\ \hline
      \multicolumn{1}{|l|}{4OE} & 2   & 7    & 4   & 3   & 4   & 20    & 26  & 48  & 11  & 5   & 4   & 94 \\ \hline
      \multicolumn{1}{|l|}{CA}  & 816 & 1083 & 13  & 7   & 8   & 1927  & 2   & 3   & 4   & 1   & 3   & 13 \\ \hline
      \multicolumn{1}{|l|}{CB}  & 86  & 154  & 23  & 11  & 6   & 280   & 13  & 18  & 5   & 2   & 4   & 42 \\ \hline
      \multicolumn{1}{|l|}{CS}  & 35  & 50   & 6   & 2   & 5   & 98    & 8   & 34  & 5   & 3   & 3   & 53 \\ \hline
      \multicolumn{13}{c}{ }
    \end{tabular}
    \caption{Comparison of encodings in terms of SAT-solver runtime on MSU suite. We count number of 
    benchmarks for which {\bf 4OE} and {\bf 4WISE} showed speed-up or slow-down factor with respect to 
    different encodings.}
    \label{tbl:res}
\end{table}

\begin{table}[t]\setlength{\tabcolsep}{4pt}
    \centering
    \begin{tabular}{ c | c | c | c | c | c | c || c | c | c | c | c | c | } \cline{2-13}
         & \multicolumn{6}{|c||}{4OE speed-up} & \multicolumn{6}{c|}{4OE slow-down} \\ 
         \cline{2-13}
                                  & TO & 4.0 & 2.0 & 1.5 & 1.1 & Total & TO & 4.0 & 2.0 & 1.5 & 1.1 & Total \\ \hline
      \multicolumn{1}{|l|}{PCN}   & 4  & 4   & 1   & 4   & 11  & 24    & 0  & 0   & 2   & 4   & 8   & 14 \\ \hline
      \multicolumn{1}{|l|}{4WISE} & 2  & 0   & 2   & 4   & 9   & 17    & 0  & 0   & 3   & 3   & 10  & 16 \\ \hline
      \multicolumn{1}{|l|}{CA}    & 10 & 13  & 32  & 13  & 20  & 88    & 2  & 2   & 6   & 5   & 15  & 30 \\ \hline
      \multicolumn{1}{|l|}{CB}    & 8  & 5   & 15  & 7   & 18  & 53    & 6  & 0   & 5   & 14  & 7   & 32 \\ \hline
      \multicolumn{1}{|l|}{CS}    & 15 & 10  & 17  & 14  & 19  & 75    & 6  & 1   & 5   & 8   & 17  & 37 \\ \hline
      \multicolumn{13}{c}{ } \\
    \end{tabular}
    \caption{Comparison of encodings in terms of SAT-solver runtime on PB15 suite. We count number of 
    benchmarks for which {\bf 4OE} showed speed-up or slow-down factor with respect to 
    different encodings.}
    \label{tbl:res-pb}
\end{table}

\begin{table}[t]\setlength{\tabcolsep}{4pt}
  \centering
  \begin{tabular}{ c | c | c | c | c | c | c | c | } \cline{3-8}
    \multicolumn{2}{c|}{}                                         & 4OE     & 4WISE   & CS      & PCN     & CB      & CA      \\ \hline
    \multicolumn{1}{|c|}{\multirow{2}{*}{\bf MSU4}}  & Total time & 25319   & 44978   & 52284   & 62695   & 102894  & 695003  \\ \cline{2-8}
    \multicolumn{1}{|c|}{}                           & Time-outs  & 16      & 40      & 67      & 69      & 113     & 854     \\ \hline
    \multicolumn{1}{|c|}{\multirow{2}{*}{\bf PB15}}  & Total time & 2291394 & 2292065 & 2311042 & 2294726 & 2301762 & 2312199 \\ \cline{2-8}
    \multicolumn{1}{|c|}{}                           & Time-outs  & 1240    & 1242    & 1249    & 1244    & 1242    & 1248    \\ \hline
    \multicolumn{8}{c}{ }
  \end{tabular}
  \caption{Total running time (in seconds) of all instances and total number of time-outs.}
  \label{tbl:total}
\end{table}

\vspace{0.3cm}

\noindent {\bf Results:}  The time-out limit in
the SAT-solvers was set to 600 seconds in case of MSU4 suite, and 1800 seconds for PB15 suite. When comparing two encodings we only considered instances
for which at least one achieved the SAT-solver runtime of at least 10\% of the time-out limit. All other instances
were considered trivial, and therefore were not included in the results. We also filtered out the instances for which
relative percentage deviation of the running time of encoding {\bf A} w.r.t the running time of encoding {\bf B}
was less than 10\% (and vice-versa), i.e., $|(t(A,I)-t(B,I))|/\min(t(A,I), t(B,I)) \leq 0.1$,
where $t(E,I)$ is the time in which the SAT-solver terminated given a CNF generated from the instance $I$ using encoding $E$.

Tables \ref{tbl:res} and \ref{tbl:res-pb} present speed-up and slow-down factors for
encodings {\bf 4OE} and {\bf 4WISE} w.r.t all other encodings.
For PB15 suite we only show speed-up/slow-down of {\bf 4OE},
since from the first batch of tests (on MSU4 suite) we can conclude that it is the superior encoding.
Explanation of how to read the tables now follows.
Consider the comparison of encoding {\bf A} w.r.t encoding {\bf B}.
In the speed-up section, {\bf TO} indicates the number of instances run with encoding {\bf B} that
ended as time-outs but were solved while run with encoding {\bf A}.
The {\bf TO} value in the slow-down section means that the opposite is true.
Column headers with numbers indicate speed-up (slow-down) factors
for the encoding {\bf A}, for example, instances counted in column $1.5$ are those
for which $2.0 \cdot t(A,I) \geq t(B,I) > 1.5 \cdot t(A,I)$ for speed-up,
and $2.0 \cdot t(A,I) \leq t(B,I) < 1.5 \cdot t(A,I)$ for slow-down.
In case of factor $4.0$ the other coefficient is $\infty$.

From the evaluation we can conclude that the best performing encoding is {\bf 4OE},
on both MSU4 and PB15 suites. On MSU4 suite (Table \ref{tbl:res}), the superiority is apparent, as our encoding
achieve significant speed-up factor w.r.t all other encodings. The runner-up encoding is {\bf 4WISE}, which performs
worse than {\bf 4OE}, but still beats all other encodings. This is also reflected in the total running times and number of time-outs
met over all instances -- those values are presented in Table \ref{tbl:total}. We see that on MSU4 suite,
SAT-solver finished twice as fast using encoding {\bf 4OE} rather than {\bf CS},
and solved almost all instances in given time. Similar observation can be made for {\bf 4WISE} w.r.t {\bf PCN}, where
our encoding has about 18 000 seconds (5 hours) smaller total time and 75\% less time-outs. This shows
that using 4-column selection networks is more desirable than using 2-column selection/sorting
networks for encoding cardinality constraints. Encodings {\bf CA} and {\bf CB} had the worst performance on MSU4 suite.

As for the results on PB15 suite, we see that the total running times and time-outs are similar (Table \ref{tbl:total}) for
all encodings. This is due to the fact that PB15 suite contains a lot of hard instances and a lot of trivial instances, so the true
competition boils down to a few medium-sized instances. Still, encoding {\bf 4OE} is the clear winner,
having finished computations about 20 000 seconds (5.5 hours) faster than {\bf CS}. Encoding {\bf 4WISE} performs similar
to {\bf 4OE}, with only about 10 minutes gap in the total running time, but it is the latter that solved the most number of instances.
Encoding {\bf 4OE} also outperforms all other encodings in terms of speed-up, which is presented in Table \ref{tbl:res-pb}.

In Figure \ref{fig:plots} we present two cactus plots, where x-axis gives the number of solved instances
and the y-axis the time needed to solve them (in seconds) using given encoding.
Results for MSU4 suite are in Figure \ref{fig:plot_cactus_1} and for PB15 suite in Figure \ref{fig:plot_cactus_2}.
Notice that we did not include the {\bf CA} encoding in the first plot, as it was not competitive to the other encodings.
From the plots we can see again that the {\bf 4OE} encoding outperforms all other encodings.

\begin{figure}
  \centering
  \subfloat[MSU4\label{fig:plot_cactus_1}]{%
    \includegraphics[width=0.48\textwidth]{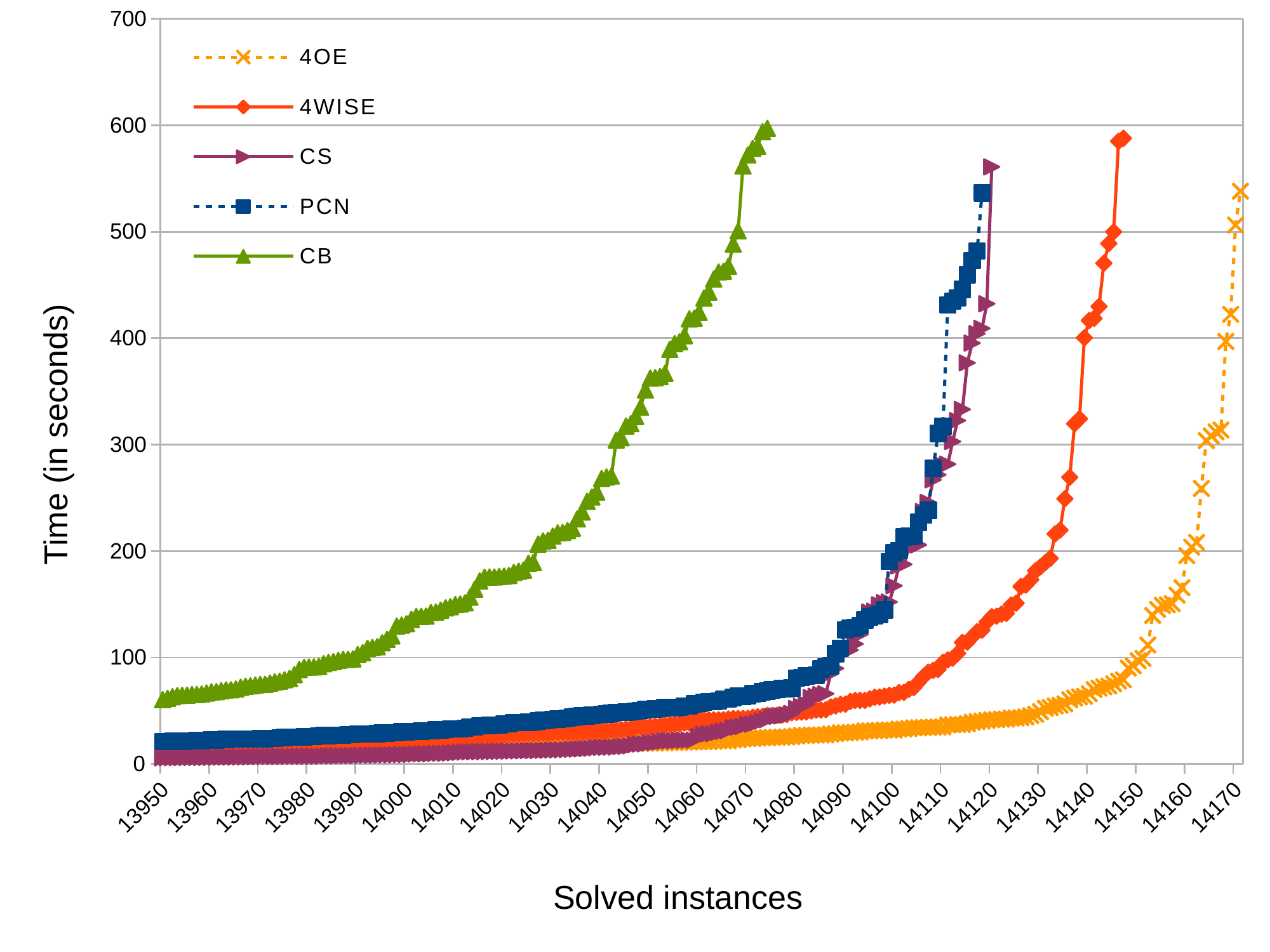}
  }
  ~
  \subfloat[PB15\label{fig:plot_cactus_2}]{%
    \includegraphics[width=0.48\textwidth]{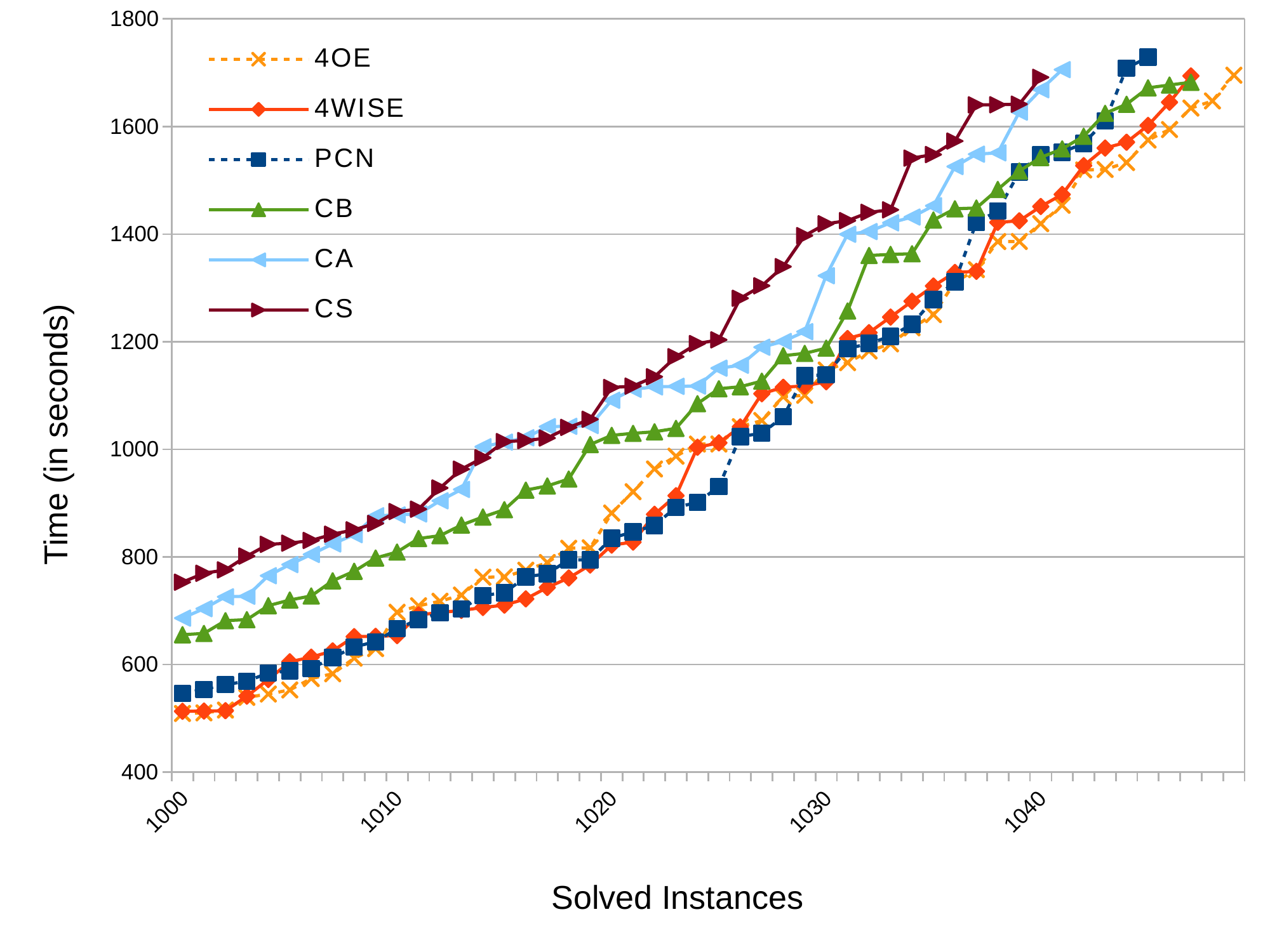}
  }
  \caption{Number of solved instances of MSU4 suite and PB15 suite in given time.}
  \label{fig:plots}
\end{figure}

\section{Conclusions} \label{sec:concl}

In this paper we presented a family of multicolumn selection networks based on odd-even and pairwise approach,
that can be used to encode cardinality constraints. We showed detailed
constructions where the number of columns is equal to 4 and showed that their CNF
encodings are smaller than their 2-column counterparts. We extended the encodings by
applying the Direct Cardinality Networks of \cite{abio} for sufficiently small input. The
new encodings were compared with the selected state-of-art encodings based
on comparator networks, adders and binary decision diagrams.
The experimental evaluation shows that the new encodings yield
significant speed-ups in the SAT-solver performance.

Developing new methods to encode cardinality constraints based on comparator networks is important
from the practical point of view. Using such encodings give an extra edge in solving optimization problems
for which we need to solve a series of problems that differ only in that a bound on cardinality constraint
$x_1 + \dots + x_n \leq k$ becomes tighter, i.e., by decreasing $k$ to $k'$. In this setting we only
need to add one more clause $\neg y_{k'}$, and the computation can be resumed while keeping all the previous
clauses untouched. This operation is allowed because if a comparator network is a $k$-selection network,
then it is also a $k'$-selection network, for any $k' < k$. This property is called {\em incremental strengthening}
and most state-of-art SAT-solvers provide an interface for doing this.

\end{document}